\newcommand{\set}[1]{\left\{#1\right\}}
\newcommand{\parentheses}[1]{ \left({#1}\right)}
\newcommand{\cdotpa}[1]{\cdot\parentheses{#1}}
\newcommand{\rap}{Randomized Admission Policy}
\newcommand{\RAP}{RAP}
\newcommand{\pss}{\rap}
\newcommand{\PSS}{\RAP}
\title{\pss{} for Efficient Top-k and Frequency Estimation}
\author{\IEEEauthorblockN{Ran Ben Basat}
    \IEEEauthorblockA{Computer Science\\
        Technion\\
        sran@cs.technion.ac.il
    }
    \and
    \IEEEauthorblockN{Gil Einziger}
	\IEEEauthorblockA{Electrical Engineering\\
		Politechnico di Torino\\
		gilga@polito.it
	}
	\and
	\IEEEauthorblockN{Roy Friedman}
    \IEEEauthorblockA{Computer Science\\
        Technion\\
        roy@cs.technion.ac.il
    }
    \and
	\IEEEauthorblockN{Yaron Kassner}
    \IEEEauthorblockA{Computer Science\\
        Technion\\
        kassnery@cs.technion.ac.il
    }
}
\date{}
\newtheorem{theorem}{Theorem}[section]
\newtheorem*{observation*}{Observation}
\newenvironment{definition}[1][Definition]{\begin{trivlist}
		\item[\hskip \labelsep {\bfseries #1}]}{\end{trivlist}}
\begin{document}
\newcommand*{\EXTENDED}{}
\maketitle

\begin{abstract}
	Network management protocols often require timely and meaningful insight about per flow network traffic.
	%
	This paper introduces~\emph{\pss{} (\PSS{})} -- a novel algorithm for the \emph{frequency} and \emph{top-k} estimation problems, which are fundamental in network monitoring.
	We demonstrate space reductions compared to the alternatives by a factor of up to 32 on real packet traces and up to 128 on heavy-tailed workloads.
	For top-$k$ identification, \PSS{} exhibits memory savings by a factor of between 4 and 64 depending on the workloads' skewness.
	These empirical results are backed by formal analysis, indicating the asymptotic space improvement of our probabilistic admission approach.
	Additionally, we present \emph{$d$-Way \PSS{}}, a hardware friendly variant of \PSS{} that empirically maintains its space and accuracy benefits.
\end{abstract}

\newcommand{\matrixCellWidth}{5.8cm}

\section{Introduction}
\label{sec:intro}

\subsection{Background}
Network management and traffic engineering protocols rely on flow counters based network monitoring.
Examples include effective routing, load balancing, QoS enforcement, network caching, anomaly detection and intrusion detection~\cite{ApproximateFairness,IntrusionDetection,TrafficEngeneering,IntrusionDetection2,LoadBalancing,TinyLFU}.
Typically, monitoring utilities track millions of flows~\cite{CounterArray1,CounterArray2}, and the counter of a monitored flow is updated on the arrival of each of its packets.
Often, the most frequently appearing flows, known as \emph{heavy hitters}, are also the most interesting, since their impact on the above is the most crucial.

Maintaining such counters is a challenging task with today's storage technology.
The difficulty arises as DRAM is too slow to keep up with line rates, while the faster SRAM is expensive and thus too small for keeping an exact counter for each flow.
These limitation were tackled using various approaches.

\emph{Estimators} reduce the size of counters using probabilistic techniques~\cite{ICE-Buckets,CEDAR,DISCO}.
This enables maintaining one counter per flow in SRAM at the cost of reduced accuracy.
The downside of estimators is that they require an explicit flow to counter mapping for every flow.
This mapping often becomes the dominant factor in memory consumption~\cite{CounterBraids}.


The \emph{shared counters} approach, also known as \emph{sketches}, solves the mapping problem using hashing algorithms that implicitly assign flows to counters.
Well known examples include \emph{Multi Stage Filters}~\cite{CUSketch} and \emph{Count Min Sketch}~\cite{CountMinSketch}.
Yet, to reduce the impact of hash collisions on counters' reading accuracy, these methods must allocate considerably more space and more counters than predicted by lower bounds.


Databases and data analytics face similar problems, known in these domains as \emph{frequency estimation} and \emph{top-$k$ identification}, i.e., identifying who are the $k$ most frequent flows.
These domains typically favor \emph{counter based} solutions over sketches since the former are considered superior to sketches, both asymptotically and in practice~\cite{SpaceSavingIsTheBest, SpaceSavingIsTheBest2010}.
Counter based algorithms maintain a fixed size set of counters and aspire to allocate these counters only to the more frequent flows.
These include Lossy Counting~\cite{LC}, Frequent~\cite{BatchDecrement} and Space Saving~\cite{SpaceSavings}.
The latter is also considered state of the art~\cite{SpaceSavingIsTheBest, SpaceSavingIsTheBest2010, SpaceSavingIsTheBest2011}.
Alas, these algorithms cannot be easily ported into networking devices as they utilize complex data structures and dynamic memory allocation.

Another significant shortcoming of counter based solutions is that they update the state of allocated counters on the arrival of each packet belonging to a unmonitored flow, regardless of how frequent this flow is.
Doing so hurts their space to accuracy tradeoff to the point that they become ineffective on heavy-tailed workloads, which are common in network switches and routers.






\subsection{Contributions}
In this work, we promote the concept of using a randomized admission policy for allocating counters to non-monitored flows, and show that it can significantly improve accuracy.
Intuitively, such a policy ignores most of the tail flows and is still able to eventually admit the high frequency flows.

Specifically, this idea is realized in a novel counter based algorithm called \emph{\pss{}} (\PSS{}) as well as a hardware friendly variant called \emph{$d$-Way associative \PSS{}} ($d$W-\PSS{}).
\PSS{} is simpler to analyze, while $d$W-\PSS{} maps well into limited associativity cache designs and empirically maintains most of the benefits of \PSS{}.
We extensively evaluate \PSS{} and $d$W-\PSS{} over two real packet traces~\cite{CAIDA,UCLA}, a YouTube access trace~\cite{youtube} and synthetic Zipf distributions.

For the frequency estimation problem, \PSS{} and $d$W-\PSS{} achieve the same \emph{mean square error} (MSE) as the leading alternatives while using a fraction of the required memory.
For top-$k$ identification, \PSS{} and $d$W-\PSS{} exhibit significantly higher recall and precision, even when allocated with half the space given to the alternative methods.
In particular, when the distribution is only mildly skewed (or heavy-tailed), \PSS{} and $d$W-\PSS{} are the only techniques that successfully identify a high percentage of the top-$k$ flows.

\section{Related Work}
\label{ref:related}
The frequent items and top-$k$ identification problems appear in slight variations across multiple domains.
Algorithms for these problems are often categorized as either \emph{counter based} or \emph{sketch based}.
In addition, the specific challenges of network monitoring have spawned solutions that are especially tailored for the memory limitations in the networking case.

\subsection{Counter based algorithms}
Counter based algorithms are usually designed for software implementations and maintain a table of monitored items.
The differences between these algorithms lie in the question of admission and eviction of entries to and from the table.
From a networking perspective, counter based algorithms maintain an explicit flow to counter mapping for monitored items.
For a stream of $N$ events and an accuracy parameter $\varepsilon$, the goal is to approximate a given flow's frequency to within an additive error of $N\cdot \varepsilon$.
For this task, $\Omega(\frac{1}{\varepsilon})$ counters are required~\cite{SpaceSavings}, and this is achieved by some of the algorithms below.

\emph{Lossy Counting}~\cite{LC} increments an arriving item's counter on every arrival.
If the counter is not in the table, it is admitted with a counter value of 1.
Lossy Counting keeps the table size bounded by periodically decrementing table counters and evicting items whose counter reaches 0.
Unfortunately, Lossy Counting requires a maximal number of $\frac{1}{\varepsilon}\cdot\log(N)$ table entries.
\emph{Probabilistic Lossy Counting}~\cite{PLC} requires fewer table entries on average but only provides a probabilistic guarantee.

In \emph{Frequent (FR)}~\cite{frequent4,BatchDecrement}, whenever an item arrives and the table already contains $\frac{1}{\varepsilon}$ entries, the item is not admitted.
Instead, FR decrements every entry in the table, evicting entries whose counter reached 0.
The main benefit of FR is that it requires the optimal number of $O(\frac{1}{\varepsilon})$ table entries.

\emph{Space Saving (SS)~\cite{SpaceSavings}} requires the same number of entries as FR, but maintains additional information to improve accuracy.
Space Saving admits \emph{any} arriving item at the expense of evicting the minimum-frequency item.
Space Saving is considered to be state of the art~\cite{SpaceSavingIsTheBest, SpaceSavingIsTheBest2010, SpaceSavingIsTheBest2011}.

\subsection{Sketch based algorithms}
\emph{Sketches}, such as \emph{Multi Stage Filters}~\cite{MultiStageFilters}, \emph{Count Sketch}~\cite{CountSketch} and \emph{Count Min Sketch }~\cite{CountMinSketch}, are very common in networking domains as they are simple to implement in hardware and have low implementation overheads.
The most popular example, Count Min Sketch, provides the following guarantee --- given an item $x$, with probability of at least $1-\delta$, the estimation error of $x$ is at most $N\cdot \epsilon$.

Count Min Sketch does not require storing flow identifiers or maintaining a flow to counter association.
Instead, it maintains an array of $\ln(\frac{1}{\delta})$ rows, each with $\frac{e}{\varepsilon}$ counters.
When an item arrives, a hash function is calculated for each row and its corresponding counter is incremented.
To estimate the frequency of an item, the corresponding counters are read and the minimum counter value is returned as the estimation.

Asymptotically, Count Min Sketch requires a suboptimal number of counters.
However, it does not store flow ids and has only minor overheads for hardware implementation.
Despite being suboptimal, sketches still require a sub-linear number of counters, can completely reside in SRAM, and provide online frequency estimation.

On the contrary, \emph{Counter Braids~\cite{CounterBraids}} and \emph{Counter Tree}~\cite{counterTree} use an hierarchical sketch where overflowing counters are hashed to a higher level sketch.
They are able to encode items just like Count Min Sketch would, but the decoding process is complex and can only be performed offline, estimating all flow values together.

In \emph{Randomized Counter Sharing~\cite{RandomizedCounterSharing}}, every time an item is added, a random hash function is used and the corresponding counter is incremented.
The flow identifier is recorded, but without an explicit mapping to frequency.
When a measurement ends, we estimate the flow's frequency by summing all of the corresponding counters or by performing a maximum-likelihood estimation.
Both of these estimations are quite slow and cannot be performed online.

In summary, sketches are space suboptimal and only solve the frequency estimation problem.
Further, they only support point queries and their answers are only correct within a certain probability.
Despite these limitations, sketches are used for many networking applications~\cite{SpectralBloom,CountMinSketch,PLC,MultiStageFilters,ML-CBF,HeavyHitters,TinyTable}.

\subsection{Network monitoring architectures}
In \emph{hybrid SRAM/DRAM architectures}~\cite{CounterArray2,CounterArray1}, the LSB bits of counters are stored in SRAM and the MSB in DRAM. This way, the space allocated for each flow in SRAM is small. However, the SRAM counters have to periodically be synchronized with the DRAM counters, which increases the contention on the memory bus.
Further, estimating a flow's frequency requires accessing DRAM and therefore cannot be used for online network monitoring.

Brick~\cite{Brick} uses an efficient encoding in order to reduce the number of bits allocated per counter.
Brick enables storing more counters, under the assumption that the total value of increments is known in advance. Brick is most effective when there are many very small flows.

Estimators use fixed size small counters in order to represent large numbers. These methods trade precision for space and allow more counters to be contained in SRAM.
This idea was first introduced by \emph{Approximate Counting}~\cite{ApproximateCounting} and was adapted to networking devices~\cite{SAC,DISCO,CEDAR,ICE-Buckets}. The downside of estimators is that they require storing a flow-to-counter mapping for every flow, a requirement that has many overheads.  Sampling techniques are another alternative that trades accuracy for space. Unfortunately, these methods can only monitor large flows that are frequent enough to be sampled~\cite{Sample1,BetterNetflow}.

\section{\pss{} (\PSS{})}
\label{sec:\PSS{}}
\PSS{} maintains a table ($C$) which contains $M\triangleq\frac{1}{\epsilon}$ entries.
The intuition behind \PSS{} is to minimize the error inflicted upon arrival of a non-monitored item $x\notin C$ when the table is full.
That is, we identify inefficiencies in the way previous works behave in this case.
E.g., FR needlessly increases the error of all counters by decrementing all of them.
In contrast, Space Saving always evicts the item with the minimal counter.
This eviction introduces an error, as the monitored element is often more frequent than a randomly arriving item without a counter.
This is especially true for heavy-tailed workloads,
where a large fraction of the stream consists of ``tail elements'' that should not be admitted into the table.
In \PSS{}, we take a more conservative approach.
When an item $x\notin C$ arrives, we find the item ($m$) with the minimal counter value ($c_m$).
$x$ is then admitted into $C$ with probability $\frac{1}{c_m+1}$ at the expense of $m$; otherwise, $x$ is simply discarded.
Algorithm~\ref{alg:\PSS{}} provides a pseudo code of the \PSS{}'s \textsc{Add} method.

In order for an item $x$ to replace the minimal element $m$, it has to arrive $c_{m}+1$ times on average.
Infrequent items are therefore unlikely to be admitted into $C$, and most of them will not affect any of the counters.
Therefore, \PSS{} is considerably more accurate, especially for heavy tailed workloads where a large portion of the items are infrequent.
In contrast, every tail item in Space Saving affects the counters, thereby contributing to the total estimation error.
Our approach is not without risks, as if an arriving item turns out to be frequent, Space Saving admits that item sooner than \PSS{}. 

Given a query for the frequency of element $x$, \PSS{} estimates it as $c_x$ if $x\in C$ and 0 otherwise.

\PSS{} can be implemented with existing data structures and it processes packets at $O(1)$ runtime~\cite{SpaceSavings,HeavyHitters}.
It stores a single counter per table entry, while Space Saving entries are slightly larger as they store two values.

\begin{algorithm}[h]
	\ifdefined \NINEPAGES
	\small
	\fi
	Initialization: $C\gets\emptyset, \forall i: c_i\gets0$
	\begin{algorithmic}[1]
		\Function{Add}{Item $x$}
		\If {$x\in C$}
		\State  $c_x\gets c_x+1$
		\Else\If {$|C| < M$}
		\State $c_x \gets 1$
		\State $C \gets C \cup \{x\}$
		\Else
		\State $m \gets \text{argmin}_{y\in C} c_y$
		\If {$random()<{\frac{1}{c_m+1}}$}  \Comment w.p $\frac{1}{c_m+1}$.
		\State $C \gets (C \setminus \{m\}) \cup \{x\}$
		\State $c_x \gets c_m + 1$
		\EndIf\EndIf\EndIf
		\EndFunction
	\end{algorithmic}
	\caption{\pss{}}
	\label{alg:\PSS{}}
\end{algorithm}
\normalsize

\subsection{Analysis}
We start our analysis with theoretic bounds for the top-$k$ problem. These show that our probabilistic approach is asymptotically better for i.i.d. streams. We then explore the properties of \PSS{} for the frequency estimation problem.

\subsubsection{Top-$k$ Problem}
We say that an algorithm successfully solves top-$k$ if it identifies the $k$ most frequent flows in a stream. Our goal is to bound the number of table entries required for successful identification of top-$k$.

Denote $\Gamma_\alpha(D) \triangleq \sum_{i=1}^{D}i^{-\alpha}$. A stream will be called an i.i.d. Zipf stream with skew $\alpha$ over domain $D$ if all of its elements are sampled independently and follow the distribution in which item $i\in\set{1,2,\ldots,D}$ appears with probability $\frac{i^{-\alpha}}{\Gamma_\alpha(D)}$. Such a stream is denoted $\mathcal Z^{D}_{\alpha}$.

\begin{theorem}\label{thm:pss-heavy-tailed}
	Let $\alpha < 1$. For any fixed $k$, Space Saving requires $O( D^{1-\alpha})$ counters to solve top-$k$ on $\mathcal Z^{D}_{\alpha}$, while a randomized admission policy requires only to~$O(D^{\frac{1-\alpha}{1+\alpha}})$.
\end{theorem}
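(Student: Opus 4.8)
The plan is to reduce ``solves top-$k$'' to a single scalar --- the minimum counter value $\mu$ currently stored in the table --- and then to bound $\mu$ in terms of the number of counters $M$, separately for Space Saving and for \PSS{}. Write $\Gamma \triangleq \Gamma_\alpha(D) = \Theta(D^{1-\alpha})$ and $p_i = i^{-\alpha}/\Gamma$, and fix the stream length $N$ large enough that empirical frequencies concentrate: by a Chernoff bound, $N = \omega(\Gamma\log D)$ already forces $f_i = (1\pm o(1)) N p_i$ simultaneously for all $i$ w.h.p., so the true top-$k$ are items $1,\dots,k$ and, $k$ being fixed, $f_k$, $f_{k+1}$ and the gap $f_k - f_{k+1}$ are all of order $N/\Gamma$. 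Both algorithms answer with the $k$ largest stored counters, and I would show this answer is correct w.h.p.\ (over the stream and the algorithm's coins) exactly when $\mu = o(N/\Gamma)$: if $\mu$ is that small, each of items $1,\dots,k$ has held its counter long enough to sit within $o(N/\Gamma)$ of its true frequency, so these $k$ counters are correctly ordered among themselves and dominate every other counter; conversely, once $\mu = \Omega(N/\Gamma)$ some slot outside the true top-$k$ carries a counter $\ge\mu$ that meets or beats item $k$'s counter (and item $k$ is by then itself a candidate for eviction), so the output is wrong. The theorem is thereby reduced to estimating $\mu(M)$.

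For Space Saving every arrival increments exactly one counter, so $\sum_i c_i = N$ identically. I would split the $M$ slots into a ``heavy core'' of the $I^*$ items with $Np_i \gg \mu$ (which converge to their true counts) and $\Theta(M)$ ``churning'' slots that tail elements continually overwrite. Since $\alpha < 1$ the tail carries mass $\Theta(1)$, so churning slots are refreshed at rate $\Theta(1)$ and hold value $\Theta(\mu)$; matching $\sum_i c_i = N$ (the core contributes only the lower-order term $N\sum_{i\le I^*}p_i$ while $M = o(D)$) pins $\mu_{\mathrm{SS}} = \Theta(N/M)$. By the reduction, Space Saving succeeds iff $N/M = o(N/\Gamma)$, i.e.\ $M = \Theta(\Gamma) = \Theta(D^{1-\alpha})$: with $o(D^{1-\alpha})$ counters all counters equalize near $N/M \gg f_1$ and the reported top-$k$ is essentially arbitrary, while $O(D^{1-\alpha})$ counters suffice.

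The crux is the analogous estimate for \PSS{}. Now an arrival into a full table increments a counter only with probability $1/(c_m+1) \le 1/(\mu+1)$, so far fewer increments occur. Tracking the minimum: in the stationary picture the churning slots cluster at the current minimum, a tail arrival that wins its coin (rate $\Theta(1/\mu)$) pushes one of them up by one, and only after all $\Theta(M)$ churning slots have left level $\mu$ --- which costs $\Theta(M\mu)$ steps --- does the minimum rise by one; the resulting relation $\mathrm{d}\mu/\mathrm{d}t = \Theta(1/(M\mu))$ integrates to $\mu_{\mathrm{RAP}} = \Theta(\sqrt{N/M})$, the square root of the Space-Saving value. (An equivalent estimate follows from $\sum_i c_i = N - \#\{\text{rejected arrivals}\}$, where $\#\{\text{rejections}\} = \Theta(N)$ and $\#\{\text{admissions}\} = \Theta(N/\mu)$, combined with the same core/churn decomposition and $Np_{I^*} = \Theta(\mu)$.) I would also check that item $k$ does get admitted before time $N$ and then stays put: it is accepted at rate $\Theta(p_k/\mu)$ and becomes uncatchable once its counter reaches $\Theta(\mu)$, and both events happen within $\Theta(\Gamma\mu)$ steps --- which is $O(N)$ in exactly the parameter range where the reduction's condition holds.

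Combining, \PSS{} solves top-$k$ iff $\sqrt{N/M} = o(N/\Gamma)$, i.e.\ $M = \Theta(\Gamma^2/N)$. Taking a suitable polynomial stream length --- for instance $N = \Theta(D^{(1-\alpha)(1+2\alpha)/(1+\alpha)})$, which is $\omega(D^{1-\alpha})$ and hence amply long for the concentration and mixing steps --- the \PSS{} requirement becomes $\Theta(D^{(1-\alpha)/(1+\alpha)})$ while Space Saving still needs $\Theta(D^{1-\alpha})$, the claimed separation; note moreover that \PSS{}'s requirement keeps shrinking as $N$ grows while Space Saving's never does. The one genuinely delicate step is making the \PSS{} estimate rigorous: the stored counters form an interacting Markov chain, and I must show its minimum concentrates around $\Theta(\sqrt{N/M})$ by time $N$ --- controlling the heavy/churn boundary, establishing that each of the $\Theta(M)$ churning slots carries $\Theta(\mu)$ via a bounded-difference (Azuma/McDiarmid) argument, and bounding the chain's mixing time so the stationary description is valid well before time $N$. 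The Space-Saving side and the reduction are comparatively routine.
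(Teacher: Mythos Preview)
Your route and the paper's diverge at the very first step, and the divergence ends up being a gap rather than an alternative proof.

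The paper does \emph{not} prove this theorem for Algorithm~1 (admission probability $1/(c_m+1)$). It introduces a variant \PSS{}' whose admission probability is a \emph{fixed constant} $P$, and it works in the eventual-convergence setting $t\to\infty$, so the counter bound must be stream-length-free. Two constraints on $(m,P)$ drive the argument: (i) a top-$k$ item not currently monitored must eventually obtain a counter, which forces $P>f_m/f_k=(m/k)^{-\alpha}$, i.e.\ $m>kP^{-1/\alpha}$; and (ii) once the top-$k$ are all seated, item $k$'s counter must outpace the minimal tail counter, which forces $f_k>\frac{F_m-F_k+P(1-F_m)}{m-k}$. Choosing $P=D^{(\alpha^2-\alpha)/(1+\alpha)}$ balances (i) against (ii), and the exponent $\tfrac{1-\alpha}{1+\alpha}$ is exactly their crossover. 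That two-constraint optimization is the content of the theorem.

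You instead analyze the adaptive-probability \PSS{} and obtain $M=\Theta(\Gamma^2/N)$, then pick $N$ so that this equals $D^{(1-\alpha)/(1+\alpha)}$. But nothing in your estimate privileges that exponent: as you yourself note, the requirement keeps shrinking with $N$, so the identical reasoning would ``prove'' that $M=k+1$ counters suffice once the stream is long enough. In the paper's $t\to\infty$ model your bound collapses to zero, which is the signal that you are proving something other than the stated result; in particular your argument never touches the tradeoff between constraints (i) and (ii) that actually generates $\tfrac{1-\alpha}{1+\alpha}$. Your Space-Saving half, by contrast, does recover the paper's $N$-free condition $m>k+(1-F_k)/f_k=\Theta(D^{1-\alpha})$, though via a mass-accounting argument rather than the paper's random-walk comparison of $c_k$ against the minimal tail counter.
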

\ifdefined\TENPAGES
Next, we analyze the performance of a probabilistic admission filter for streams with higher skew.
The proof of Theorem~\ref{thm:pss-heavy-tailed}, and that of the following theorem, appear in the full version of the paper~\cite{full-version}.
\fi
\ifdefined\EXTENDED
Next, we analyze the performance of a probabilistic admission filter for streams with higher skew.
The proofs of both theorems appear in Appendix~\ref{sec:topk\PSS{}}.
\fi
\begin{theorem}\label{thm:pss-heavy-tailed2}
	For $\mathcal Z^{D}_{1}$ and any fixed $k$,  Space Saving requires $O(\log D)$ counters to solve top-$k$, while a randomized admission policy  reduces the required number to $O(\sqrt{\log D})$.
	
\end{theorem}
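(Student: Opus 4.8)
The plan is to reduce both halves of the statement to the interplay, at a large time $t$, between two quantities. The first is the multiset of true frequencies, which for $\mathcal Z^{D}_{1}$ concentrate tightly (Chernoff) around $t p_i = \frac{t}{i H_D}$ with $H_D = \Gamma_1(D) = \Theta(\log D)$, so that the top-$k$ items are separated from one another and from rank $k+1$ by a gap of $\Theta(t/\log D)$ (the constant depending on $k$). The second is the ``noise floor'' $c_{\min} = \min_{y\in C} c_y$, which simultaneously bounds how far a stored count can drift from the truth — each $c_i$ lies between the true count of $i$ and that count plus $c_{\min}$ — and governs whether a genuine heavy item can be evicted. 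So the common first step is to make these two statements precise, after which ``solves top-$k$ at time $t$'' follows from the single inequality $c_{\min} < (1-o(1))\, t\,(p_k - p_{k+1})$ together with each of items $1,\dots,k$ being resident in $C$.

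For Space Saving the second step is short: every arrival of a non-resident item raises the minimum by one, and for $\mathcal Z^{D}_{1}$ with $M = o(D)$ counters a $1-o(1)$ fraction of arrivals are non-resident tail items, so $c_{\min} = \Theta(t/M)$ with high probability (the upper bound $c_{\min}\le t/M$ is the standard invariant; the lower bound uses the heaviness of the $\mathcal Z^{D}_{1}$ tail). Substituting into the inequality above shows that $M = C\log D$ for a large enough constant $C=C(k)$ suffices, and — conversely — that if $M = o(\log D)$ then $c_{\min}$ overtakes item $k$'s count infinitely often, so item $k$ is non-resident at a random late time with probability bounded away from $0$; this lower bound is what turns the comparison with \PSS{} into a genuine separation rather than two unrelated upper bounds.

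For \PSS{} the point is that the randomized rule throttles the growth of the noise floor: a non-resident item is admitted only with probability $\frac{1}{c_{\min}+1}$, so admissions occur at rate $\approx \frac{1}{c_{\min}}$ per step, and since it takes $\Theta(M)$ admissions to clear the current bottom level, $c_{\min}$ obeys a recurrence whose solution is $c_{\min} = O\!\left(\sqrt{t/M}\right)$ with high probability — quadratically smaller than Space Saving's $\Theta(t/M)$. One then combines this with the concentration bounds and with the ``stickiness'' of heavy items — once item $i\le k$ is admitted it is displaced only with probability $\approx \frac{1}{c_i}\approx\frac{1}{t p_i}\to 0$, so a drift argument shows it is retained permanently once $c_{\min}$ crosses a $\Theta\!\left(\frac{\log D}{M}\right)$ threshold — and the quantitative bookkeeping pins down $M = O(\sqrt{\log D})$ as sufficient. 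The skeleton is exactly the one behind Theorem~\ref{thm:pss-heavy-tailed}, with $\Gamma_\alpha(D)=\Theta(D^{1-\alpha})$ there replaced by $\Gamma_1(D)=\Theta(\log D)$.

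I expect the \PSS{} analysis to be the main obstacle. Formalizing ``admissions at rate $\approx \frac{1}{c_{\min}}$'' and ``$\Theta(M)$ admissions raise the minimum by one'' requires controlling the Markov chain on counter configurations — in particular showing $c_{\min}$ concentrates around $\sqrt{t/M}$ in both directions, not merely in expectation — and one must run a union bound over all $\Theta(D)$ tail items to rule out the event that some tail item enjoys an early lucky run while $c_{\min}$ is still small and admission is cheap, climbs to a count exceeding item $k$'s, and knocks a genuine top-$k$ item out of the table. Driving each such event's probability below $o(1/D)$, while keeping the number of counters at $O(\sqrt{\log D})$, is where the argument has to be worked hardest.
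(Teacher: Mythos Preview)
Your route is genuinely different from the paper's. The paper does \emph{not} analyze \PSS{} (admission probability $\tfrac{1}{c_m+1}$) for this bound; it introduces a variant \PSS{}$'$ whose admission probability is a \emph{constant} $P$, chosen as $P=1/\sqrt{\ln D}$ for $\alpha=1$. With $P$ fixed, everything reduces to two static rate inequalities: (i) $P>f_m/f_k$, ensuring a non-resident top-$k$ item eventually overtakes the slowest resident counter, and (ii) $f_k>\frac{F_m-F_k+P(1-F_m)}{m-k}$, ensuring (via the same biased-random-walk argument already set up for Space Saving) that once all top-$k$ items are resident they stay with positive probability. Plugging in $\Gamma_1(D)\approx\ln(1.78D)$ verifies both for $m=\Theta(k\sqrt{\log D})$, and the Space Saving half is the same random-walk characterization specialized to $P=1$. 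So the paper entirely sidesteps the time-dependent Markov-chain control of $c_{\min}(t)$ that you flag as the hard part. What your approach would buy, if completed, is a guarantee for the actual parameter-free algorithm rather than a variant that must be told $D$ and the skew in advance---the paper explicitly lists this gap as future work. Two caveats on your heuristics: the tail-counter sum also receives a linear-in-$t$ contribution from \emph{resident} tail items (ranks $k{+}1,\ldots,m$) at rate $F_m-F_k\approx\frac{\ln(m/k)}{\ln D}$, which your $c_{\min}=O(\sqrt{t/M})$ bookkeeping omits (for $m=\Theta(\sqrt{\log D})$ this is $o(t/\log D)$ and harmless, but it must be tracked); and your ``stickiness'' threshold $\Theta(\log D/M)$ looks off---the direct rate comparison between arrivals of item $i$ (rate $p_i\approx 1/(i\log D)$) and evictions (rate $\approx 1/c_{\min}$) requires $c_{\min}\gtrsim k\log D$ before a freshly admitted top-$k$ item is favored to win its race.
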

\vspace{-0.2cm}
\subsubsection{Frequency Estimation Problem}
We now present a brief mathematical analysis of \PSS{}, including deterministic and probabilistic upper bounds for the estimation error.
\begin{theorem}
	Let $f_x$ be the true frequency of $x$, $\widehat{f_x}$ be \PSS{}'s estimation of $f_x$, and $m$ be $\min_{y\in C} C_y$.
	Then $\widehat{f_x}\le f_x+m$.
\end{theorem}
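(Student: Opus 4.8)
The plan is to deduce the theorem from a single invariant that \PSS{} maintains on its table. If $x\notin C$ then $\widehat{f_x}=0$, and since $f_x\ge 0$ and $m\ge 0$ the inequality $\widehat{f_x}\le f_x+m$ holds trivially; so all the work is in the case $x\in C$, where $\widehat{f_x}=c_x$ and the claim becomes $c_x-f_x\le m$ with $m=\min_{y\in C}c_y$ the current value of the smallest counter. I would therefore prove, by induction over the prefix of the stream processed so far, the stronger statement that \emph{every} item $y\in C$ satisfies $c_y-f_y\le m$ at all times, i.e. each maintained counter overestimates its true frequency by at most $m$.

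For the base case, note that as long as the table is not full ($|C|<M$) \PSS{} never evicts, and an item enters $C$ only through the branch that sets its counter to $1$ on its first arrival; hence $c_y=f_y$ for every $y\in C$ throughout this phase, and the invariant holds with room to spare (including at the moment the table first becomes full). Since $|C|$ never decreases once it reaches $M$, it then remains to show that a single \textsc{Add}$(x)$ call preserves the invariant when the table is full. I would split into three cases. (i) If $x\in C$, both $c_x$ and $f_x$ grow by one so $c_x-f_x$ is unchanged, no other $f_y$ changes, and the only counter that moved increased, so $m$ does not decrease; the invariant is preserved. (ii) If $x\notin C$ and $x$ is not admitted, nothing changes. (iii) If $x\notin C$ and $x$ is admitted, write $m'$ for $\min_{y\in C}c_y$ just before the call; $x$ replaces an element attaining this minimum and is given $c_x\gets m'+1$, while $f_x\ge 1$ because $x$ has just arrived, so $c_x-f_x\le m'$. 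Every surviving counter was $\ge m'$ and the new counter equals $m'+1>m'$, so the post-call minimum is still $\ge m'$; combined with the induction hypothesis $c_y-f_y\le m'$ for the surviving $y$, this gives $c_y-f_y\le m$ for all $y\in C$, including $x$. Plugging the invariant back in yields $\widehat{f_x}=c_x\le f_x+m$ for $x\in C$, and the $x\notin C$ case was already disposed of.

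The step I expect to need the most care is case (iii), specifically the point that the bound must hold \emph{deterministically} for every outcome of the coin flip rather than merely in expectation: the randomization in \PSS{} governs only \emph{whether} $x$ is admitted, and when it is, it receives exactly $m'+1$, keeping its error at most $m'$ regardless of the coin. A second, minor subtlety worth spelling out is that inserting a fresh counter of value $1$ while the table is still filling up cannot break the invariant, precisely because in that phase all counters equal the corresponding true frequencies — which is why the pre-full phase has to be handled before the full-table induction. Everything else is bookkeeping.
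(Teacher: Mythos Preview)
Your proof is correct. It takes a different route from the paper's, though both rest on the same underlying observation that the minimum counter value in $C$ is monotone non-decreasing. The paper argues directly: for $x\in C$ it looks at the last time $t$ at which $x$ was admitted, notes that $c_x^t\le m_{t^-}+1$ (the minimum just before admission plus one), that at most $f_x-1$ further arrivals of $x$ have occurred since, and that $m_{t^-}\le m$ by monotonicity of the minimum; chaining these gives $c_x\le f_x+m$. You instead maintain the global invariant $c_y-f_y\le m$ for every $y\in C$ by a step-by-step induction over the stream. Your approach proves a slightly stronger statement (the bound holds simultaneously for all monitored items, not just the queried one) and makes the pre-full phase explicit, at the cost of a longer case analysis; the paper's trace-back argument is shorter but leaves the filling phase and the monotonicity of $m$ a bit more implicit. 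Either way the content is the same: on admission the counter overshoots the true frequency by at most the then-current minimum, and that minimum only grows.
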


\begin{proof}
	The proof is by a case analysis.
	First, suppose $x \notin C$ at the time of the query.
	In this case, $\widehat{f_x}=0$ and the claim trivially holds.
	Conversely, assume that $x \in C$ at the time of the query;
	consider the last time $t$ in which $x$ was admitted into $C$.
	At that point, $c^t_x=m_{t^-}+1$ where $c^t_x$ is the value of $x$'s counter at time $t$ and $m_{t^-}$ is the minimum counter in the table just before time $t$.
	Notice that the algorithm can only increase the minimum counter in the table due to a packet arrival, at which point either no counter changes or the minimal counter is incremented.
	Hence, $c^t_x\le m_{t^-}+1$.
	Next, suppose that $x$ has arrived exactly $n$ times between $t$ and the present;
	$n\le f_x-1$ since we know that at time $t$, $x$ arrived once.
	It follows that $c_x=c^t_x+n\le m_{t^-}+1+f_x-1=f_x+m_{t^-} \leq f_x+m$.
\end{proof}

Next, we show that the estimation given by \PSS{} is in expectation smaller than or equal to the true frequency.
\begin{theorem}
	$\mathbb{E}\left[\widehat{f_x}\right]\le f_x$.
\end{theorem}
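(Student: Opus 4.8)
The plan is to prove the stronger claim that for every prefix length $t$,
$\mathbb{E}\!\left[\widehat{f_x}^{(t)}\right]\le f_x^{(t)}$, where $\widehat{f_x}^{(t)}$ is the value \PSS{} would report for $x$ after processing the first $t$ arrivals (that is, $c_x$ if $x\in C$ and $0$ otherwise) and $f_x^{(t)}$ is the number of occurrences of $x$ among the first $t$ arrivals. The randomness is over the algorithm's coin tosses; the stream is arbitrary. Setting $t$ equal to the stream length then yields the theorem. I would argue by induction on $t$: the base case $t=0$ is immediate since both quantities vanish, and in the inductive step I would condition on the algorithm's state after $t-1$ arrivals, case-split on the $t$-th arriving item $\sigma_t$ and the table configuration, establish in each case that $\mathbb{E}\!\left[\widehat{f_x}^{(t)}\mid\text{state}\right]-\widehat{f_x}^{(t-1)}\le f_x^{(t)}-f_x^{(t-1)}$, and then take expectations and invoke the inductive hypothesis.

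For the case analysis, first suppose $\sigma_t=x$, so $f_x$ increases by exactly $1$. If $x\in C$ its counter is incremented, so $\widehat{f_x}$ increases by $1$. If $x\notin C$ and the table is not full, $x$ is admitted with counter $1$, raising $\widehat{f_x}$ from $0$ to $1$. If $x\notin C$ and the table is full, $x$ is admitted with counter $c_m+1$ with probability $\frac{1}{c_m+1}$ (and otherwise $\widehat{f_x}$ stays $0$), so the conditional expected increment of $\widehat{f_x}$ is $\frac{1}{c_m+1}\cdot(c_m+1)=1$. In every sub-case the (expected) increment of $\widehat{f_x}$ equals that of $f_x$. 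Now suppose $\sigma_t\neq x$, so $f_x$ is unchanged; I claim $\widehat{f_x}$ cannot increase in expectation. Incrementing another item's counter, or admitting a new item to a non-full table, touches neither $c_x$ nor the membership of $x$. The only delicate sub-case is a full table in which the minimum item $m$ is replaced by $\sigma_t$: if $m\neq x$ this again leaves $c_x$ and $x$'s membership intact, while if $m=x$ then with probability $\frac{1}{c_x+1}$ the estimate drops to $0$ and with the remaining probability stays at $c_x$, so its conditional expectation is $\left(1-\frac{1}{c_x+1}\right)c_x\le c_x=\widehat{f_x}^{(t-1)}$. Combining the cases gives the required one-step inequality, and the induction goes through.

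The only non-routine point — and the main obstacle — is this last sub-case: one must observe that the same randomized rule which admits $x$ ``slowly'' also evicts $x$ whenever it is the current minimum, and that the eviction probability $\frac{1}{c_x+1}$ is exactly calibrated so that the post-step expectation does not exceed the current counter. Everything else is bookkeeping: verifying that no event involving items other than $x$ can raise $c_x$, that admission of a freshly arriving $x$ into a full table contributes expected counter mass exactly $\frac{1}{c_m+1}\cdot(c_m+1)=1$, and that in the non-full-table cases $\widehat{f_x}$ was $0$ immediately before admission.
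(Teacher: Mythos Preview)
Your proposal is correct and follows essentially the same approach as the paper: an induction on the number of processed arrivals, splitting on whether the $t$-th item equals $x$ and observing that an arrival of $x$ raises $\mathbb{E}[\widehat{f_x}]$ by exactly $1$ while any other arrival can only decrease it (via possible eviction of $x$ as the minimum). Your write-up is in fact a bit more careful than the paper's, explicitly handling the non-full-table sub-case and computing the post-eviction conditional expectation, but the argument is the same.
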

\begin{proof}
	In this proof, we use the notion of time to describe the events in the stream.
	The first event is at $t=0$, the next at $t=1$ and so on.
	We prove the claim by induction on the time $t$.
	\textit{Base:} At time $0$, $\widehat{f_x}=0$.
	\textit{Step:} If at time $t$ an item different than $x$ has arrived, then $\mathbb{E}\left[\widehat{f^t_x}\right]\le\mathbb{E}\left[\widehat f^{t-1}_x\right]$ ; in case $c_x$ was the smallest counter at time $t-1$, its estimation can only decrease, and otherwise its estimation does not change.
	However, if  $x$ arrived at time $t$, let $\Delta E^t_x$ be the change in $\mathbb{E}\left[\widehat{f_x}\right]$, that is  $\Delta E^t_x =\mathbb{E}\left[\widehat{f^t_x}-\widehat f^{t-1}_x\right].$
	
	There can be two cases:
	if $x\in C$, then $\widehat{f^t_x}-\widehat f^{t-1}_x=1$.
	Otherwise $x\notin C$, hence its estimation is either increased by $c_m+1$ with probability $1\over c_m+1$ or remains the same.
	Thus, in all cases $\widehat{f^t_x}-\widehat f^{t-1}_x$ grows by $1$ in expectation and $\Delta E^t_x=1$.
	Hence, the induction hypothesis holds and $\mathbb{E}\left[\widehat{f_x}\right]\le$~$ f_x$.
\end{proof}

\section{Hardware Friendliness}
\label{sec:hardware}
\PSS{} can be efficiently implemented in software with existing data structures~\cite{SpaceSavings, HeavyHitters}. 
These complex data structures might be difficult to efficiently implement in hardware.

In this section, we present~\emph{d-Way \pss{}} ($d$W-\PSS{}), a hardware friendly variant of \PSS{}.
We describe $d$W-\PSS{} as a cache management policy.
Caches are well understood, making $d$W-\PSS{} implementation as a cache policy easy to design as it does not rely on complex data structures. In addition, caches have a proven capability to operate at line speed.
For self containment, Section~\ref{sec:cache-organization} provides a brief introduction to cache topology.

\subsection{Cache Memory Organization}
\label{sec:cache-organization}
In order to meet their high speed requirements, hardware caches are usually not fully associative.
As a rule of thumb, the higher the associativity level -- the slower the cache is since the search process becomes more complex.
Limited associativity means that each item can only be placed in a certain logical place in the cache.
If this place is already full, an existing item must be evicted in order to admit the new one.

These logical locations are called \emph{sets} and in each set there are a certain number of places called \emph{ways}.
We use a hash function ($Set(x)$) to map an item to a certain set number; the item can only be stored in that set.
This makes the lookup process simpler as we only need to search for the item in a specific set, rather than in the entire cache.

The more ways we add to the cache -- the slower the cache works, as there are more places that an item could be found in.
Therefore, to ensure fast performance, the number of ways is kept small, typically $2-32$.
A cache with $d$ different ways is called \emph{d-way set associative}, a cache with only a single set is called \emph{fully associative}.
A cache with a single way is called~\emph{direct mapped}.

Figure~\ref{fig:kway} illustrates the basic topology of a 4-way set associative cache.
In this example, the $Set$ function is used to determine the set for $x$.
The set selected is the one marked with orange (horizontal line) and since
the cache has 4 ways, $x$ could be placed in either of these ways. The cache first checks whether $x$ appears in these ways. If it is not found, a cache policy is used to decide whether to admit $x$ into the cache, at the expense of evicting some other item, or not.

\begin{figure}[h]
	
	\center{
		\includegraphics[scale=0.28]{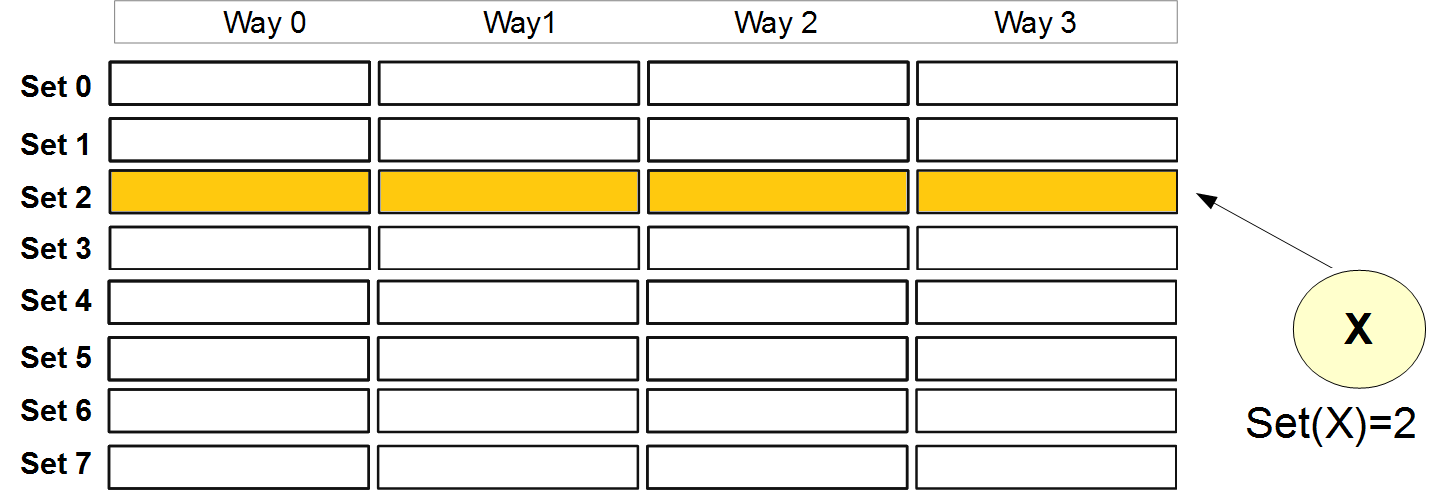}
		\caption{\label{fig:kway} A 4-way set associative cache with 8 lines. When an item ($x$) arrives, Set($x$)=2 is calculated and the item can be stored in any of the ways of set 2.
		}}	
		
	\end{figure}
	
	\subsection{Cache Policy}
	A fundamental cache management question is what to do when an item arrives and its corresponding set is full.
	A \emph{cache policy} is an algorithm that answers these questions.
	Cache policies can sometimes be partitioned into two sub policies: an \emph{admission policy} and an \emph{eviction policy}~\cite{TinyLFU}.
	The former decides whether to admit an item into the cache and the latter decides on the cache \emph{victim}.
	
	\subsection{$d$W-\PSS{} as a Cache Policy}
	Algorithm~\ref{alg:\PSS{}} implements \PSS{} assuming (implicitly) a fully associative memory organization. We now describe $d$W-\PSS{} as a cache policy for a $d$-way cache organization.
	
	\paragraph{Metadata}
	In $d$W-\PSS{}, each entry contains a counter that is used for both frequency/top-$k$ estimation, and for the cache admission and eviction policies.
	
	\paragraph{Metadata Update} In $d$W-\PSS{}, every time a cached item is accessed, including right after the initial admission, its counter is incremented by $1$.
	
	\paragraph{Eviction Policy} When a set is full the cache victim is always the entry with the minimal counter in the set.
	
	\paragraph{Admission Policy} $d$W-\PSS{}'s cache policy does not always admit an item into the cache.
	Instead, it first identifies the set entry with minimal counter as a potential cache victim.
	If that entry's ID is $m$ and its counter value is $c_m$, a new item is admitted with probability $\frac{1}{c_m+1}$.
	The counter of a new item remains with the same value $(c_m)$, and is later incremented by the metadata update.

	\begin{figure}[h]
		
		\center{
			\includegraphics[scale=0.28]{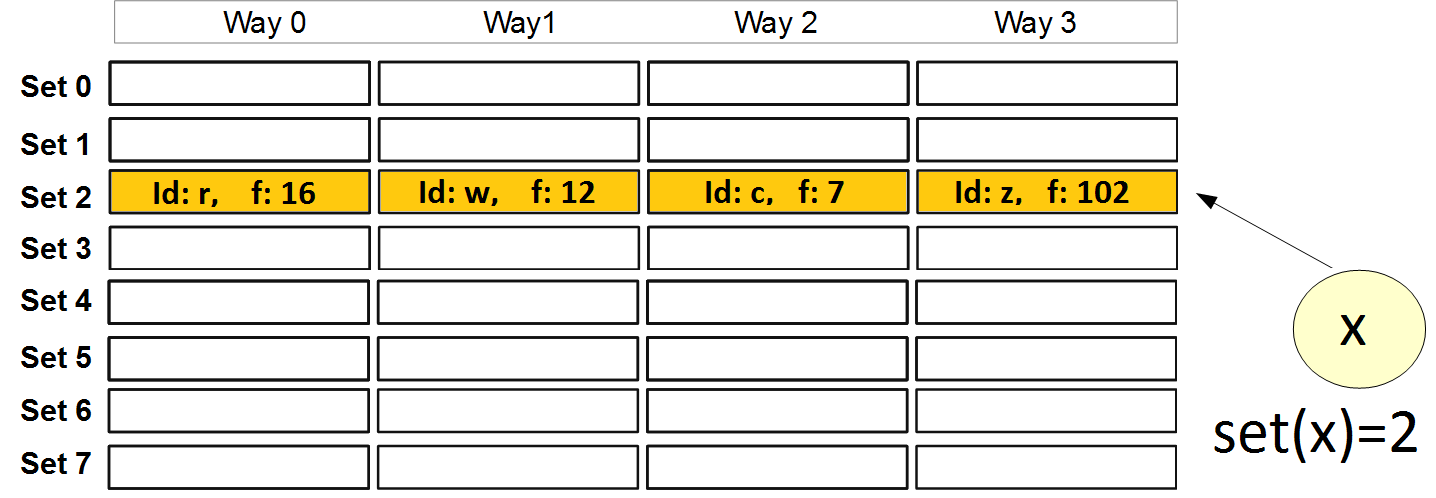}
			\caption{\label{fig:k\PSS{}Cache} A 4-way set associative cache with \PSS{} policy. Item $x$ has Set($x$)=2. Set 2 includes $r$, $w$, $c$ and $z$, while $x$ is not in the cache. The eviction policy selects $c$, because its frequency is the smallest. $x$ will be admitted into the cache with probability $\frac{1}{8}$. If $x$ is admitted, its counter will be incremented to $8$.
			}}	
			
		\end{figure}
		
		An example of  $d$W-\PSS{} is given in Figure~\ref{fig:k\PSS{}Cache}.
		When $x$ arrives, we first look for it in $Set(x)=2$.
		There are 4 items in set 2 and $x$ is not one of them.
		Thus we need to decide on eviction and admission.
		If we choose to admit $x$, we evict the minimal item ($c$) in way 2.
		The frequency of $c$ is $c_c = 7$ and therefore $x$ is only admitted into the cache with probability $\Pr[Admit(x)] = \frac{1}{c_c +1} = \frac{1}{8}.$\\
		If $x$ is admitted into the cache, $c$ is removed from the cache and $c_x$ is set to be $c_c +1 = 8$.
		
		Fortunately, the complexity of implementing $d$W-\PSS{} no longer depends on the number of counters, but only on the associativity level ($d$).
		The larger $d$ is, the more combinatoric logic is used for searching the cache and identifying the minimum.
		As mentioned above, $d$ is typically very small and we can treat the complexity as $O(1)$.
		In Section~\ref{sec:results}, we evaluate $d$W-\PSS{} and show that it is almost as accurate as (the fully associative) \PSS{}, even for relatively small values of $d$.
		\ifdefined\TENPAGES
		We have also experimented with different associativity levels and evaluated their impact. The experiment details and results appear in the full version~\cite{full-version}.
		\fi
		\ifdefined\EXTENDED
		We have also experimented with different associativity levels and evaluated their impact. The experiment details and results appear in Appendix~\ref{apx:assoc}.
		\fi
		

\section{Evaluation}
\label{sec:results}
In this section, we evaluate \PSS{} and $d$W-\PSS{} along with the following previously suggested algorithms -- Frequent~(\emph{FR})~\cite{BatchDecrement} and Space-Saving (\emph{SS})~\cite{SpaceSavings}.
These counter-based algorithms were proven effective for both frequency and top-$k$ estimation.
The latter is considered state of the art~\cite{SpaceSavingIsTheBest, SpaceSavingIsTheBest2010, SpaceSavingIsTheBest2011}.
For frequency estimation, we also compare with sketches such as (\emph{CS})~\cite{CountSketch} and (\emph{CMS})\cite{CountMinSketch}.


For a fair comparison, we evaluate the performance of CS and CMS using 8 times as many counters as the rest of the (counter based) algorithms. 
To represent their low implementation overhead, they were configured to use 4 lines, which was shown effective in practice~\cite{caffein}.
By giving the sketches more counters, we compensate for their lower overheads, as they do not maintain a flow to counter association and avoid storing flow identifiers.
We consider this a generous comparison, as the flow id and metadata overheads should not take more than $7$ times the counter size.

\subsection{Datasets}
Our evaluation includes the following datasets:
\begin{enumerate}
	\item The CAIDA Anonymized Internet Trace 2015 \cite{CAIDA} , or in short, \emph{CAIDA}. The data is collected from the `equinix-chicago' high-speed monitor and contains 18M elements of mixed UDP, TCP and ICMP packets.
	\item The UCLA Computer Science department packet trace (denoted \emph{UCLA})\cite{UCLA}. This trace contains 32M UDP packets passed through the border router of the CS Department, University of California, Los Angeles.
	\item YouTube Trace from the UMass campus network (referred to as \emph{YouTube})~\cite{youtube}. The trace includes a sequence of 600K accesses to YouTube from within the university.
	\item Zipf streams. Self-generated traces of identical and independently distributed elements sampled from a Zipf distribution with various skew values ($0.6, 0.8, 1.0, 1.2$ and $1.5$).
	Hereafter, the skew $X$ stream is denoted~Zipf$X$.
\end{enumerate}

\subsection{Metrics}
Our evaluation considers the following performance metrics:
\begin{enumerate}
	\item \textbf{On-Arrival frequency estimation} \\
	Many networking applications take decisions on a \emph{per-packet} basis.
	For example, if a router identifies excessive traffic originating from a specific source, the router may suspend further routing of its packets to prevent denial of service attacks.
	We refer to this as the \emph{On-Arrival} model, where upon arrival of each packet, the algorithms are required to estimate its flow frequency.
	Formally, a stream $S=s_1,s_2,\ldots$ is revealed one element at a time; consequent to $s_t$ arrival, an algorithm \emph{Alg} is required to provide an estimate $\widehat{f_{s_t}}$ for the number elements in the stream with the same id. We then measure the \emph{Mean Square Error (MSE)} of the algorithm, i.e.,
	$MSE(Alg) \triangleq \frac{1}{N}\sum_{t=1}^{N}(\widehat{f_{s_t}} - f_{s_t})^2.$
	
	\item \textbf{Top-$k$ Identification} \\
	The ability to identify the most frequent flows is also important to many applications.
	We define the \emph{Top-$k$} identification problem as follows:
	Given a stream $S=s_1,s_2,\ldots$ and two query parameters $m$ and $k$, the algorithm is required to output a set of $m$ elements containing as many of the $k$ most frequent stream elements as possible.
	We denote the $k$-highest element frequency by $F_k$.
	For a set of candidates $C$, we measure its quality using the standard recall and precision metrics:
	\ifdefined \EXTENDED
	\begin{align*}
	\text{Precision}(C) &\triangleq\ \  |e\in C : f_e \ge F_k| / |C|\\
	\text{Recall}(C)    &\triangleq\ \  |e\in C : f_e \ge F_k| / k.
	\end{align*}
	\else
	$
	\text{Precision}(C) \triangleq\ \  |e\in C : f_e \ge F_k| / |C|\\
	\text{Recall}(C) \ \ \    \triangleq\ \  |e\in C : f_e \ge F_k| / k
	$.
	\fi
\end{enumerate}
\begin{figure*}[tp!]
	\begin{tabular}{ccc}
		\subfloat[CAIDA]{\label{fig:CaidaMSE}\includegraphics[width = \matrixCellWidth]
			{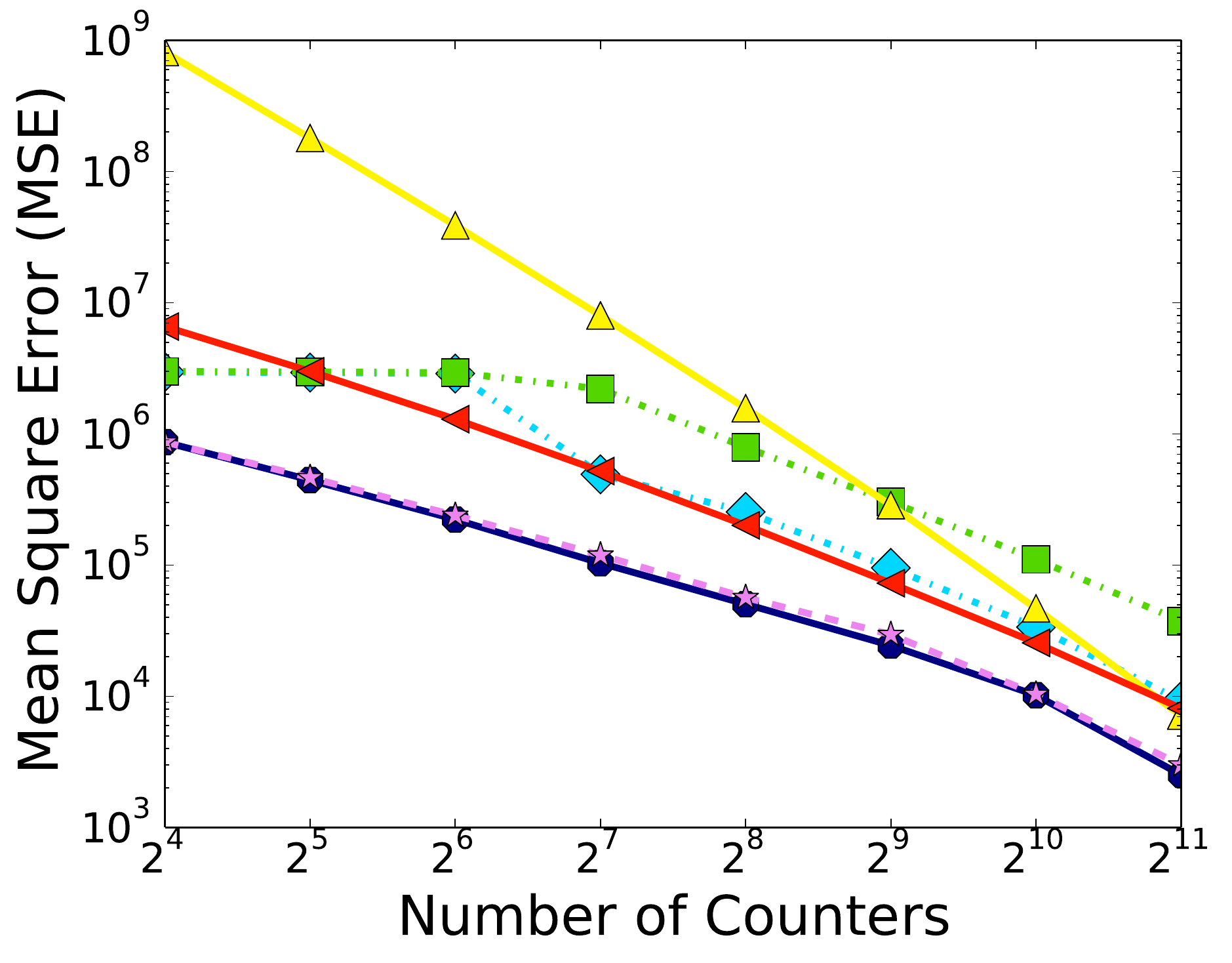}} &
		\subfloat[UCLA]{\label{fig:UCLAMSE}\includegraphics[width = \matrixCellWidth]
			{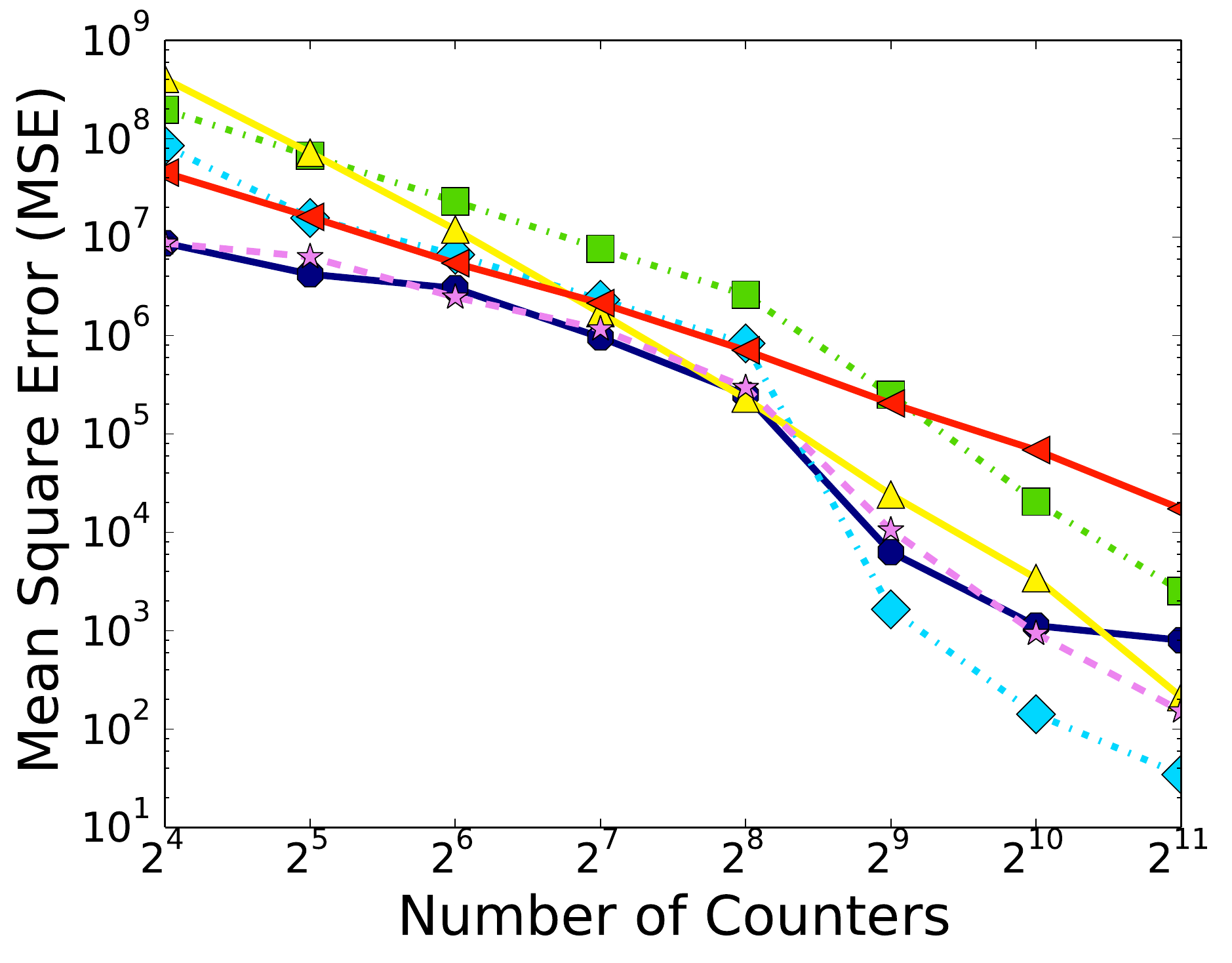}} &
		\subfloat[YouTube]{\label{fig:YouTubeMSE}\includegraphics[width = \matrixCellWidth]
			{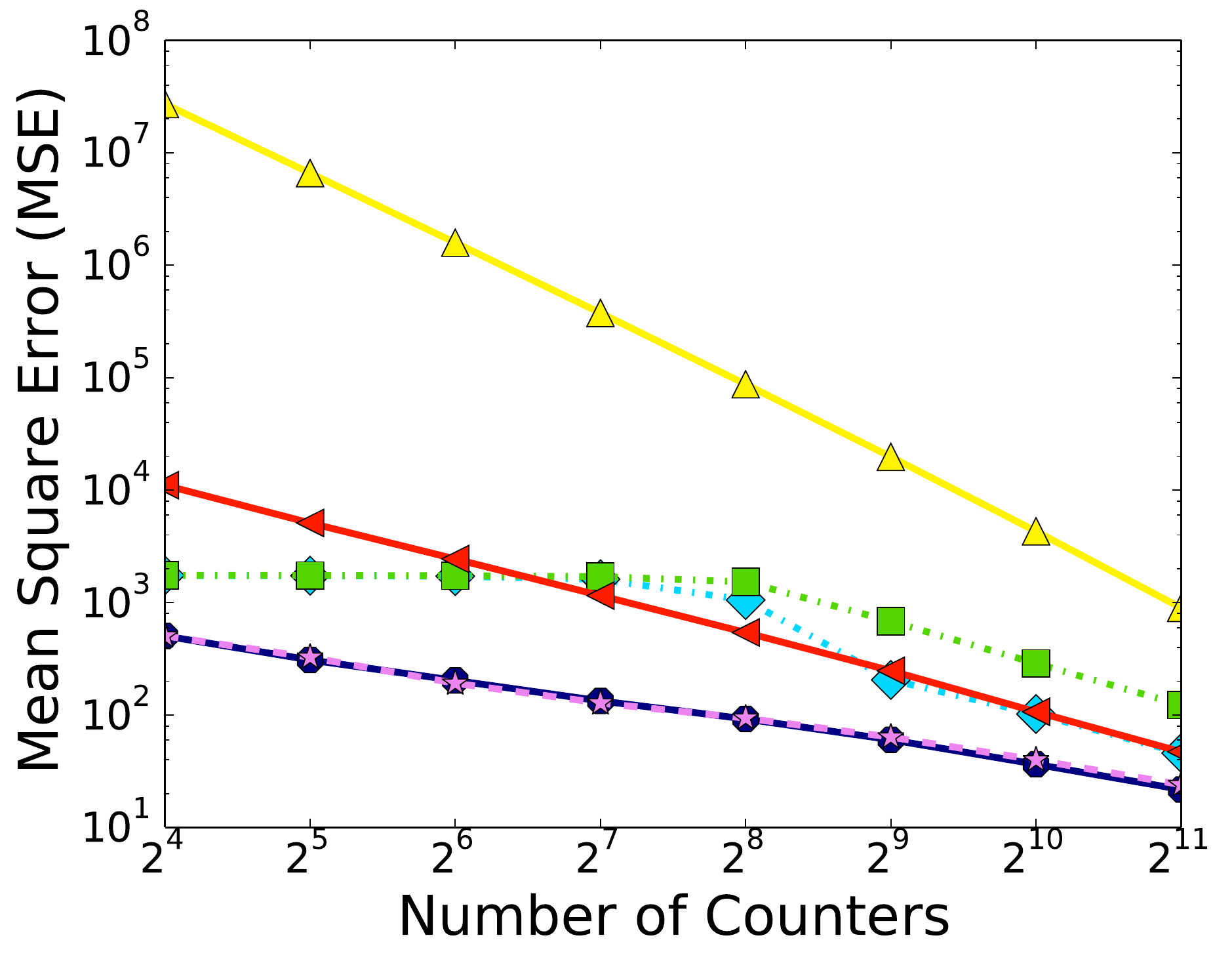}}\\
		\subfloat[Zipf0.6]{\label{fig:Zipf0.6MSE}\includegraphics[width = \matrixCellWidth]
			{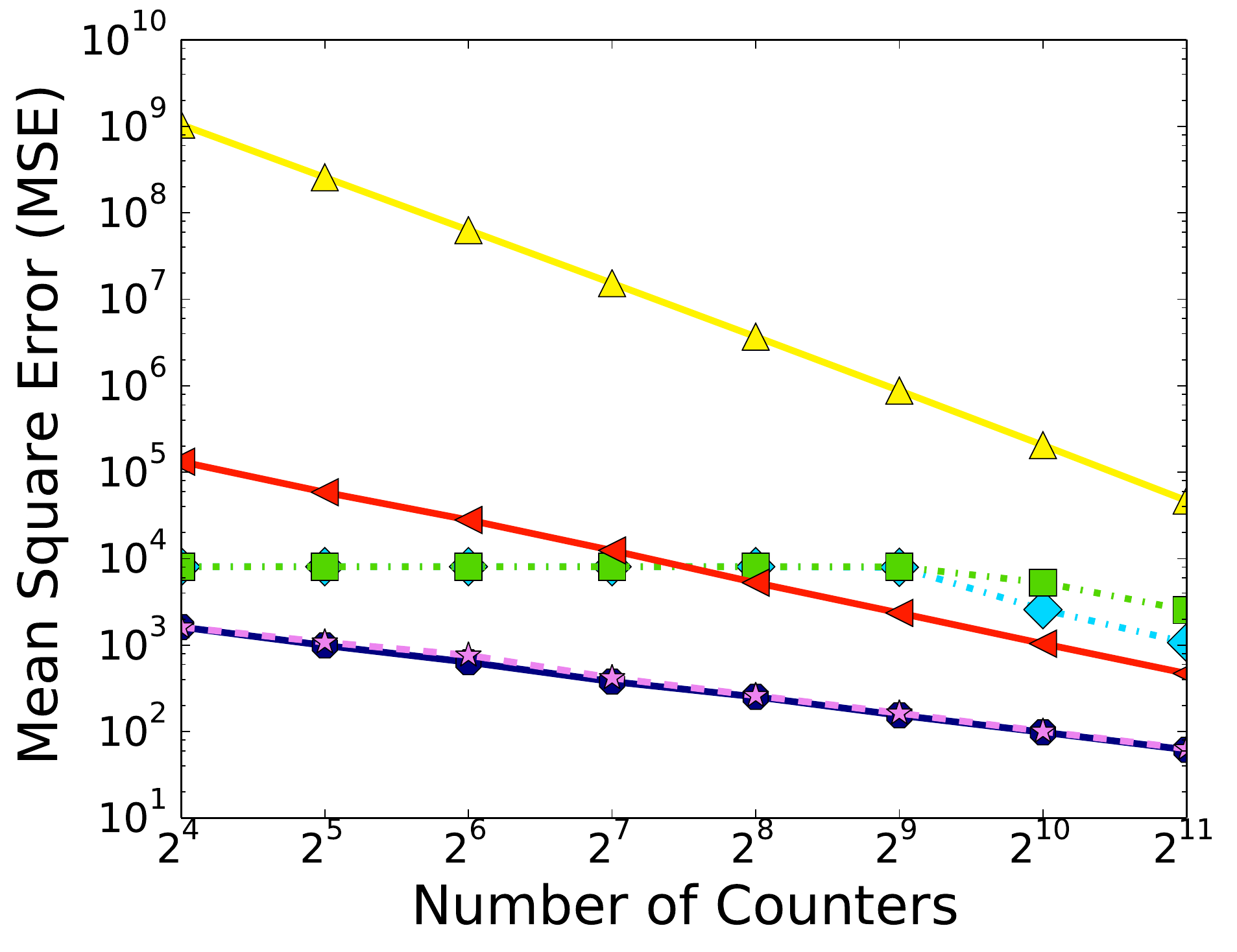}} &
		\hspace*{0.15in}
		\subfloat[Legend]{\label{fig:Zipf0.8MSE}\includegraphics[width = 4.9cm, height=4.3cm]
			{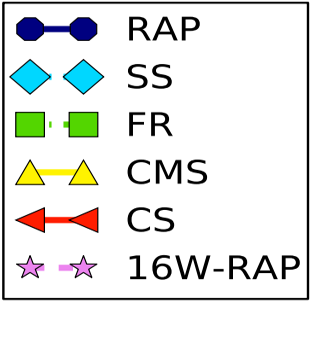}}&		
		\subfloat[Zipf0.8]{\includegraphics[width = \matrixCellWidth]
			{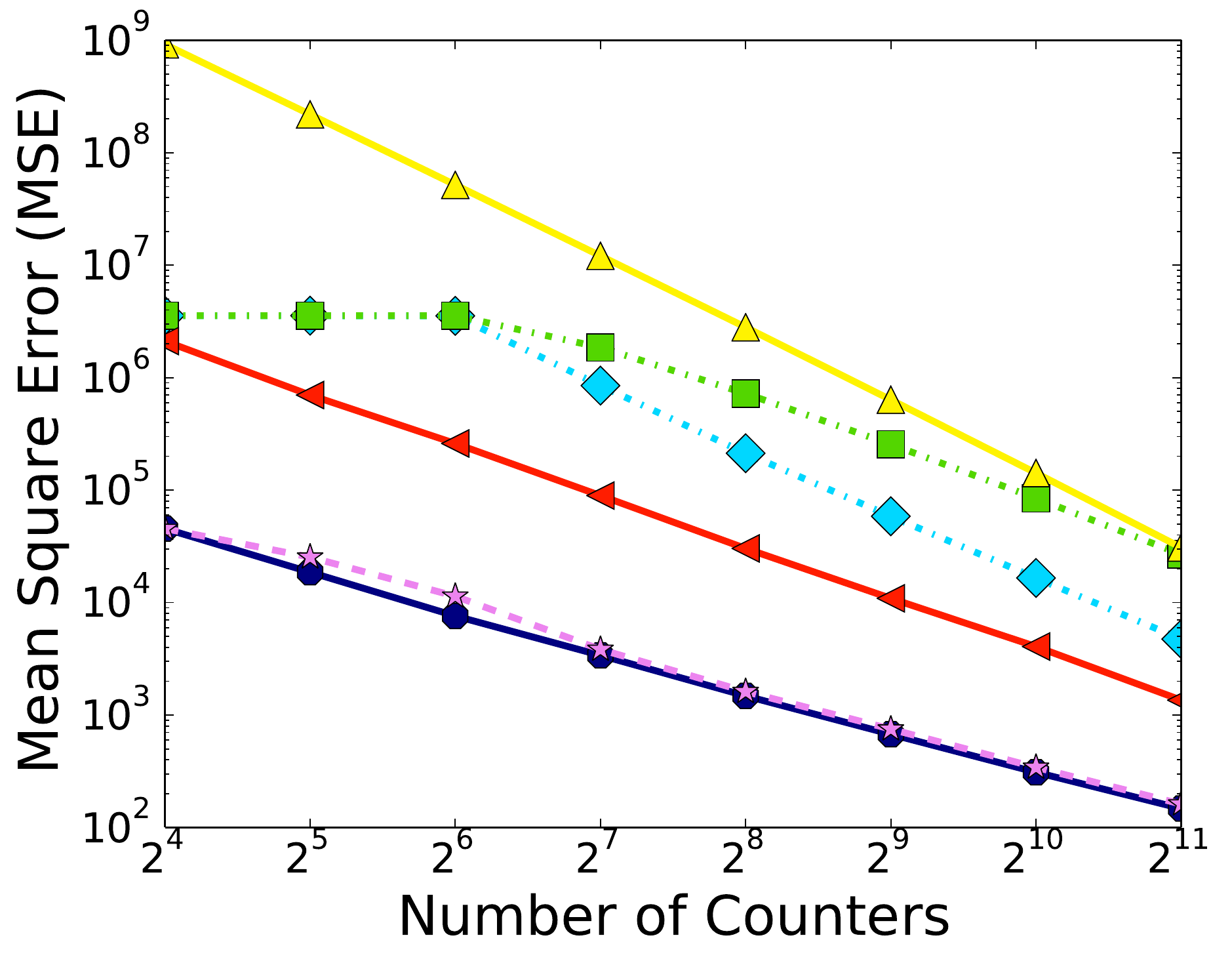}} \\
		\subfloat[Zipf1.0]{\label{fig:Zipf1.0MSE}\includegraphics[width = \matrixCellWidth]
			{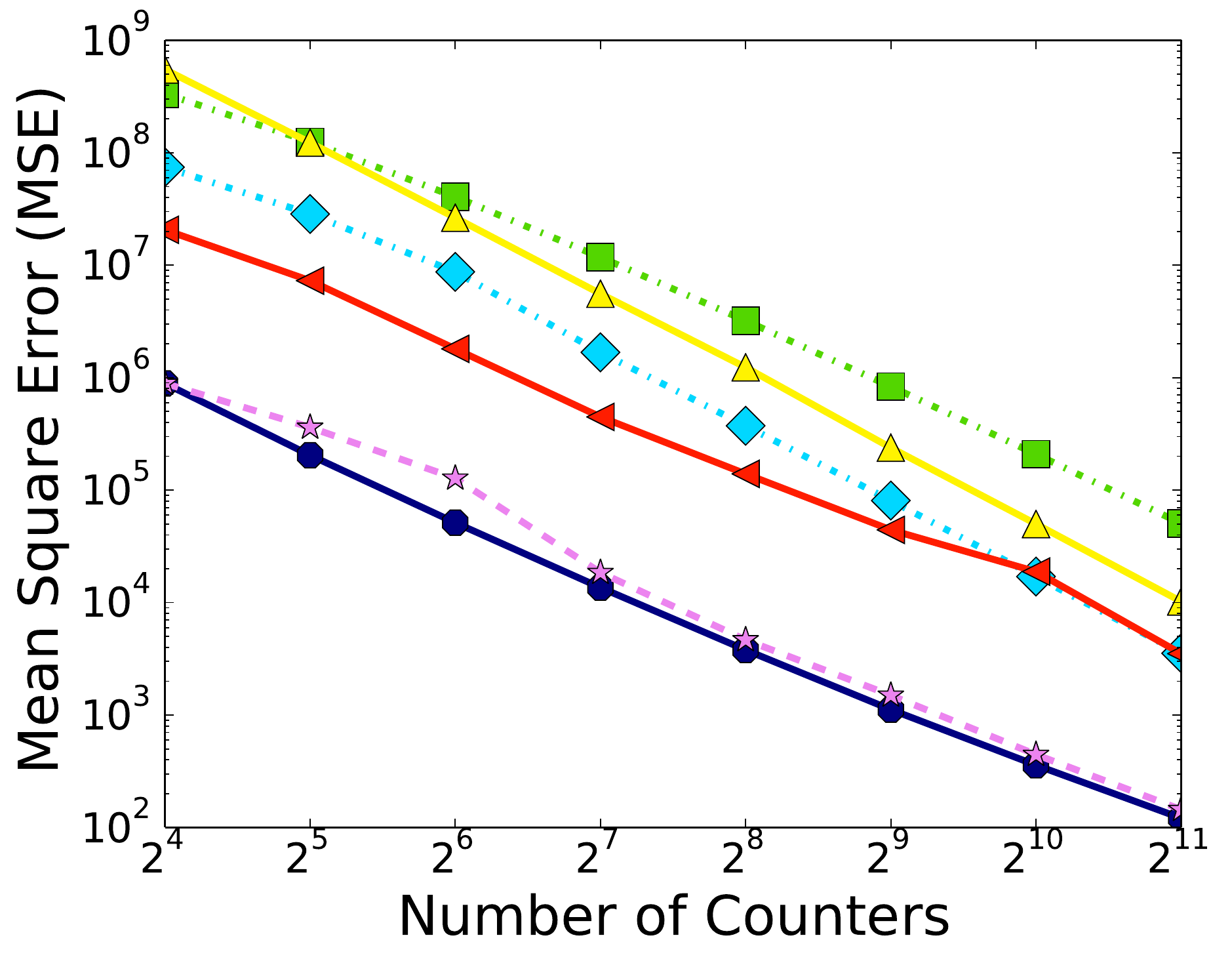}} &
		\subfloat[Zipf1.2]{\label{fig:Zipf1.2MSE}\includegraphics[width = \matrixCellWidth]
			{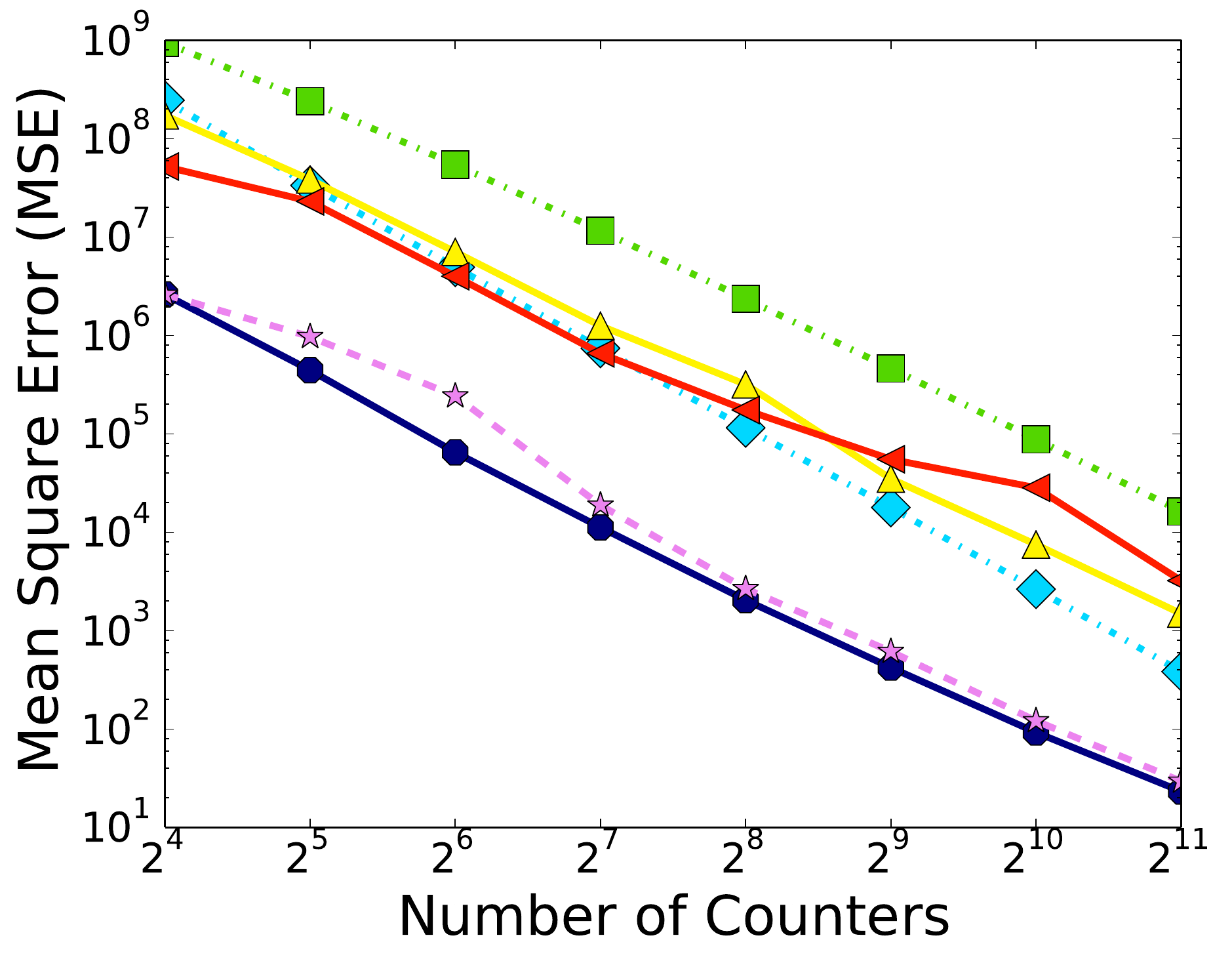}} &
		\subfloat[Zipf1.5]{\label{fig:Zipf1.5MSE}\includegraphics[width = \matrixCellWidth]
			{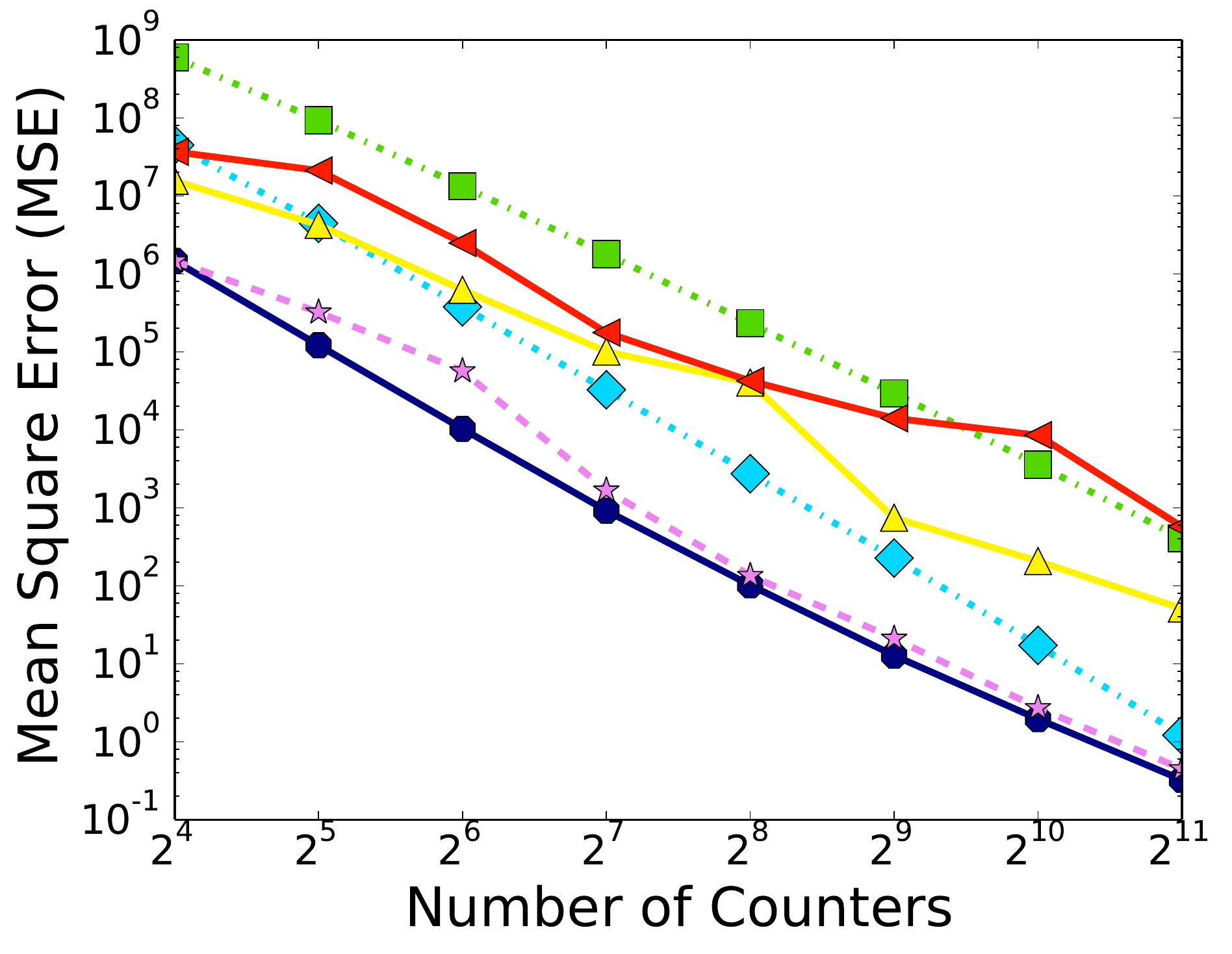}}
		
	\end{tabular}
	\caption{\label{fig:MSE}On Arrival -- Mean square error vs. number of counters. \\Note that CMS and CS have 8 times as many counters!}
\end{figure*}
\begin{figure*}[t!]
	\begin{tabular}{ccc}
		\subfloat[CAIDA]{\label{top32Caida}\includegraphics[width = \matrixCellWidth]{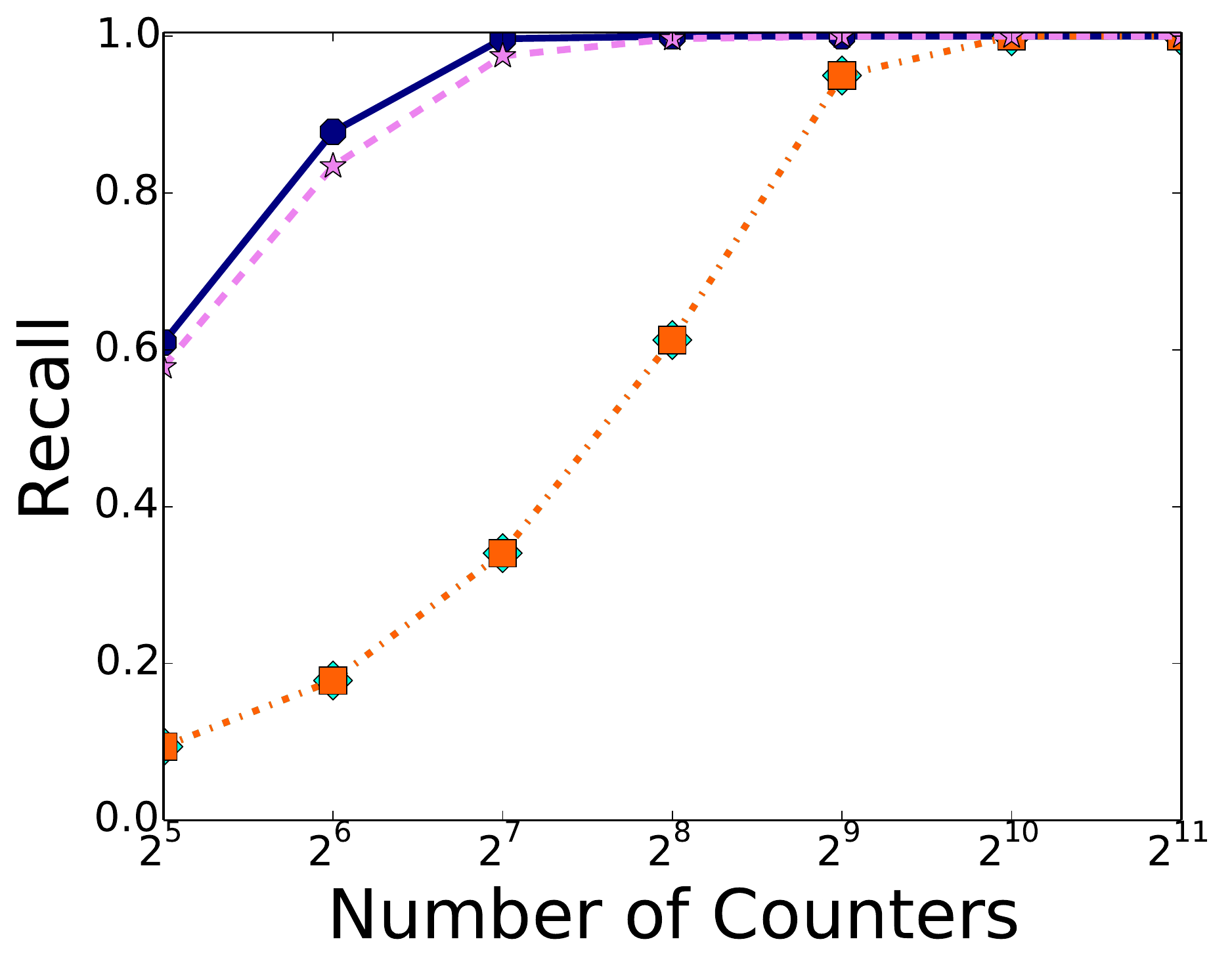}} &
		\subfloat[UCLA]{\label{top32UCLA}\includegraphics[width = \matrixCellWidth]
			{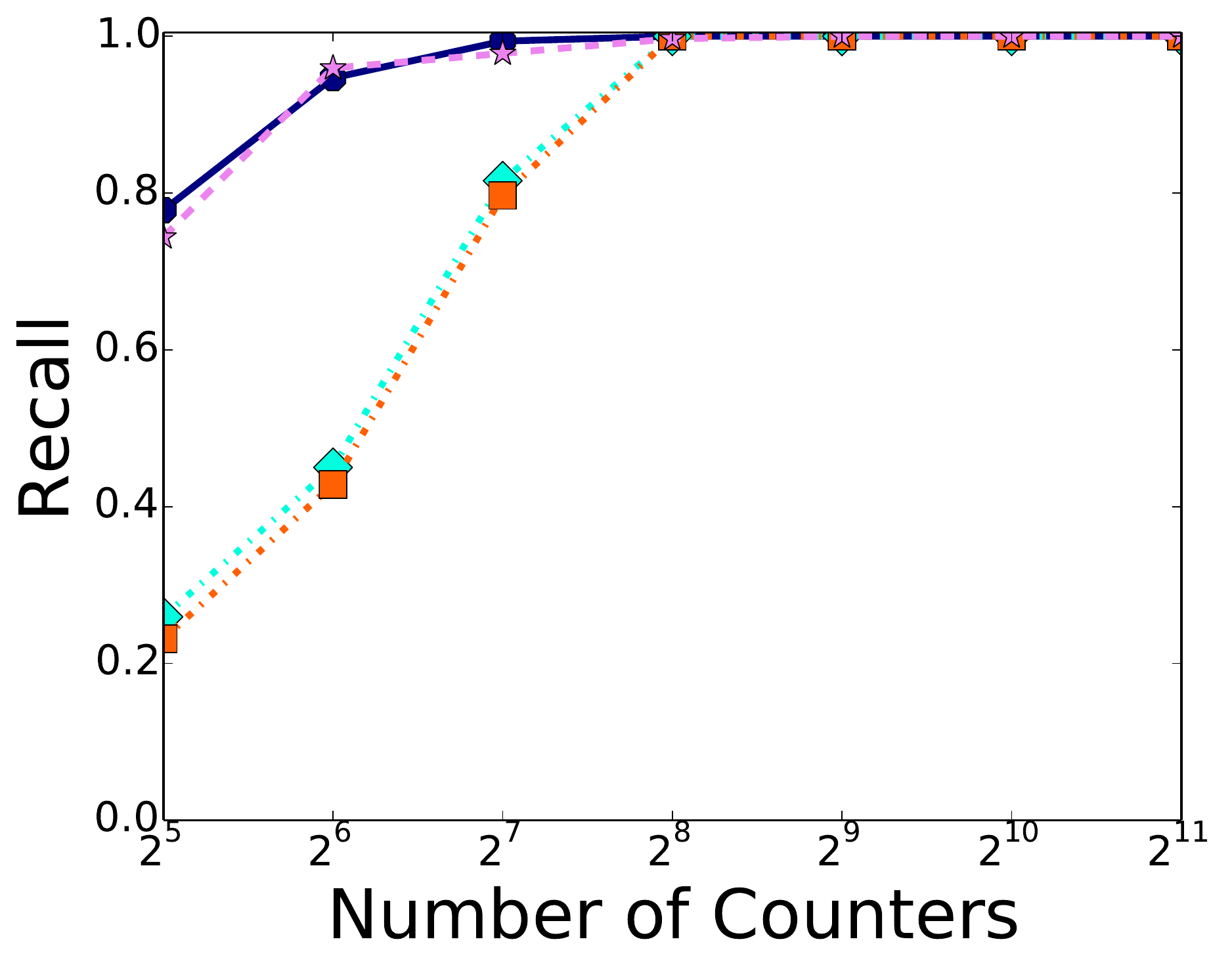}} &
		\subfloat[YouTube]{\label{top32YouTube}\includegraphics[width = \matrixCellWidth]
			{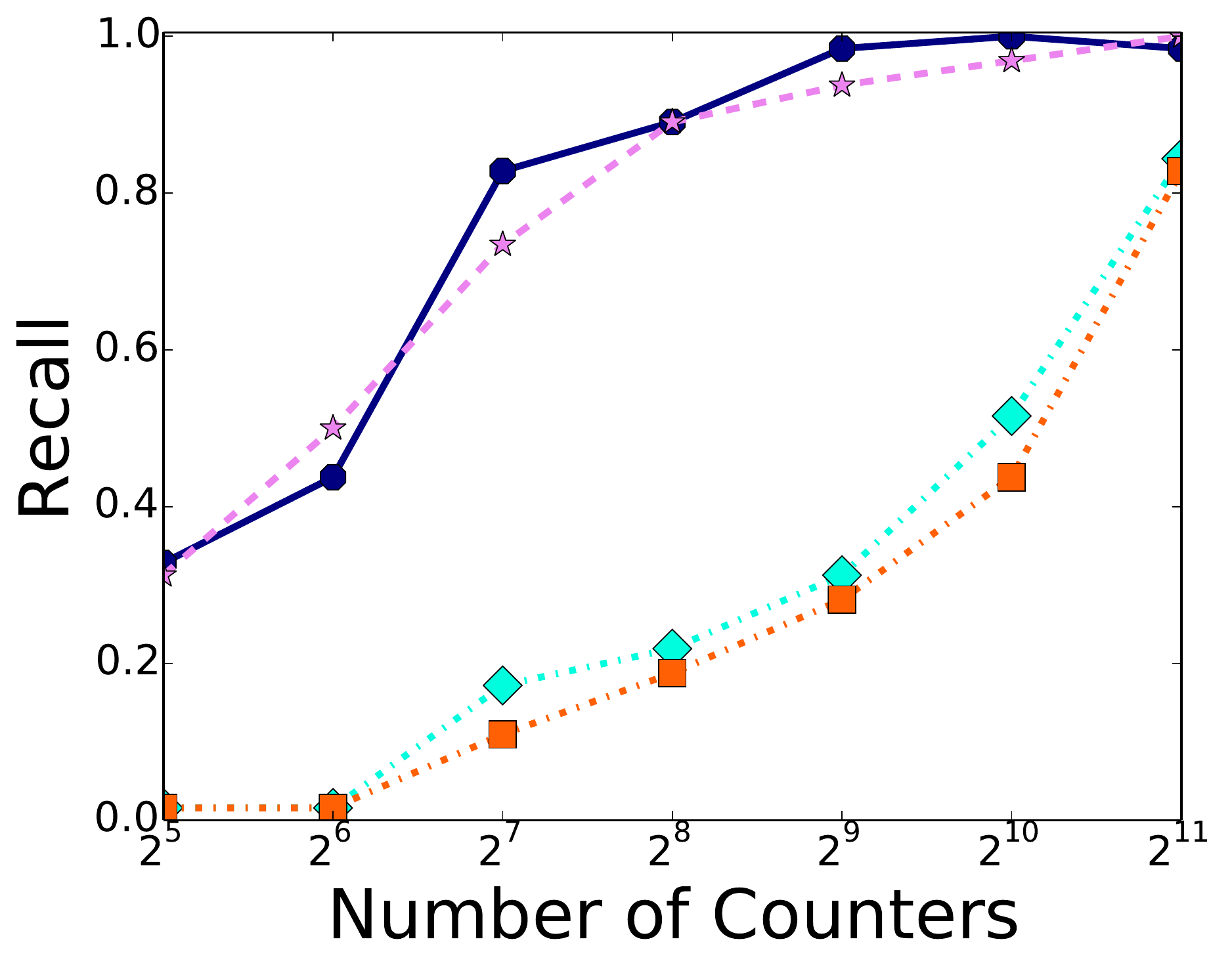}}\\
		\subfloat[Zipf0.6]{\label{top32Zipf06}\includegraphics[width = \matrixCellWidth]
			{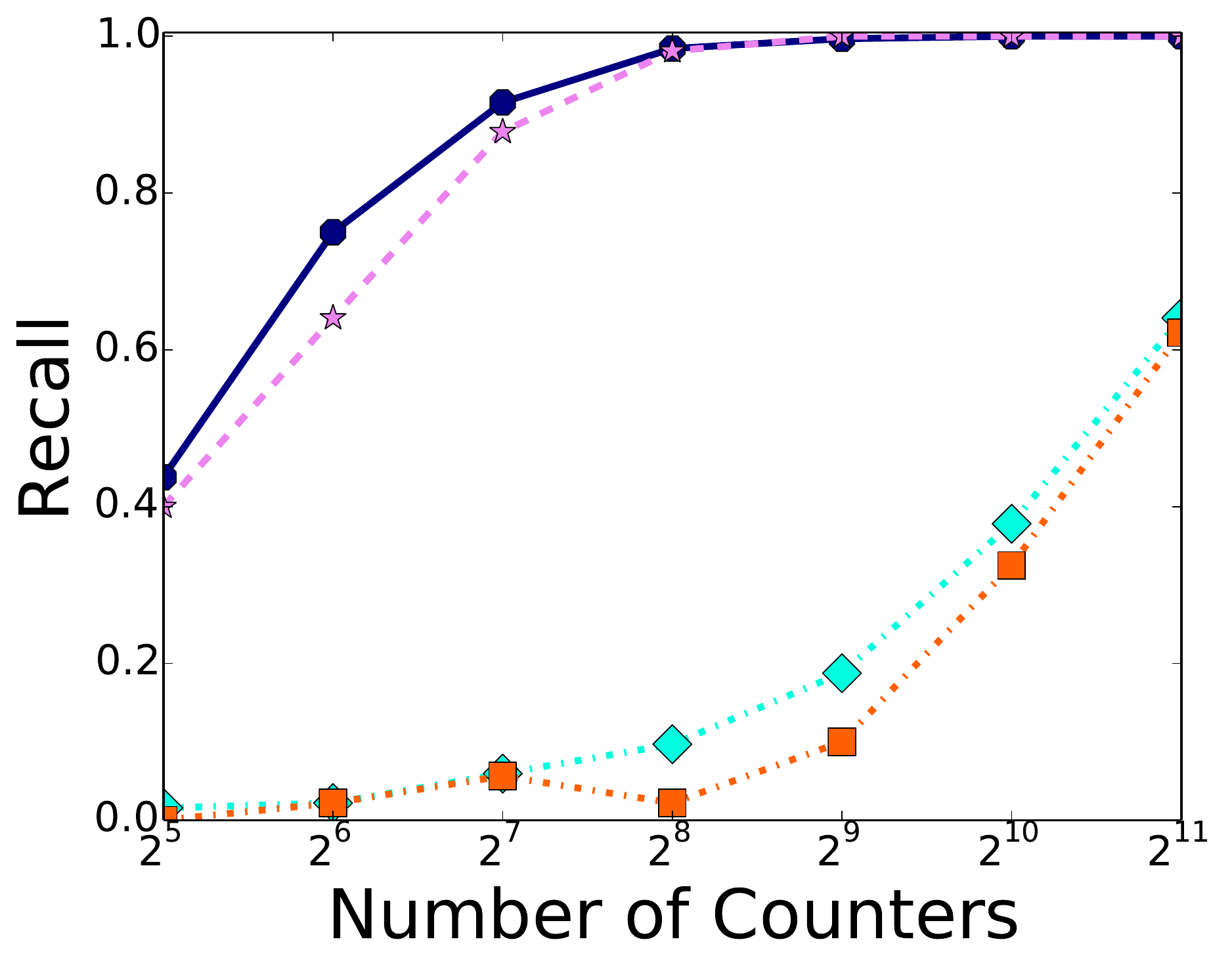}} &
		\hspace*{0.15in}
		\subfloat[Legend]{\includegraphics[width=5cm,height=4.45cm]
			{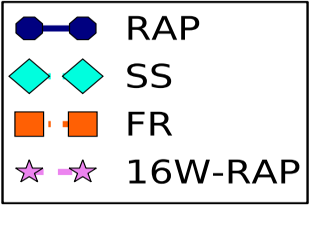}}&		
		\subfloat[Zipf0.8]{\label{top32Zipf08}\includegraphics[width = \matrixCellWidth]
			{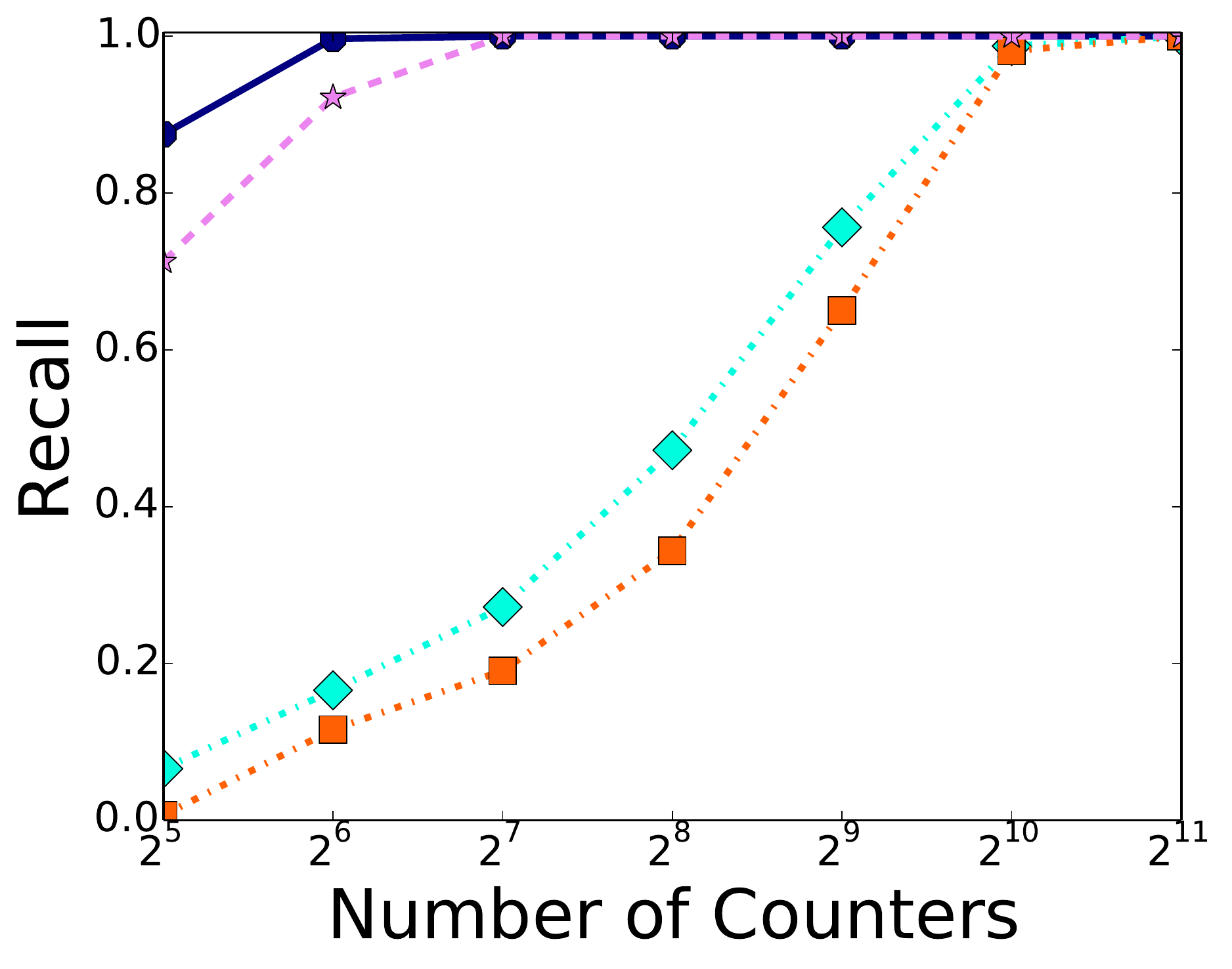}} \\
		\subfloat[Zipf1.0]{\label{top32Zipf10}\includegraphics[width = \matrixCellWidth]
			{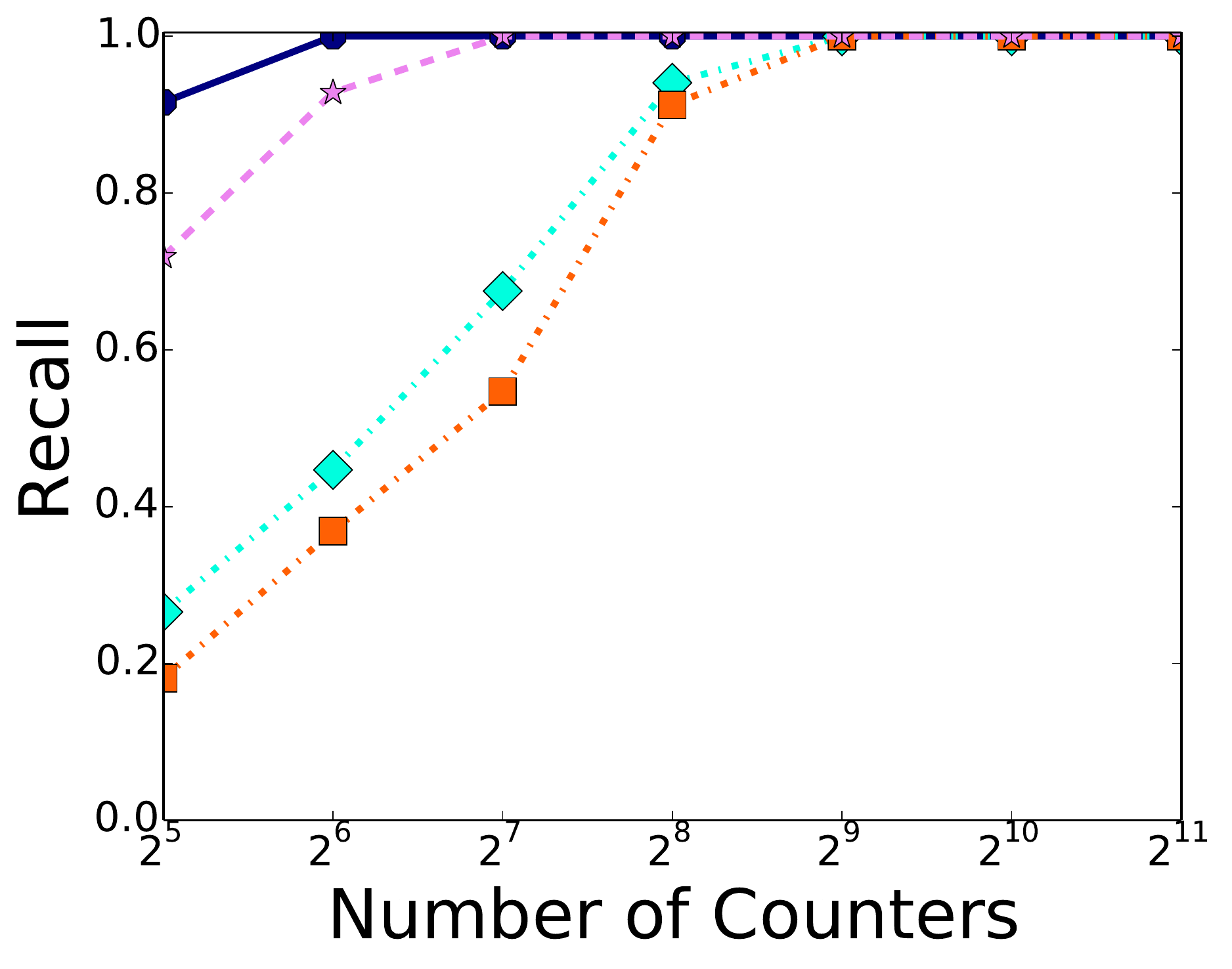}} &
		\subfloat[Zipf1.2]{\label{top32Zipf12}\includegraphics[width = \matrixCellWidth]
			{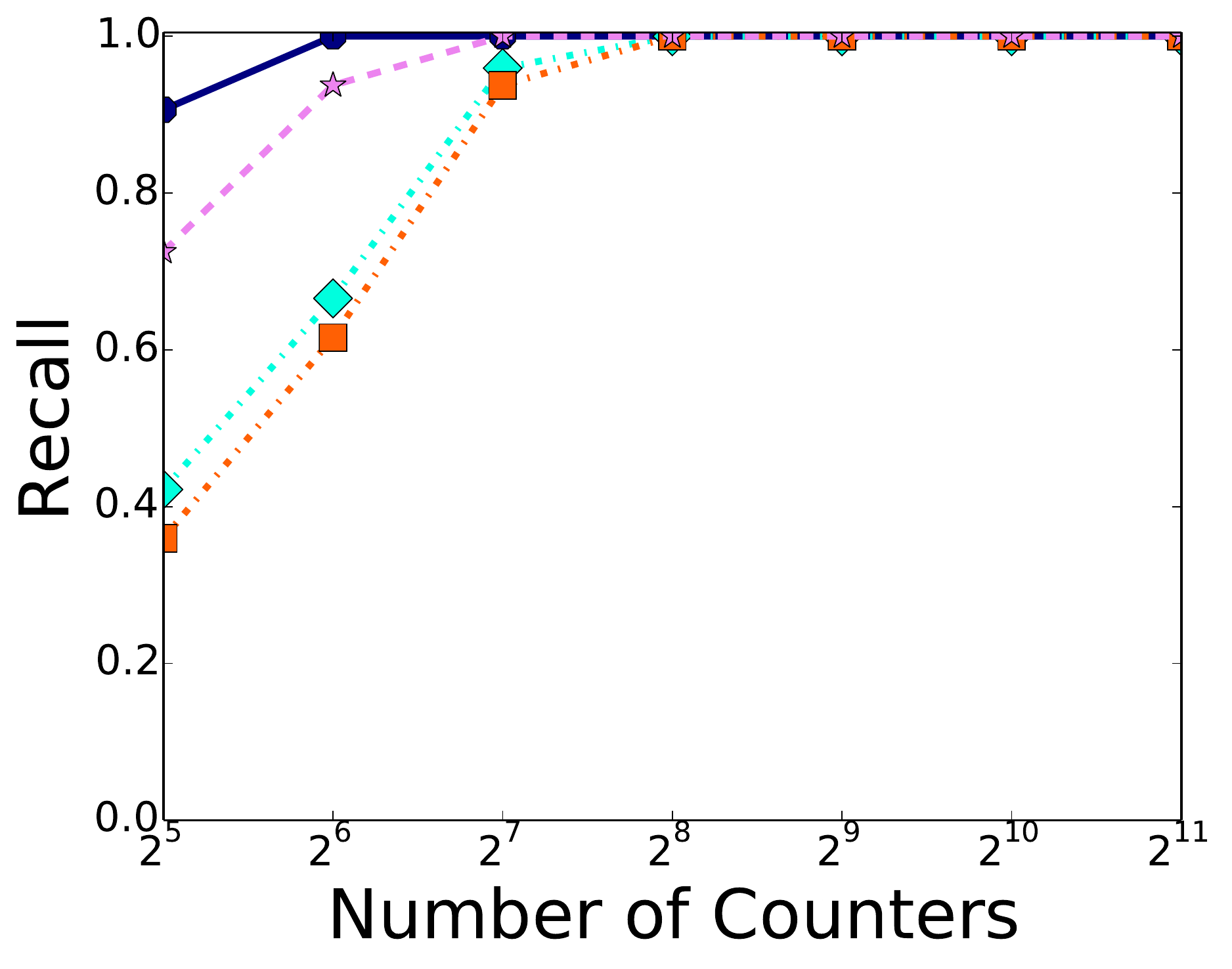}} &
		\subfloat[Zipf1.5]{\label{top32Zipf15}\includegraphics[width = \matrixCellWidth]
			{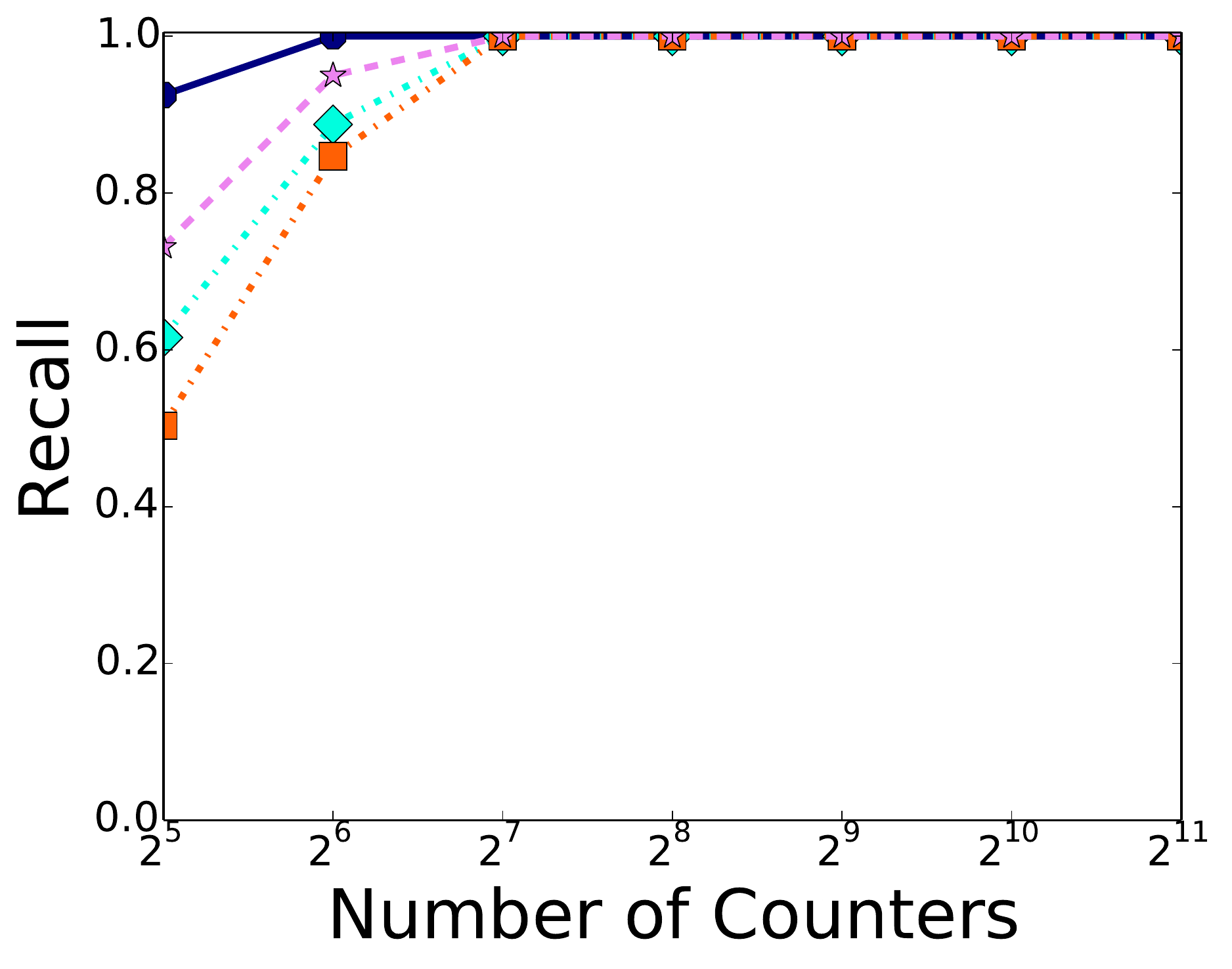}}
		
	\end{tabular}
	\caption{\label{fig:top32}The recall for finding the top-32 elements vs. number of counters.}
\end{figure*}

\subsection{On Arrival Evaluation}\label{sec:eval-on-arrival}
We begin our evaluation with the frequency estimation problem.
In this section, each data point was generated by averaging 10 disjoint batches of 1 million packets each, with the exception of YouTube, which is averaged over two 300,000 batches due to the small size of that trace.

Figure~\ref{fig:MSE} shows the MSE obtained by the different schemes when equipped with an increasing number of counters.
We experimented with different associativity levels to conclude that 16W-\PSS{} behaves almost as good as (the fully associative) \PSS{}.
\ifdefined\TENPAGES
Additional details appear in the full version~\cite{full-version}.
\fi
\ifdefined\EXTENDED
The experiment, appearing in Appendix~\ref{apx:assoc}, shows that while accuracy does improve as associativity grows, there is only little gain by increasing the associativity beyond $16$.
\fi
\footnote{In modern CPUs, caches have at least 8 ways and 64B lines; thus, each line can accommodate 2 entries, yielding a 16-way structure for our purposes.}
This remains true under all of our tested workloads.

Figure~\ref{fig:CaidaMSE} illustrates our results on the CAIDA packet trace.
For the entire range, \PSS{} and 16W-\PSS{} offer significantly lower error than the alternatives. Among the alternatives, none seems to be superior to the rest, as CS, CMS, FR and SS all have some settings where they are more accurate than the~rest.

Figure~\ref{fig:UCLAMSE} describes the results on the UCLA packet trace.
In this trace, \PSS{} is the leader for small memory configurations while for 512 counters and onwards SS is slightly better.
We believe this is due to the very high skewness of the trace, meaning that the $512$ most frequent elements already consist a significant fraction of the stream and therefore our randomization approach is not needed. 

The results for the YouTube traces are illustrated in Figure~\ref{fig:YouTubeMSE}.
In this trace, \PSS{} and 16W-\PSS{} are more accurate than the alternatives.
Looking only at the alternatives, it is unclear which is the leading among them.
However, they all require more than x4 the space to match the accuracy of \PSS{}.

\paragraph{Synthetic Traces}
Synthetic Zipf traces provide us with better insight on the conditions where \PSS{} works best.
The least skewed shown distribution is Zipf 0.6 in Figure~\ref{fig:Zipf0.6MSE} and the most skewed distribution is Zipf 1.5 in Figure~\ref{fig:Zipf1.5MSE}.
It appears that \PSS{} performs very well in all these distributions while the alternatives only perform well when the distribution is skewed enough. Figure~\ref{fig:Zipf0.6MSE} shows that for Zipf 0.6, SS with 2048 counters obtains worse MSE than \PSS{} with 32 counters!
In Figure~\ref{fig:Zipf0.8MSE}, we see that 2048 counter SS is about as accurate as a 128 counters \PSS{}.
Figure~\ref{fig:Zipf1.0MSE} exhibits that for Zipf 1, \PSS{} with 256 counters has similar accuracy as SS with 2048 counters.
The trend continues until in Figure~\ref{fig:Zipf1.5MSE} the accuracy of \PSS{} with 1024 counters is similar to SS with 2048.
For the entire range, \PSS{} requires significantly less space.



\subsection{Top-k Identification}\label{sec:eval-top-k}
\ifdefined \NINEPAGES
\else
\begin{figure*}[t!]
	\begin{tabular}{ccc}
		\subfloat[CAIDA]{\label{top512Caida}\includegraphics[width = \matrixCellWidth]
			{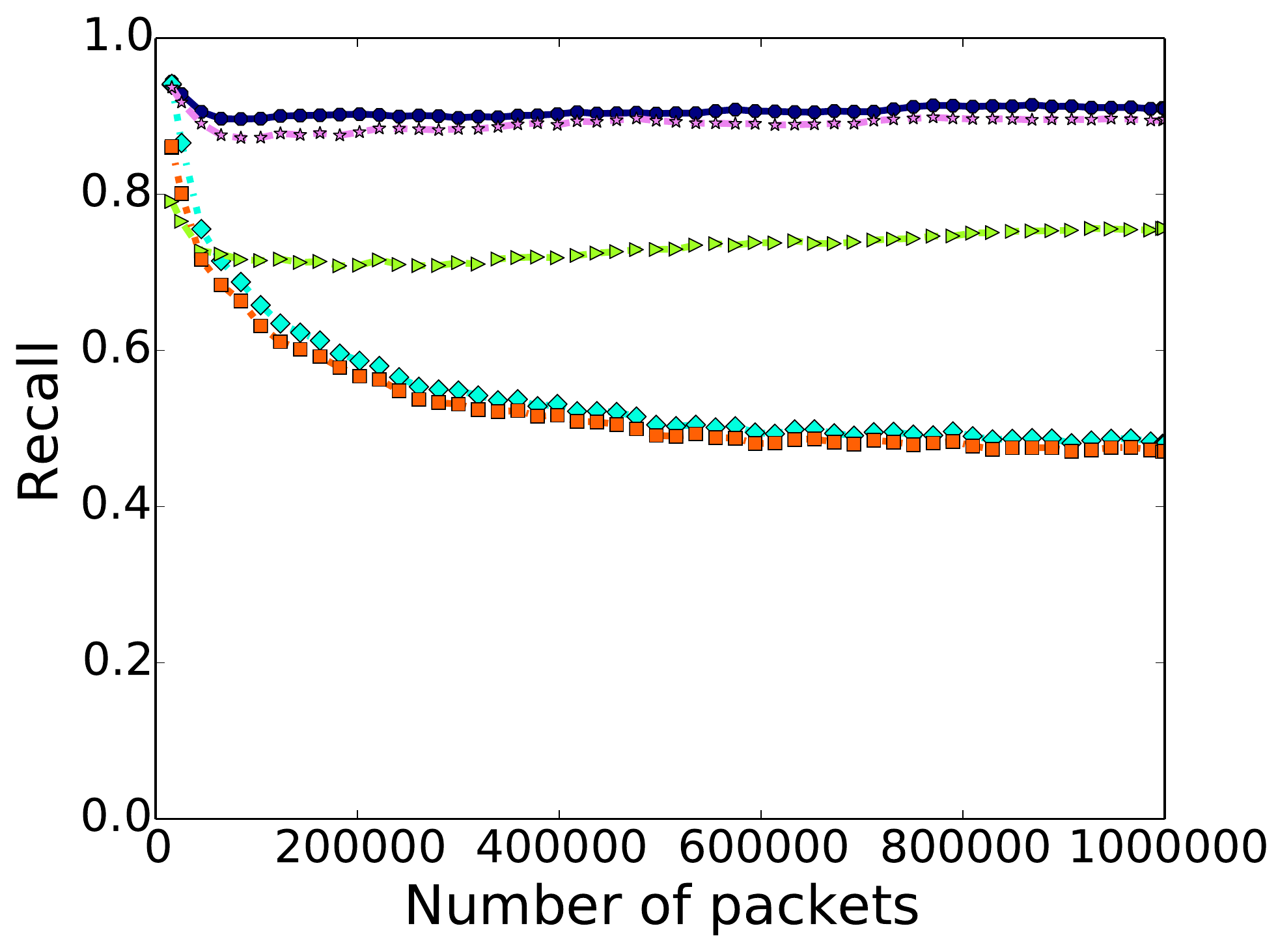}} &
		\subfloat[UCLA]{\label{top512UCLA}\includegraphics[width = \matrixCellWidth]
			{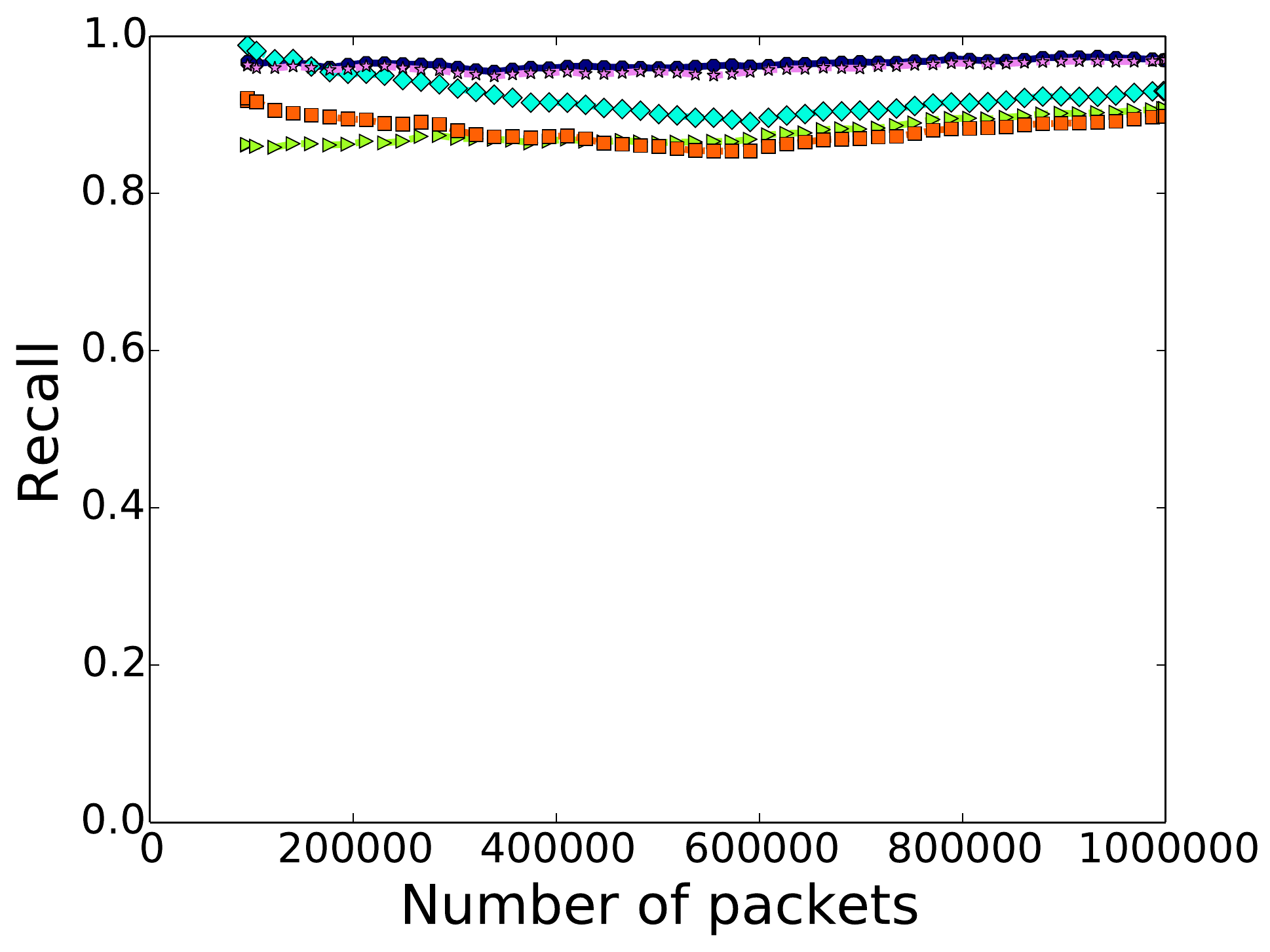}} &
		\subfloat[YouTube]{\label{top512YouTube}\includegraphics[width = \matrixCellWidth]
			{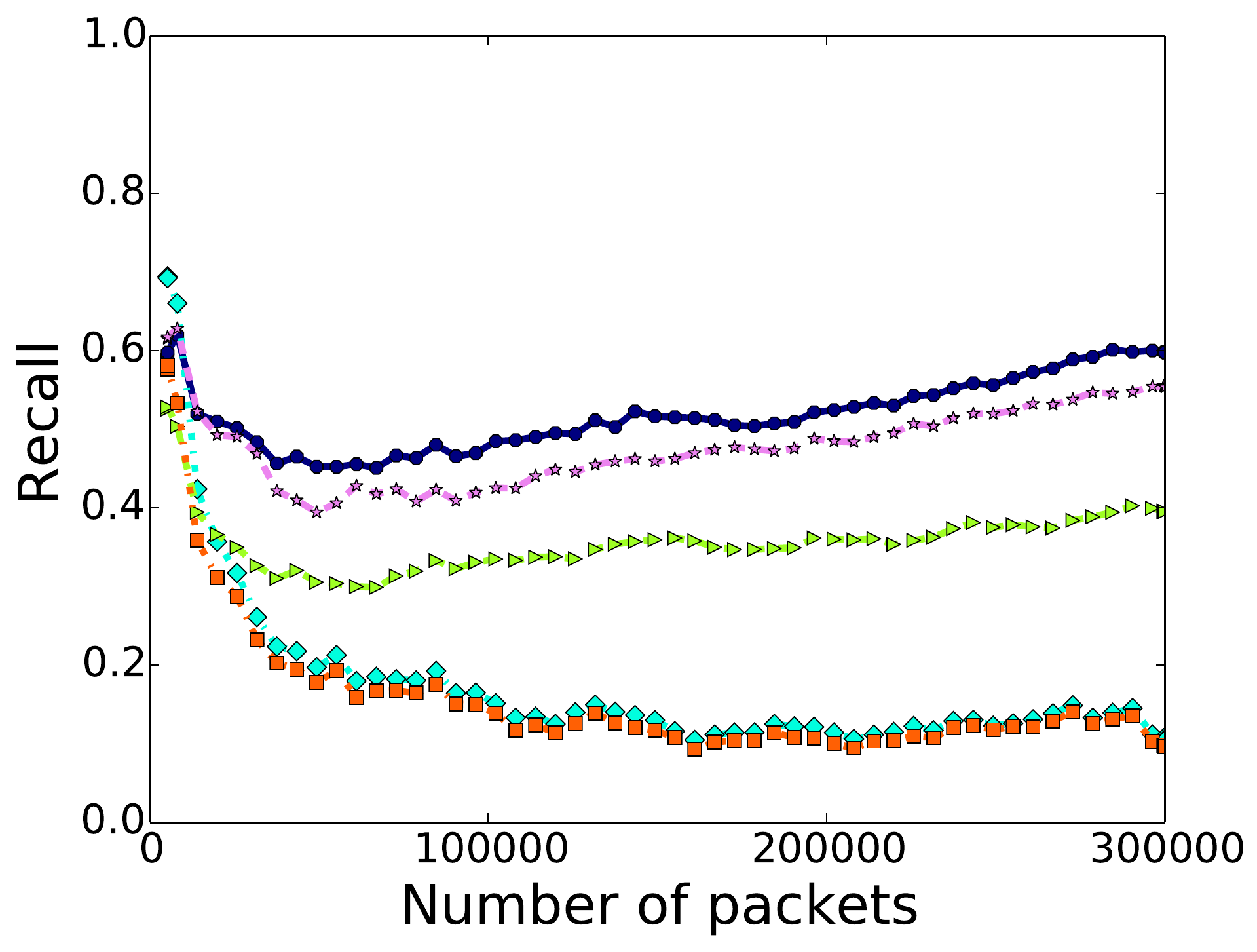}}\\
		\subfloat[Zipf0.6]{\label{top512Zipf06}\includegraphics[width = \matrixCellWidth]
			{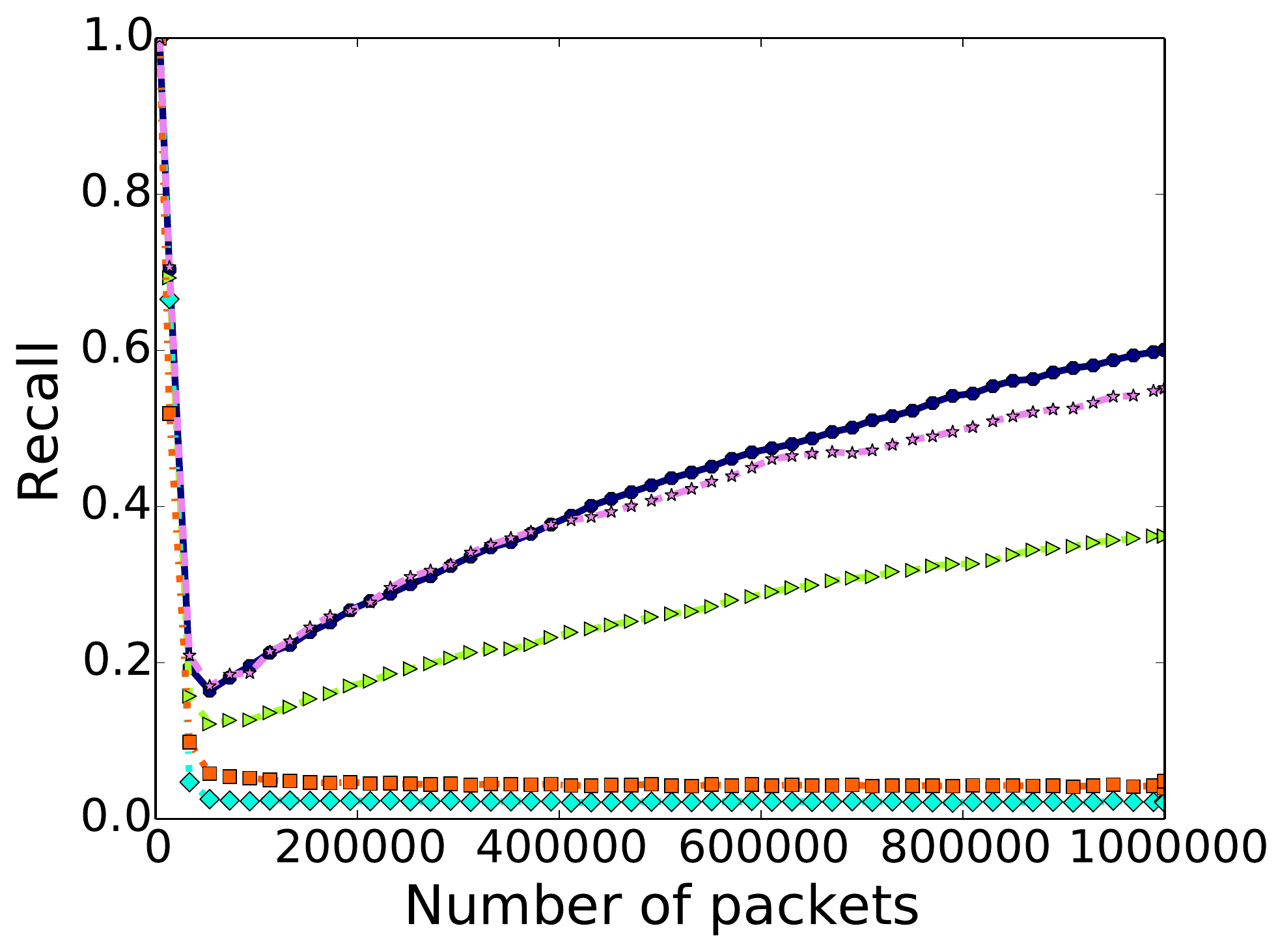}} &
		\hspace*{0.05in}
		\subfloat[Legend]{\includegraphics[width=4.7cm,height=4.2cm]
			{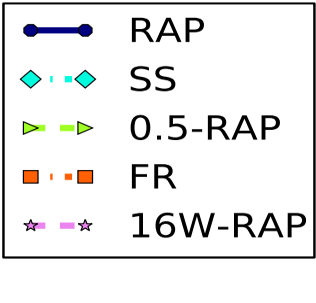}}&		
		\subfloat[Zipf0.8]{\label{top512Zipf08}\includegraphics[width = \matrixCellWidth]
			{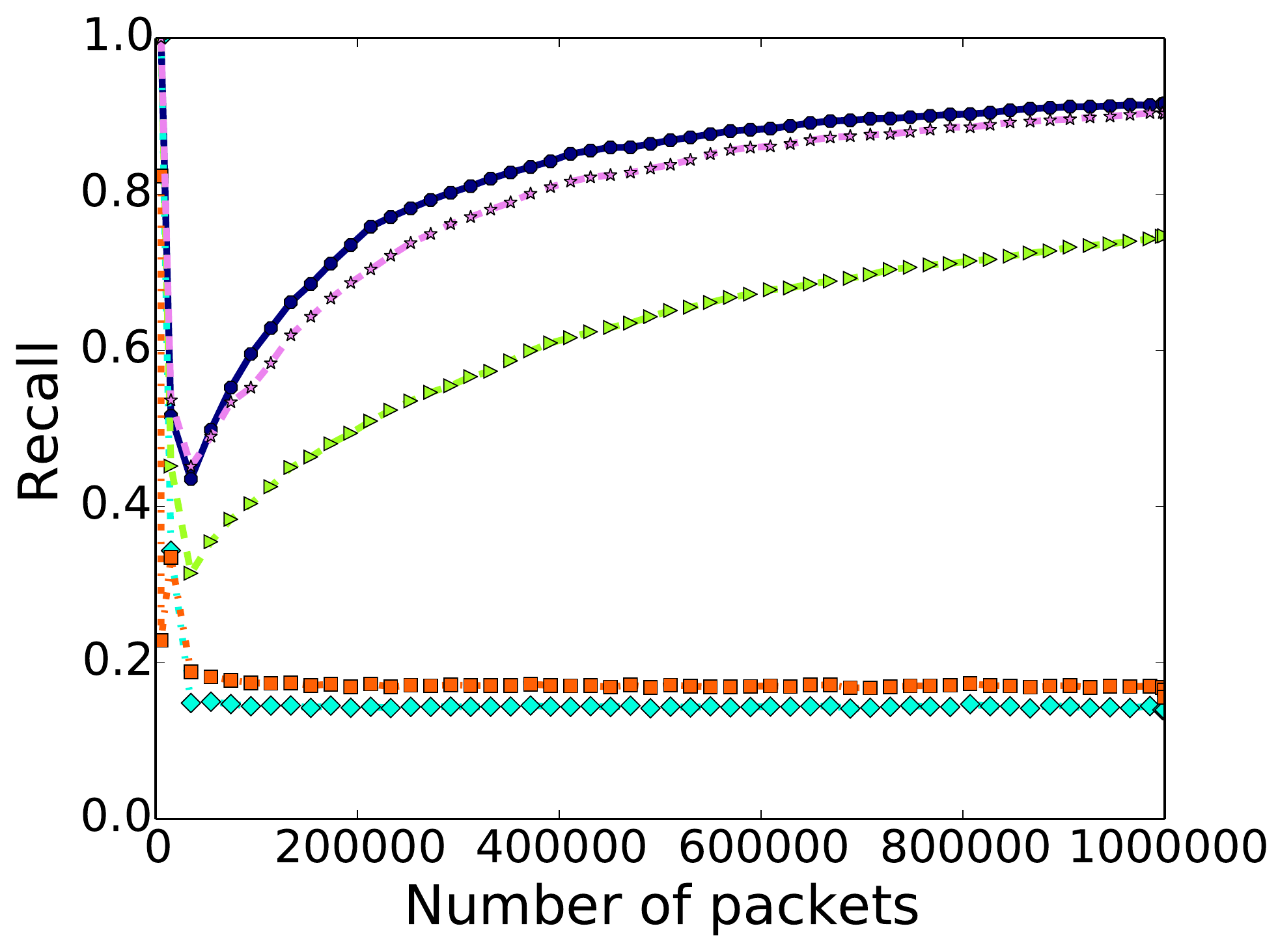}} \\
		\subfloat[Zipf1.0]{\label{top512Zipf10}\includegraphics[width = \matrixCellWidth]
			{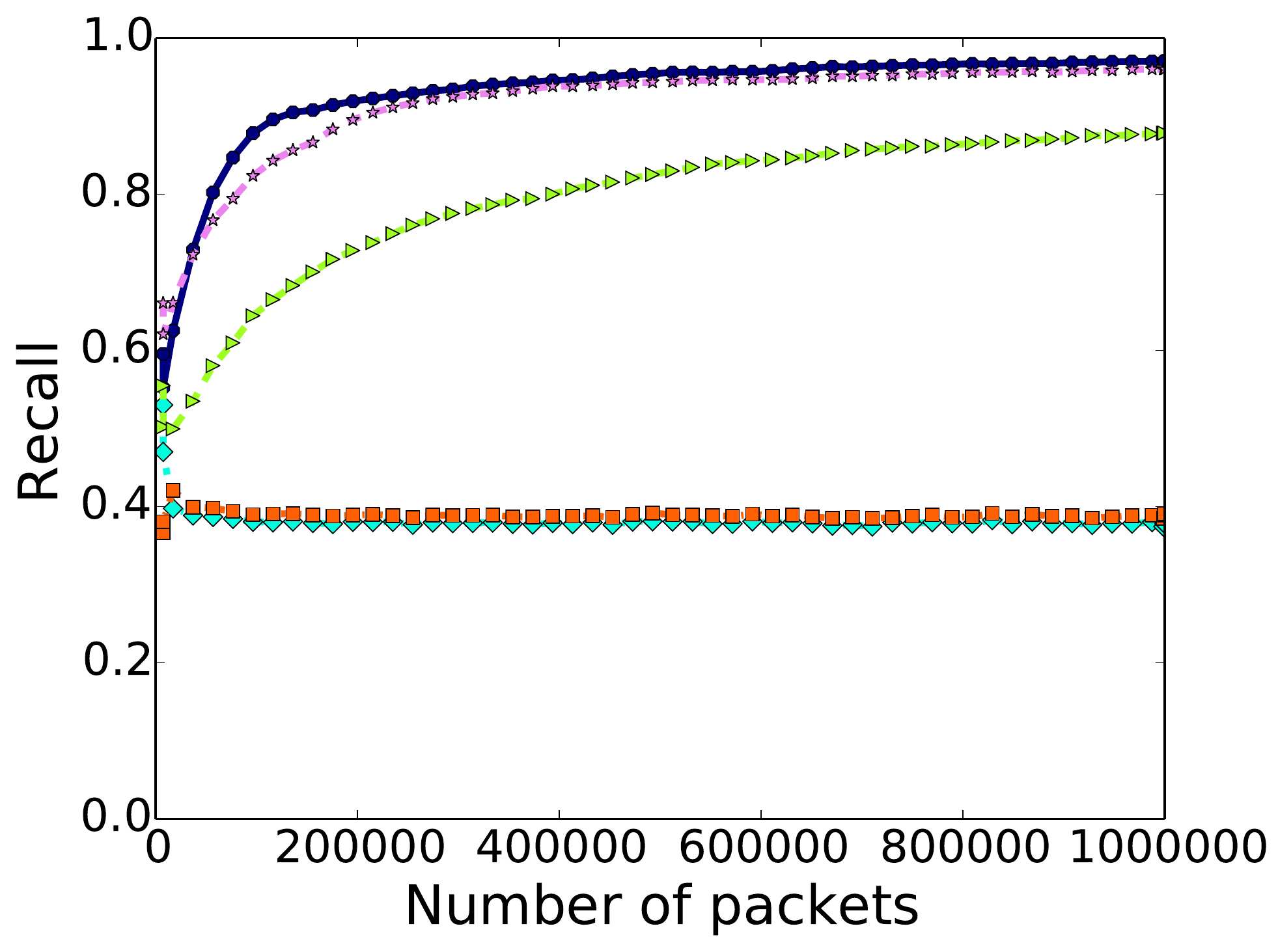}} &
		\subfloat[Zipf1.2]{\label{top512Zipf12}\includegraphics[width = \matrixCellWidth]
			{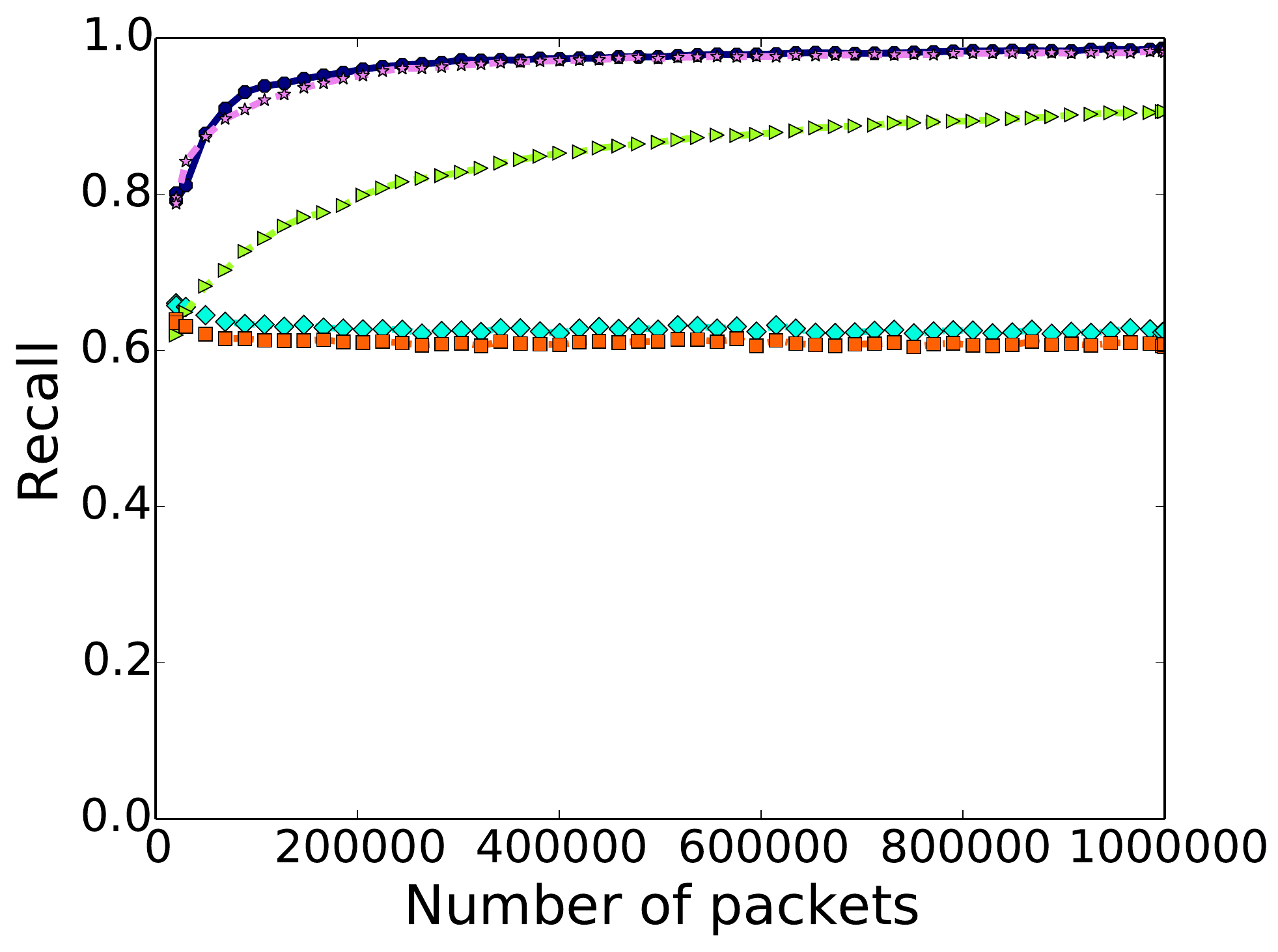}} &
		\subfloat[Zipf1.5]{\label{top512Zipf15}\includegraphics[width = \matrixCellWidth]
			{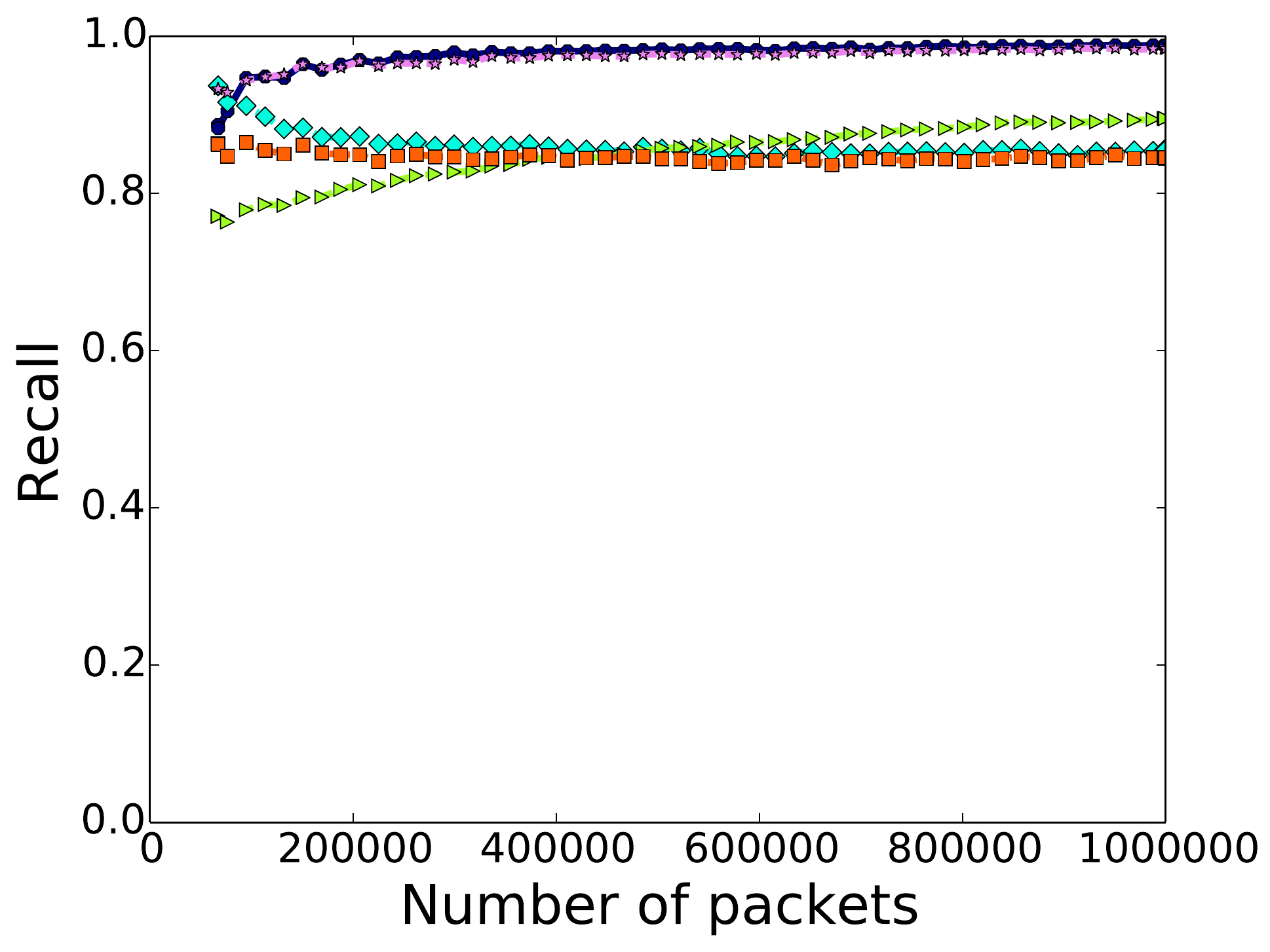}}
	\end{tabular}
	\caption{\label{top512}The average recall achieved for identifying the top-$512$ using 1024 counters, compared to the size of the stream. }
\end{figure*}
\fi

\begin{figure*}[t]
	\begin{tabular}{ccc}
		\subfloat[CAIDA]{\label{top512tradeoffCaida}\includegraphics[width = \matrixCellWidth]
			{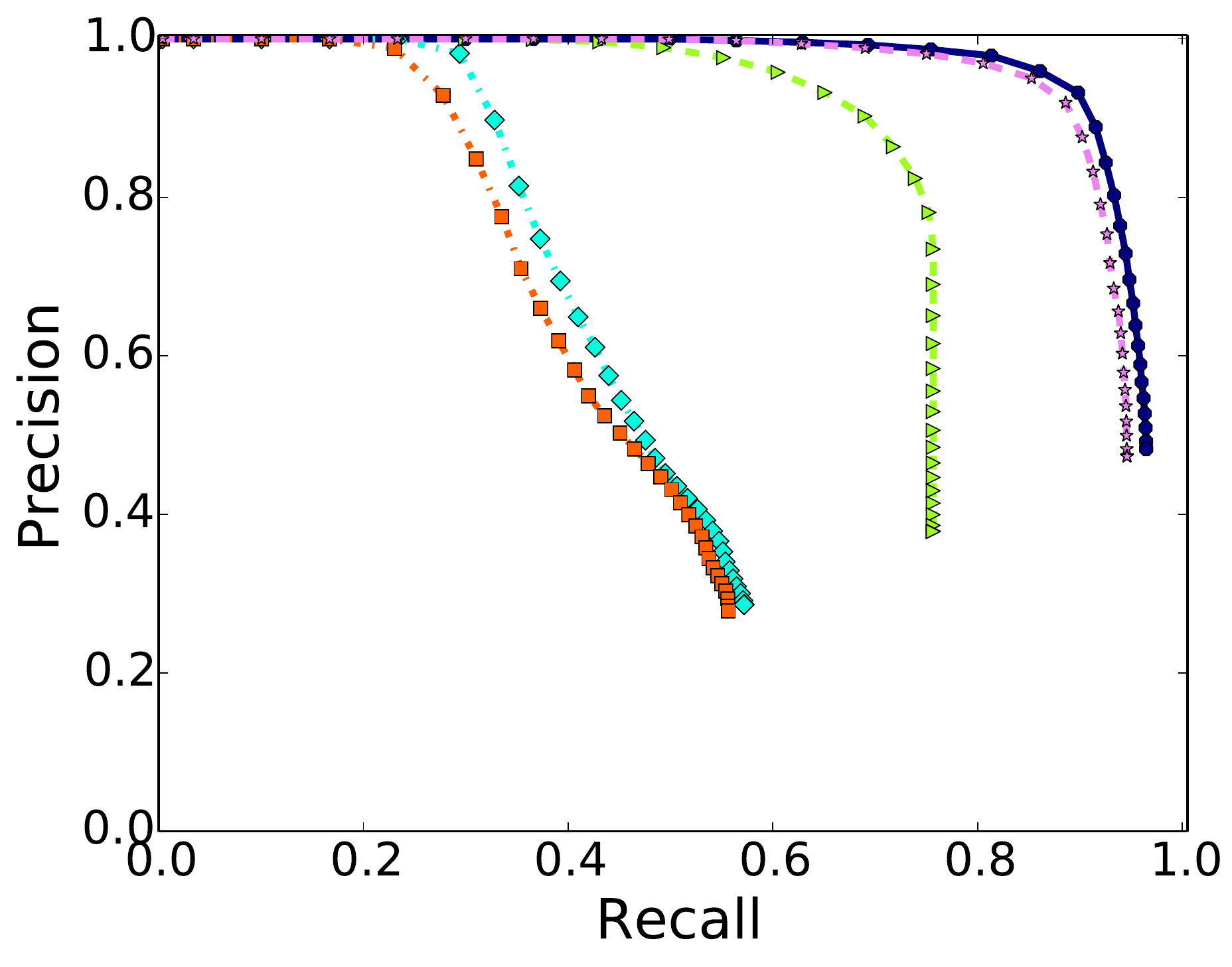}} &
		\subfloat[UCLA]{\label{top512tradeoffUCLA}\includegraphics[width = \matrixCellWidth]
			{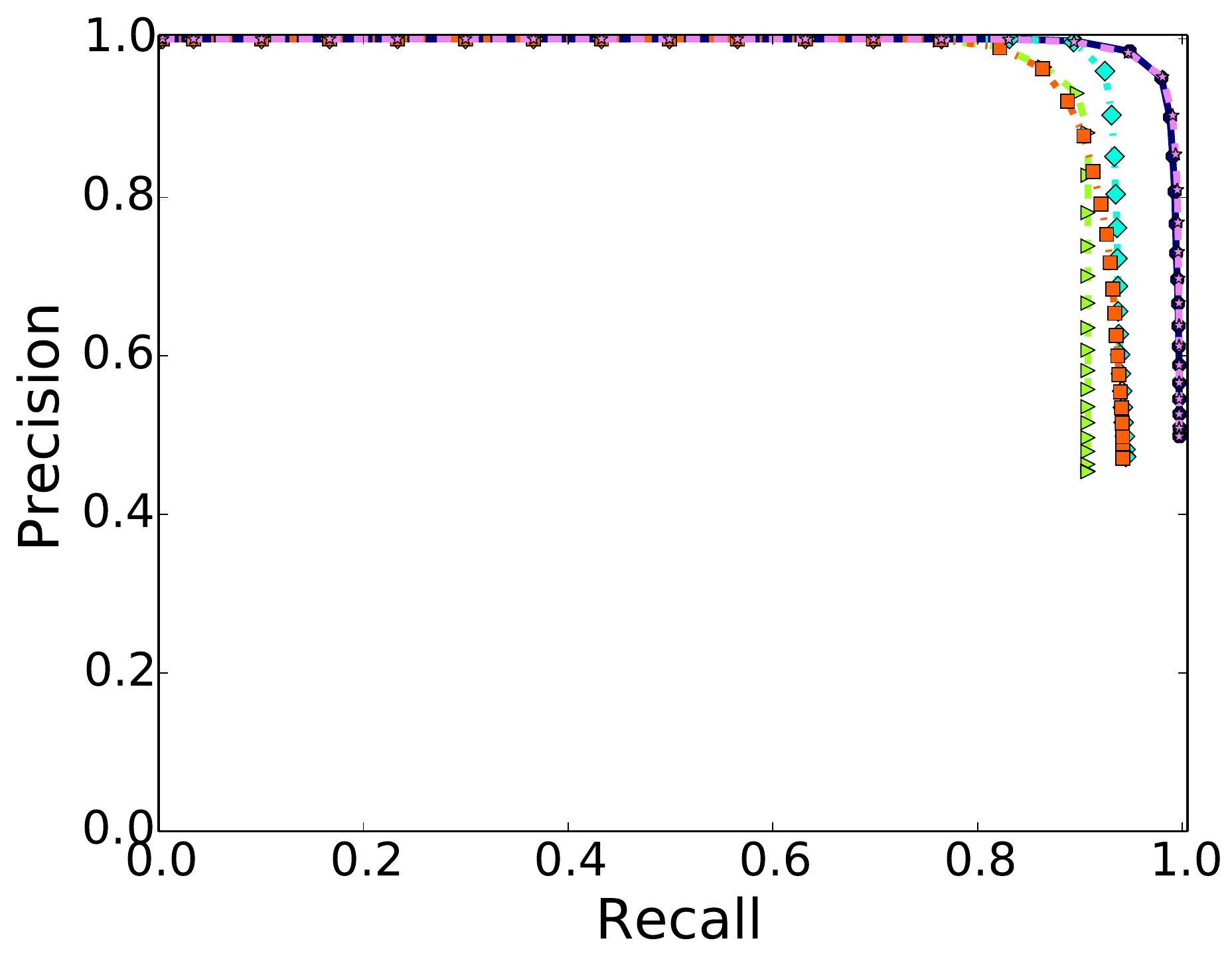}} &
		\subfloat[YouTube]{\label{top512tradeoffYouTube}\includegraphics[width = \matrixCellWidth]
			{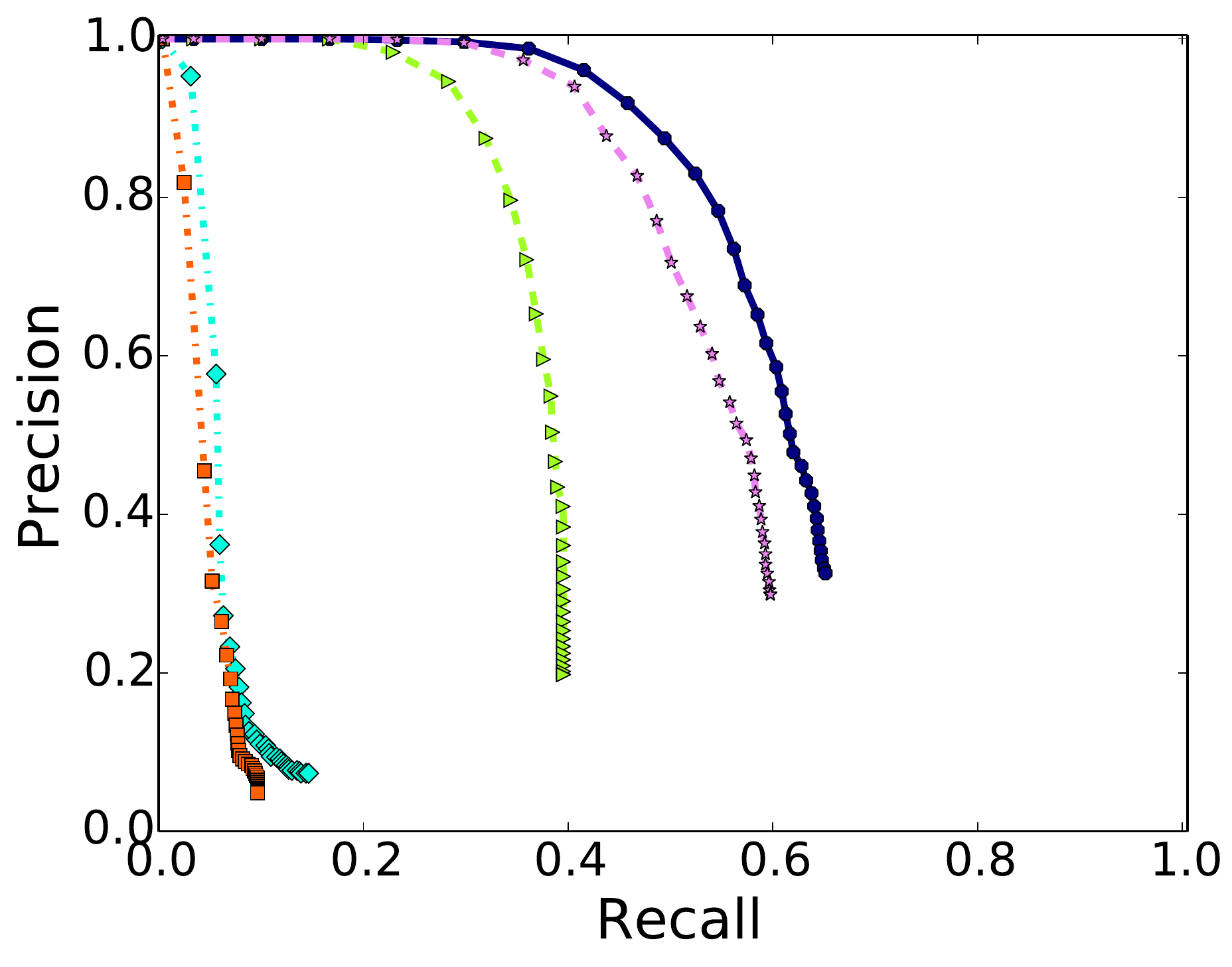}}\\
		\subfloat[Zipf0.6]{\label{top512tradeoffZipf06}\includegraphics[width = \matrixCellWidth]
			{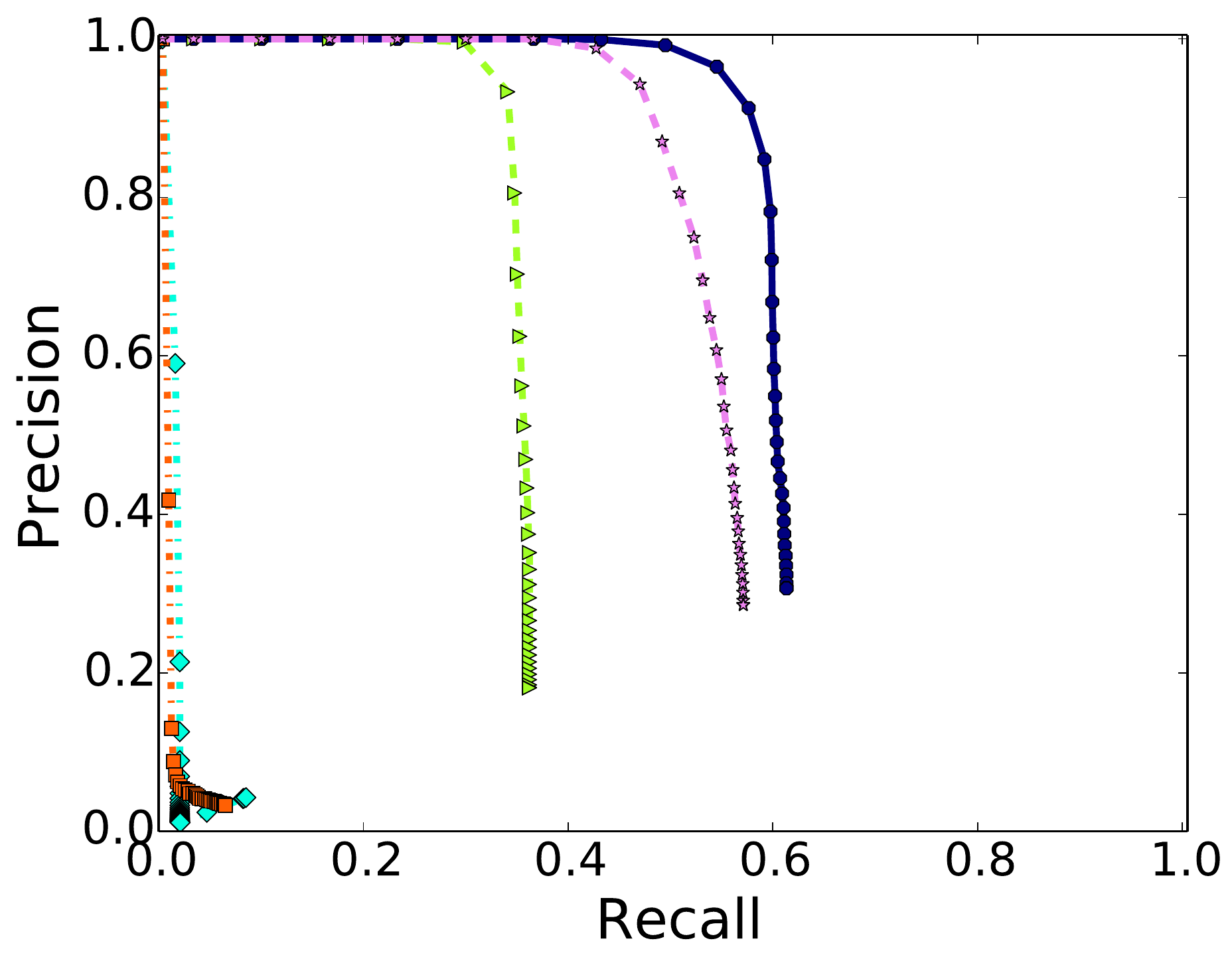}} &
		\hspace*{0.1in}
		\subfloat[Legend]{\includegraphics[width = 5.0cm, height=4.45cm]
			{TopKLegend.png}}&		
		\subfloat[Zipf0.8]{\label{top512tradeoffZipf08}\includegraphics[width = \matrixCellWidth]
			{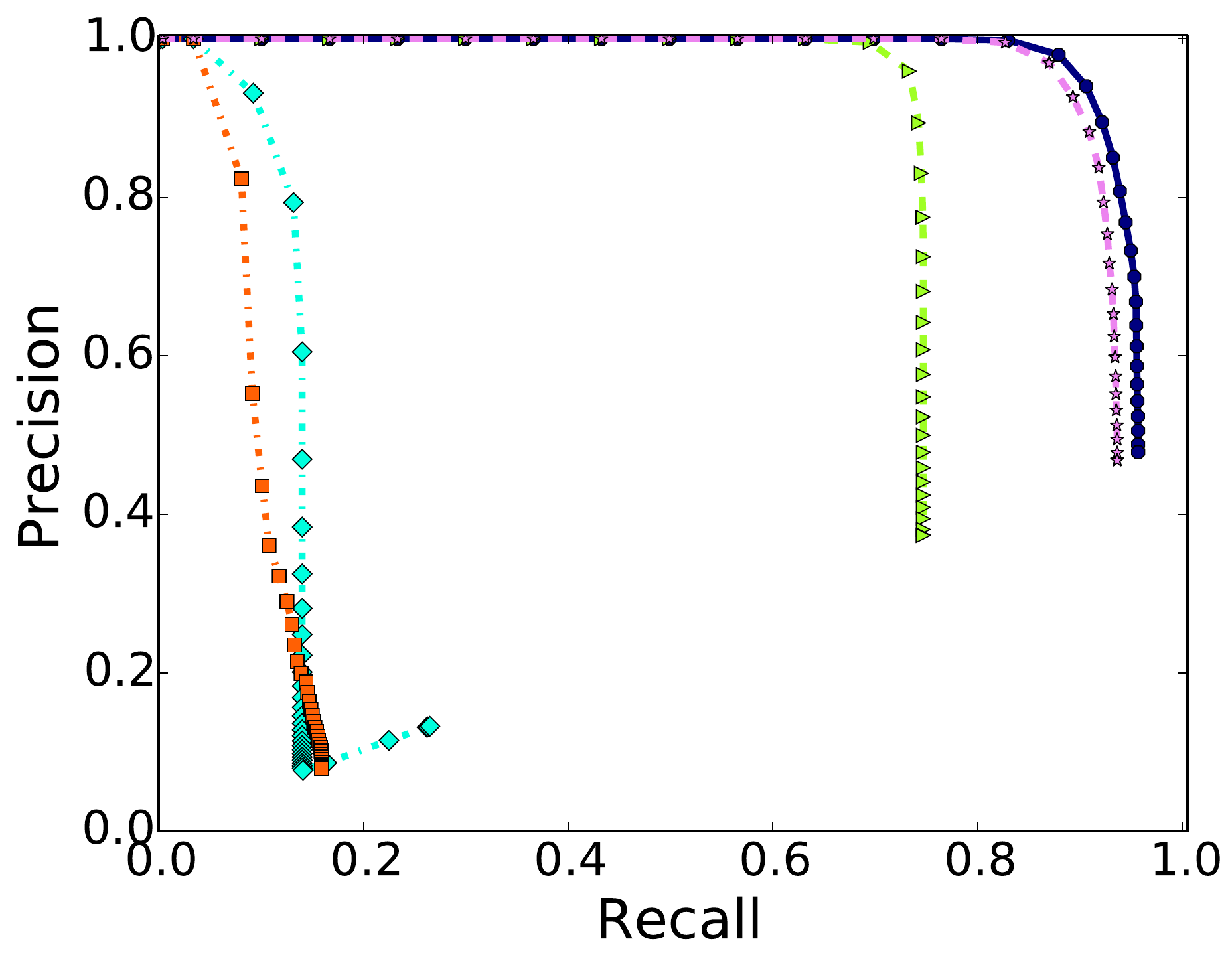}} \\
		\subfloat[Zipf1.0]{\label{top512tradeoffZipf10}\includegraphics[width = \matrixCellWidth]
			{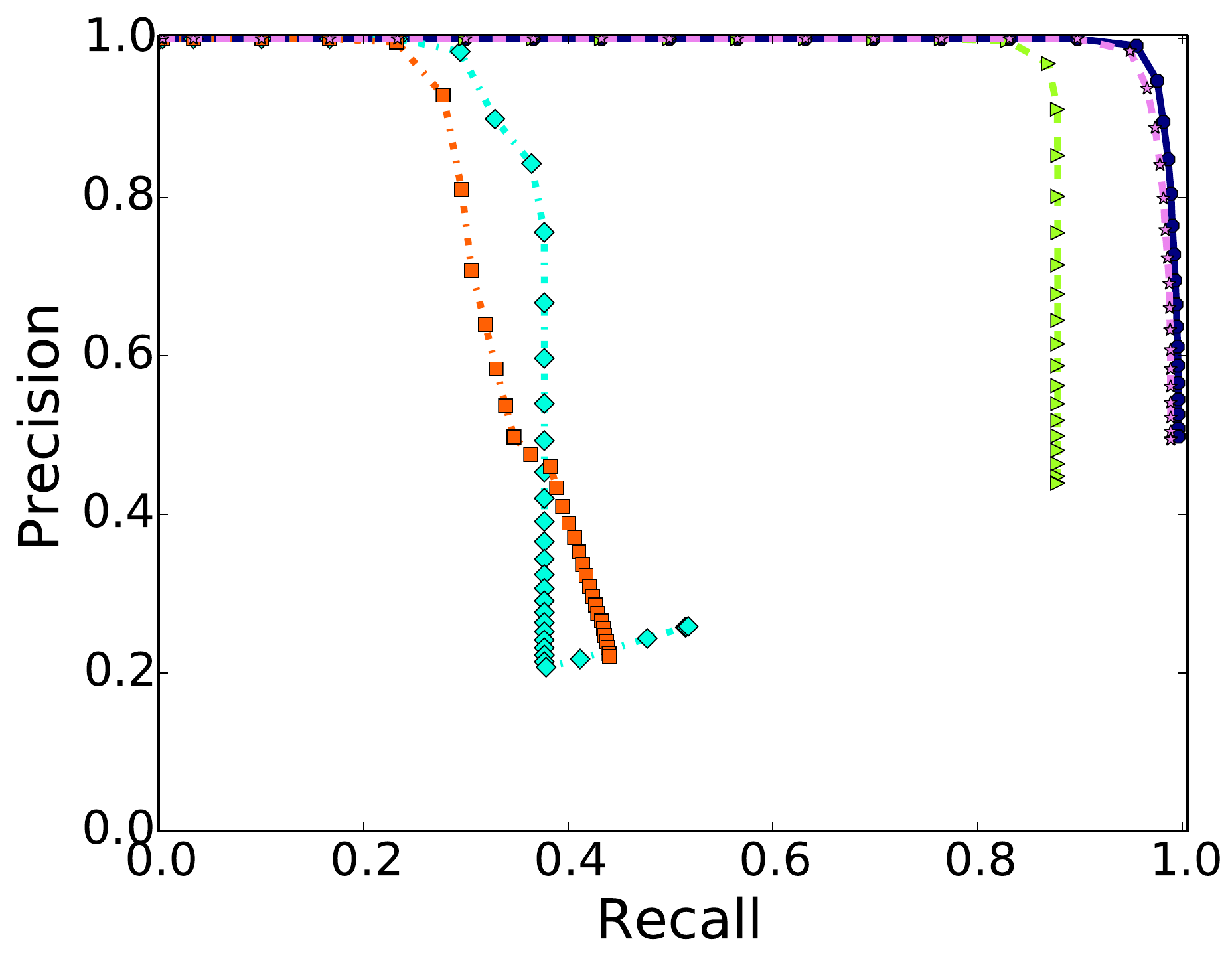}} &
		\subfloat[Zipf1.2]{\label{top512tradeoffZipf12}\includegraphics[width = \matrixCellWidth]
			{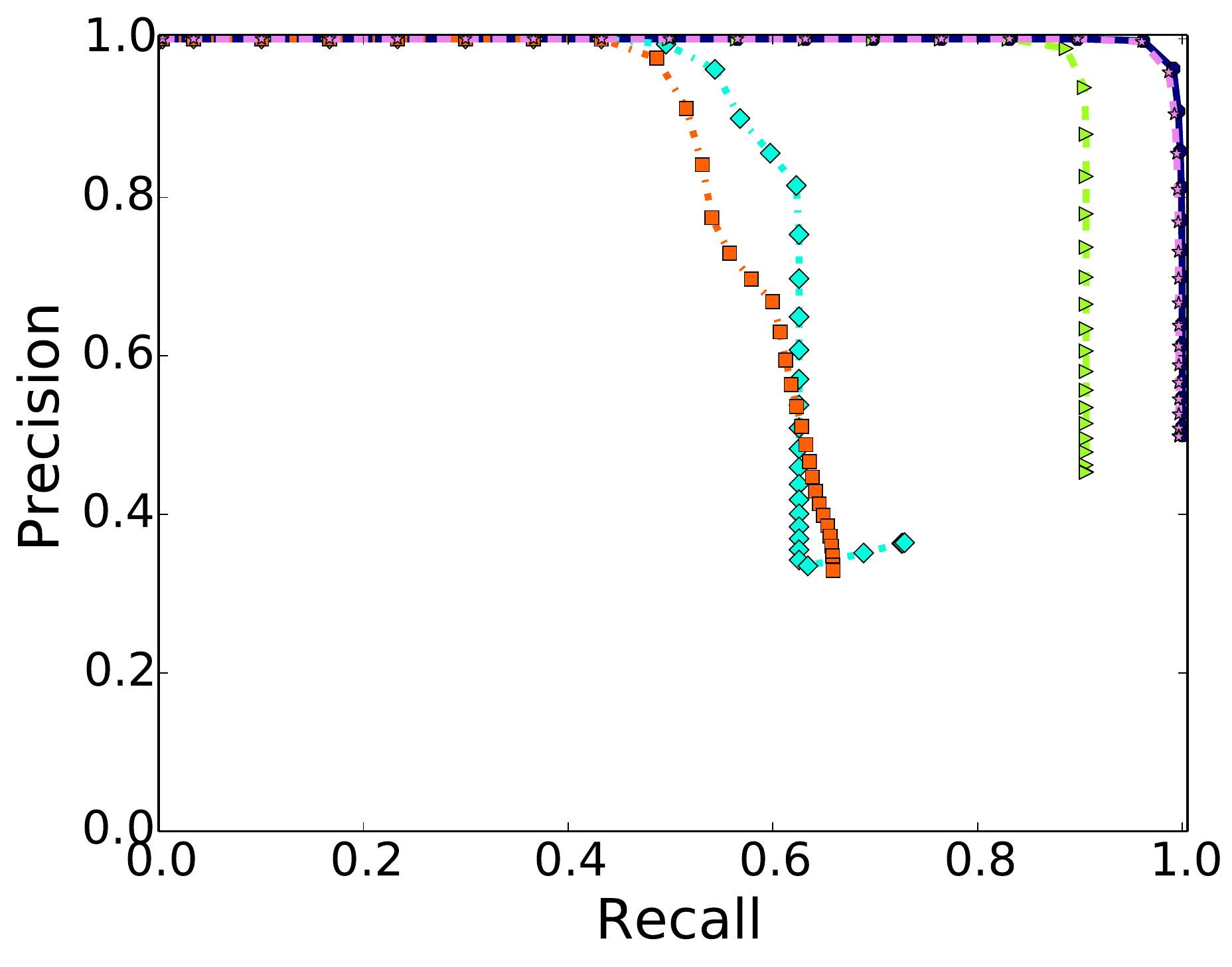}} &
		\subfloat[Zipf1.5]{\label{top512tradeoffZipf15}\includegraphics[width = \matrixCellWidth]
			{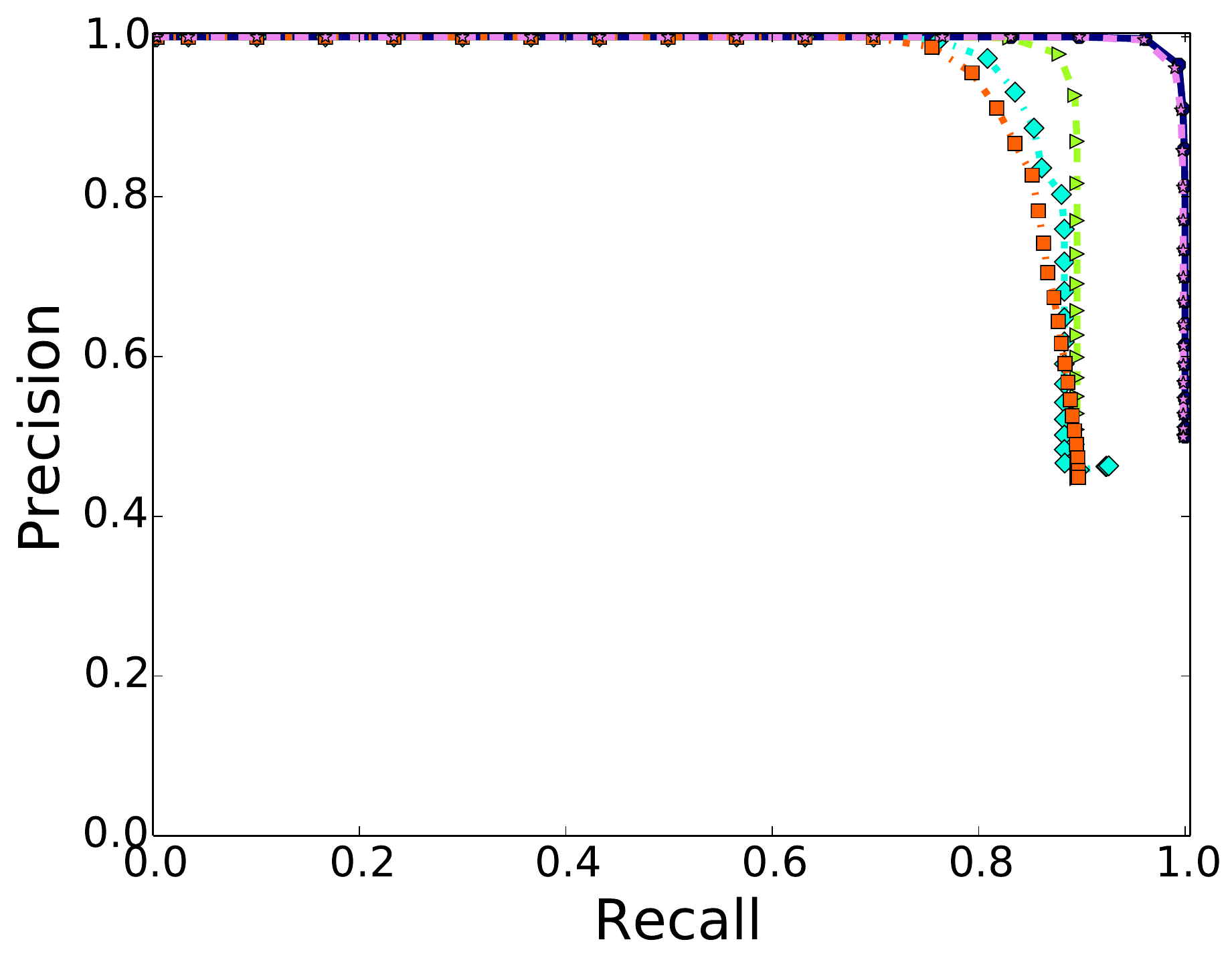}}
	\end{tabular}
	\caption{\label{top512tradeoff}The Precision-Recall curves for identifying the top-$512$ with 1024 counters. }
\end{figure*}

Since, CS and CMS do not solve the top-$k$ problem, we only compare \PSS{} and 16W-\PSS{} to SS and FR.
In some of the figures we also gave \PSS{} half the number of counters as the rest.
That configuration is marked as 0.5-\PSS{}.

\paragraph{Top-32}
First, we consider identifying the top-$32$ flows.
We measure the obtained recall for a given number of counters.
Our results, summarized in Figure~\ref{fig:top32}, demonstrating that 16W-\PSS{} is almost as accurate as \PSS{} in all workloads.

Figure~\ref{top32Caida} presents results for CAIDA.
As shown, \PSS{} and 16W-\PSS{} achieve near perfect recall with 128 counters.
In contrast, FR and SS require 1024 counters for the same recall.
Results for UCLA are in Figure~\ref{top32UCLA}.
As can  be seen, \PSS{} and 16W-\PSS{} reach near optimal recall with 128 counters while FR and SS require 256 counters.

Figure~\ref{top32YouTube} shows that in the YouTube workload, 1024 counters are not enough for SS and FR, and they reach less than 0.5 recall with 1024 counters.
In comparison, \PSS{} and 16W-\PSS{} reach near perfect recall with 512 counters.

We now use synthetic Zipf distributions to characterize the performance of \PSS{} and 16W-\PSS{}.
Figure~\ref{top32Zipf06} shows that for the mildly skewed Zipf 0.6, both \PSS{} and 16W-\PSS{} achieve near optimal recall with 256 counters, while for the alternatives even 2048 are not enough.
In the more skewed Zipf 0.8 distribution, Figure~\ref{top32Zipf08} shows that \PSS{} requires 64 counters, and 16W-\PSS{} requires 128 to achieve near perfect recall. In contrast, SS and FR require 1024 counters~to~do~the~same.

For Zipf 1.0 distribution, Figure~\ref{fig:Zipf1.0MSE} shows that \PSS{} and 16W-\PSS{} still require 64 and 128 counters to achieve near optimal recall.
FR and SS now require 512 counters to do the same.
In Zipf 1.2 distribution, Figure~\ref{fig:Zipf1.2MSE} shows that \PSS{} and 16W-\PSS{} continue to require 64 and 128 counters, while SS and FR now require 256 counters to achieve near optimal recall.
Finally, for the very skewed Zipf 1.5, Figure~\ref{fig:Zipf1.5MSE} shows that \PSS{} and 16W-\PSS{} still require 64 and 128 counters to achieve near optimal recall while SS and FR now require 128 counters.
To sum it up, \PSS{} shows a reduction of 2x-16x that depending on workload skewness.

\paragraph{Convergence Speed}
Since \PSS{} is a randomized algorithm, its convergence speed is as important as its performance for large streams. In order to evaluate the convergence speeds of the different algorithms, we consider the problem of identifying the top-$512$ flows with $1024$ counters.
\ifdefined \NINEPAGES
Our results, deferred to the full version~\cite{full-version} due to lack of space, show that \PSS{} achieves superior recall starting at very early stages of the streams, even when allocated only with $512$ counters.
\else
Note that the top-512 items also change during the trace and the algorithms are required to constantly adjust.
To quantify the improvement of \PSS{} in these settings we use 0.5-\PSS{}, which attempts to identify the top-512 with mere 512 counters.
Our results are illustrated in Figure~\ref{top512}.
As can be seen, \PSS{} and 16W-\PSS{} offer similar recall rates in all the tested workloads.

Figure~\ref{top512Caida} exhibits the results for CAIDA workload.
As shown, \PSS{}, 16W-\PSS{} and even 0.5-\PSS{} are significantly better than SS and FR.
In Figure~\ref{top512UCLA}, we see that in the UCLA workload, \PSS{} and 16W-\PSS{} achieve around 97\% recall while SS and FR achieve less than 92\%.
Interestingly, in this trace, even 0.5-\PSS{} is above 90\% recall with just 512 counters.

Figure~\ref{top512YouTube} shows results for the difficult YouTube trace.
Yet, \PSS{} and 16W-\PSS{} achieve above 50\% recall, while SS and FR are constantly under 20\% recall.
The improvement is more than x2, as 0.5 \PSS{} is significantly better than FR and SS.

We now use synthetic workloads to identify the performance envelope of \PSS{} and 16W-\PSS{}.
We can observe that as the workload becomes more skewed, all the algorithms improve but \PSS{} and 16W-\PSS{} remain considerably better than the alternatives.
For the mildly skewed Zipf 0.6 distribution, Figure~\ref{top512Zipf06} shows that \PSS{} and 16W-\PSS{} achieve over 50\% recall while FR and SS are under 10\%.
In the slightly more skewed Zipf 0.8, Figure~\ref{top512Zipf08} exhibits that \PSS{} and 16W-\PSS{} yield over 90\% recall, while the alternatives are slightly below 20\%. Similarly, in Zipf 1.0, Figure~\ref{top512Zipf10} shows that  \PSS{} and 16W \PSS{} are  close to 100\% recall during the entire trace while SS and FR are only at 40\% recall.
In  Zipf 1.2, Figure~\ref{top512Zipf12} shows that SS and FR reach 60\% while \PSS{} and 16W-\PSS{} remain at 100\%.
Finally for Zipf 1.5, Figure~\ref{top512Zipf15} shows that SS and FR are approaching 85\% recall.
Even in this workload, 0.5-\PSS{} eventually achieves higher recall.
In all these cases, 0.5-\PSS{} achieves higher recall than SS and FR and the space reduction is at least x2.
\fi
\paragraph{Precision and recall trade-off}
While recall is an important measure of the algorithms success, when more than $k$ items are reported as suspected top-$k$, the precision is compromised.
Returning all the items yields the maximum recall, but also poor precision when many items are monitored. The ideal behavior of a top-$k$ algorithm is 100\% precision and 100\% recall (the top right position in the graphs).
Figure~\ref{top512tradeoff} illustrates the precision and recall trade-off.
For each recall level, we measure how many elements were returned to achieve it, and compute the corresponding precision. 

Figures~\ref{top512tradeoffCaida} shows our results for the CAIDA workload.
\PSS{} and 16W-\PSS{} perform the best on this workload as they can provide 80\% recall with near 100\% precision, or $\approx 90\%$ recall with $\approx 90\%$ precision.
At the same time, their maximum recall is close to 100\%, but returning all items drops precision to 50\%.
SS and FR perform worse as their maximum recall is 60\% and they can only ensure 30\% recall with high precision.

Figure~\ref{top512tradeoffUCLA} shows results for the UCLA workload.
As can be seen, \PSS{} and 16W-\PSS{} offer the best precision and recall trade-off, although in this case SS and FR also perform well.

Figure~\ref{top512tradeoffYouTube} shows results for the YouTube workload.
As can be observed, this trace is significantly more difficult.
\PSS{} and 16W-\PSS{} perform better than the rest.
Their maximal recall is slightly over 60\% but over 50\% recall is possible with very high precision.
SS and FR achieve poor recall and precision.

We now look into what happens in synthetic traces.
For Zipf 0.6 distribution, Figure~\ref{top512tradeoffZipf06} shows that \PSS{} and 16W-\PSS{} can achieve over 50\% recall with good accuracy, while SS and FR achieve less than 10\% recall.
In the slightly more skewed Zipf 0.8, Figure~\ref{top512tradeoffZipf08} shows that \PSS{} and 16W \PSS{} perform very well
.
They can offer $\approx 90\%$ recall and precision.
SS and FR improve slightly, both offer bad accuracy but the maximum recall of SS is 30\% and FR is slightly less than 20\%.
The non-monotone rise in the SS curve is explained by coincidentally having higher rates of top-$512$ elements in the lower estimated frequency counters.
For Zipf 1.0, 1.2 and 1.5, Figure~\ref{top512tradeoffZipf10}, Figure~\ref{top512tradeoffZipf12} and Figure~\ref{top512tradeoffZipf15} show that while \PSS{} and 16W-\PSS{} provides near optimal precision and recall, SS and FR gradually improve as the skew increases.
They achieve $\approx 50\%$ recall at Zipf 1.0, $\approx 70\%$ recall at Zipf 1.2 and slightly over 80\% recall with high precision with Zipf 1.5.
However, in all these cases, 0.5 \PSS{} is better than FR and SS and thus the space reduction is more than x2 across the entire range.

\section{Conclusion and Discussion}
In this paper, we have presented \pss{} (\PSS{}), a novel algorithm for approximate frequency estimation and top-$k$ identification.
We have also introduced $d$-Way \pss{} ($d$W-\PSS{}), a hardware friendly variant of \PSS{}.
We have extensively evaluated \PSS{} and $d$W-\PSS{} for both problems under two packet traces and a YouTube video trace as well as multiple synthetic Zipf traces.
These experiments exhibited significant reductions in the memory requirements of \PSS{} and $d$W-\PSS{} compared to state of the art alternatives for obtaining the same error.
In top-$k$, we showed that our algorithms achieve superior precision/recall than the alternatives in any tested situation.
In the case of frequency estimation, the only exception is the highly skewed UCLA trace~\cite{UCLA}, and even there, it is only when all schemes are allocated a very large number of counters compared to the trace.
Notice that for this case, all algorithms are precise since with many counters on such a skewed trace, the problem becomes almost trivial.
In contrast, \PSS{} and $d$W-\PSS{} are the only algorithms that performed well on heavy-tailed distributions, which are common in Internet~services.

Another benefit of \PSS{} and $d$W-\PSS{} is that they incur fewer updates to memory since they do not replace a counter with each untracked item.
This is especially true in heavy-tailed workloads.
We have not included the evaluation and quantification of this property for lack of space.

Since $d$W-\PSS{} can be implemented as a simple cache policy, in the future we would like to integrate it into real networking devices.
Interestingly, we believe that $d$W-\PSS{} may also offer benefits for software implementations.
For example, it can probably be parallelized efficiently since each operation only computes the minimum over a small counter~set.



\vspace{-0.15cm}
{ 
	\bibliographystyle{IEEEtran}
	\bibliography{refs}

\begin{thebibliography}{10}
\providecommand{\url}[1]{#1}
\csname url@samestyle\endcsname
\providecommand{\newblock}{\relax}
\providecommand{\bibinfo}[2]{#2}
\providecommand{\BIBentrySTDinterwordspacing}{\spaceskip=0pt\relax}
\providecommand{\BIBentryALTinterwordstretchfactor}{4}
\providecommand{\BIBentryALTinterwordspacing}{\spaceskip=\fontdimen2\font plus
\BIBentryALTinterwordstretchfactor\fontdimen3\font minus
  \fontdimen4\font\relax}
\providecommand{\BIBforeignlanguage}[2]{{%
\expandafter\ifx\csname l@#1\endcsname\relax
\typeout{** WARNING: IEEEtran.bst: No hyphenation pattern has been}%
\typeout{** loaded for the language `#1'. Using the pattern for}%
\typeout{** the default language instead.}%
\else
\language=\csname l@#1\endcsname
\fi
#2}}
\providecommand{\BIBdecl}{\relax}
\BIBdecl

\bibitem{ApproximateFairness}
A.~Kabbani, M.~Alizadeh, M.~Yasuda, R.~Pan, and B.~Prabhakar, ``Af-qcn:
  Approximate fairness with quantized congestion notification for
  multi-tenanted data centers,'' in \emph{IEEE HOTI}, 2010, pp. 58--65.

\bibitem{IntrusionDetection}
B.~Mukherjee, L.~Heberlein, and K.~Levitt, ``Network intrusion detection,''
  \emph{Network, IEEE}, vol.~8, no.~3, pp. 26--41, 1994.

\bibitem{TrafficEngeneering}
T.~Benson, A.~Anand, A.~Akella, and M.~Zhang, ``Microte: Fine grained traffic
  engineering for data centers,'' in \emph{ACM CoNEXT}, 2011, p.~8.

\bibitem{IntrusionDetection2}
P.~Garcia-Teodoro, J.~E. Díaz-Verdejo, G.~Maciá-Fernández, and E.~Vázquez,
  ``Anomaly-based network intrusion detection: Techniques, systems and
  challenges,'' \emph{Computers and Security}, pp. 18--28, 2009.

\bibitem{LoadBalancing}
G.~Dittmann and A.~Herkersdorf, ``Network processor load balancing for
  high-speed links,'' in \emph{SPECTS}, vol. 735, 2002.

\bibitem{TinyLFU}
G.~Einziger and R.~Friedman, ``Tinylfu: A highly efficient cache admission
  policy,'' in \emph{Euromicro {PDP}}, 2014, pp. 146--153.

\bibitem{CounterArray1}
D.~Shah, S.~Iyer, B.~Prabhakar, and N.~McKeown, ``Maintaining statistics
  counters in router line cards.'' \emph{IEEE Micro}, pp. 76--81, 2002.

\bibitem{CounterArray2}
S.~Ramabhadran and G.~Varghese, ``{Efficient implementation of a statistics
  counter architecture},'' \emph{ACM SIGMETRICS}, pp. 261--271, 2003.

\bibitem{ICE-Buckets}
G.~Einziger, B.~Fellman, and Y.~Kassner, ``Independent counter estimation
  buckets,'' in \emph{IEEE INFOCOM}, April 2015, pp. 2560--2568.

\bibitem{CEDAR}
E.~Tsidon, I.~Hanniel, and I.~Keslassy, ``Estimators also need shared values to
  grow together,'' in \emph{IEEE INFOCOM}, 2012, pp. 1889--1897.

\bibitem{DISCO}
C.~Hu, B.~Liu, H.~Zhao, K.~Chen, Y.~Chen, C.~Wu, and Y.~Cheng, ``Disco: Memory
  efficient and accurate flow statistics for network measurement,'' in
  \emph{IEEE ICDCS}, 2010, pp. 665--674.

\bibitem{CounterBraids}
Y.~Lu, A.~Montanari, B.~Prabhakar, S.~Dharmapurikar, and A.~Kabbani, ``Counter
  braids: a novel counter architecture for per-flow measurement,'' in \emph{ACM
  SIGMETRICS}, 2008, pp. 1799--1807.

\bibitem{CUSketch}
C.~Estan and G.~Varghese, ``New directions in traffic measurement and
  accounting,'' \emph{ACM SIGCOMM}, 2002.

\bibitem{CountMinSketch}
G.~Cormode and S.~Muthukrishnan, ``An improved data stream summary: The
  count-min sketch and its applications,'' \emph{J. Algorithms}, vol.~55, pp.
  29--38, 2004.

\bibitem{SpaceSavingIsTheBest}
G.~Cormode and M.~Hadjieleftheriou, ``Finding frequent items in data streams,''
  \emph{VLDB}, vol.~1, no.~2, pp. 1530--1541, Aug. 2008.

\bibitem{SpaceSavingIsTheBest2010}
------, ``Methods for finding frequent items in data streams,'' \emph{J. VLDB},
  vol.~19, no.~1, pp. 3--20, 2010.

\bibitem{LC}
G.~S. Manku and R.~Motwani, ``Approximate frequency counts over data streams,''
  in \emph{Proc. of the Int. Conf. on V.L. Data Bases}, ser. VLDB, 2002.

\bibitem{BatchDecrement}
R.~M. Karp, S.~Shenker, and C.~H. Papadimitriou, ``A simple algorithm for
  finding frequent elements in streams and bags,'' \emph{ACM Trans. Database
  Syst.}, vol.~28, no.~1, Mar. 2003.

\bibitem{SpaceSavings}
A.~Metwally, D.~Agrawal, and A.~E. Abbadi, ``Efficient computation of frequent
  and top-k elements in data streams,'' in \emph{IN ICDT}, 2005.

\bibitem{SpaceSavingIsTheBest2011}
N.~Manerikar and T.~Palpanas, ``Frequent items in streaming data: An
  experimental evaluation of the state-of-the-art,'' \emph{Data Knowl. Eng.},
  pp. 415--430, 2009.

\bibitem{CAIDA}
``The caida ucsd anonymized internet traces 2015 - february 19th.''

\bibitem{UCLA}
``Unpublished, see http://www.lasr.cs.ucla.edu/ddos/traces/.''

\bibitem{youtube}
M.~Zink, K.~Suh, Y.~Gu, and J.~Kurose, ``Watch global, cache local: Youtube
  network traffic at a campus network: measurements and implications,'' in
  \emph{Electronic Imaging 2008}.\hskip 1em plus 0.5em minus 0.4em\relax
  International Society for Optics and Photonics, 2008, pp. 681\,805--681\,805.

\bibitem{PLC}
X.~Dimitropoulos, P.~Hurley, and A.~Kind, ``Probabilistic lossy counting: An
  efficient algorithm for finding heavy hitters,'' \emph{ACM SIGCOMM}, vol.~38,
  no.~1, Jan. 2008.

\bibitem{frequent4}
E.~D. Demaine, A.~L\'{o}pez-Ortiz, and J.~I. Munro, ``Frequency estimation of
  internet packet streams with limited space,'' in \emph{EATCS ESA}, 2002, pp.
  348--360.

\bibitem{MultiStageFilters}
C.~Estan and G.~Varghese, ``New directions in traffic measurement and
  accounting,'' \emph{ACM SIGCOMM}, vol.~32, no.~4, pp. 323--336, Aug. 2002.

\bibitem{CountSketch}
M.~Charikar, K.~Chen, and M.~Farach-Colton, ``Finding frequent items in data
  streams,'' in \emph{EATCS ICALP}, 2002, pp. 693--703.

\bibitem{counterTree}
M.~Chen and S.~Chen, ``Counter tree: {A} scalable counter architecture for
  per-flow traffic measurement,'' in \emph{IEEE ICNP}, 2015, pp. 111--122.

\bibitem{RandomizedCounterSharing}
T.~Li, S.~Chen, and Y.~Ling, ``Per-flow traffic measurement through randomized
  counter sharing,'' \emph{Networking, IEEE/ACM Trans. on}, 2012.

\bibitem{SpectralBloom}
S.~Cohen and Y.~Matias, ``Spectral bloom filters,'' in \emph{ACM SIGMOD}, 2003,
  pp. 241--252.

\bibitem{ML-CBF}
D.~Ficara, A.~D. Pietro, S.~Giordano, G.~Procissi, and F.~Vitucci, ``Enhancing
  counting bloom filters through huffman-coded multilayer structures.''
  \emph{IEEE/ACM Trans. Netw.}, vol.~18, pp. 1977--1987, 2010.

\bibitem{HeavyHitters}
R.~Ben-Basat, G.~Einziger, R.~Friedman, and Y.~Kassner, ``Heavy hitters in
  streams and sliding windows,'' in \emph{IEEE INFOCOM}, 2016.

\bibitem{TinyTable}
G.~Einziger and R.~Friedman, ``Counting with tinytable: Every bit counts!'' in
  \emph{ACM {ICDCN}}, 2016.

\bibitem{Brick}
N.~Hua, B.~Lin, J.~J. Xu, and H.~C. Zhao, ``Brick: A novel exact active
  statistics counter architecture,'' in \emph{ACM/IEEE ANCS}, 2008, pp. 89--98.

\bibitem{ApproximateCounting}
R.~Morris, ``Counting large numbers of events in small registers,''
  \emph{Commun. ACM}, vol.~21, no.~10, pp. 840--842, 1978.

\bibitem{SAC}
R.~Stanojevic, ``Small active counters,'' in \emph{IEEE INFOCOM}, 2007, pp.
  2153--2161.

\bibitem{Sample1}
B.-Y. Choi, J.~Park, and Z.-L. Zhang, ``Adaptive random sampling for load
  change detection,'' \emph{ACM SIGMETRICS}, pp. 272--273, 2002.

\bibitem{BetterNetflow}
C.~Estan, K.~Keys, D.~Moore, and G.~Varghese, ``Building a better netflow,'' in
  \emph{ACM SIGCOMM}, 2004.

\bibitem{caffein}
B.~Manes, ``Caffeine: A high performance caching library for java 8,''
  \emph{https://github.com/ben-manes/caffeine}.

\end{thebibliography}
}
\ifdefined\EXTENDED
\newpage\clearpage
\appendix
\vfill
{
	\let\section\subsection
	\let\subsection\subsubsection
\section{Theoretical Guarantees for Top-$k$ Identification Using \pss{}}\label{sec:topk\PSS{}}
To show the benefit of probabilistic admission, we describe a variant of \PSS{} (see Section~\ref{sec:\PSS{}}) that aims to minimize the number of counters needed for top-$k$ identification.
We consider a setting in which the stream elements are i.i.d, i.e., each element is sampled independently and according to the same distribution.
Also, we assume that elements are from a finite domain $\mathcal U=\set{1,2,\ldots, D}$, and without loss of generality, the frequencies of the elements are 
$$f_1\ge f_2\ge \ldots \ge f_k>f_{k+1}\ge f_{k+2}\ge\ldots\ge f_D.\footnote{If we do not wish to assume that $f_k$ is strictly larger than $f_{k+1}$, we can replace our demand to finding a set of $k$ items such that all of their frequencies are at least $f_k$. Such a model leads to similar results.}$$
For $r\in\{1,2,\dots,D\}$, we denote $F_r\triangleq\sum_{i=1}^r f_i$. 
This means that at each timestamp, element $i$ will arrive with probability $f_i$ and $\sum_{i=1}^{D}f_i = 1$.
We note that this i.i.d. setting may not be applicable to certain streams exhibiting high time locality, such as packets going through a home router, but may resemble the traffic patterns appearing on major backbone routers.
The goal of the top-$k$ problem is then to identify the set of the most frequent elements $\set{1,2,\ldots, k}$ with as few counters as possible.
Our assumption is that the stream may be arbitrarily long, but we wish to guarantee that with probability $1$ the algorithm will eventually identify all top-$k$ items.

Formally, we say that algorithm $A$ has successfully identified the top-$k$ elements at time $t$ if after the arrival of the $t$'th element, the $k$ largest counters are allocated for items $\set{1,2,\ldots, k}$. Since the actual items are random variables, we consider the probability, denoted $P_{m,k}^A(t)$, that algorithm $A$ will successfully identify the top-$k$ elements at time $t$ if allocated with $m$ counters.
Finally, our benchmark would be the minimal number of counters that $A$ requires to achieve 
$$\lim_{t\to\infty} P_{m,k}^A(t) = 1.$$
We say that an item is a \emph{tail item} if it is not amongst the top-$k$, i.e., is one of $\set{k+1,\ldots,D}$. 
We call the largest $k$ counters the \emph{main counters}, and the remaining ones \emph{tail counters}.
Our goal is then to ensure that after seeing infinitely many elements, the top-$k$ elements will be guaranteed to be allocated with the main counters.
For example, notice that $D$ counters are enough for Space Saving, regardless of the actual frequencies $\set{f_i}$. However, the interesting case is when $m\ll D$.

We start by analyzing the number of counters Space Saving requires for this task. We compare to Space Saving, because it achieves asymptotic improvement over previous work~\cite{SpaceSavings} and, to the best of our knowledge, it is considered the state of the art.
Notice that our analysis is somewhat different than the one presented in \cite{SpaceSavings}, as it further assumes that the stream is i.i.d., which reduces the number of counters required.
\subsection{Conditions for Successful Space Saving Top-$k$ Identification}
\begin{algorithm}[t]
	\scriptsize
	
	Initialization: $C=\emptyset$
	\begin{algorithmic}[1]
		\Function{add}{Item $x$}
		\If {$x\in C$}
		\State Increment $c_x$
		\Else\If {$|C| < M$}
		\State $c_x = 1$
		\State $C = C \cup \{x\}$
		\Else
		\State $m = \text{argmin}_{y\in C} c_y$ \label{line:minFinding}
		\State $C = C \setminus \{m\} \cup \{x\}$
		\State $c_x\gets c_m$
		\State Increment $c_x$ \label{line:weightUpdate}
		\EndIf\EndIf
		\EndFunction
	\end{algorithmic}
	\caption{Space-Saving}
	\label{alg:SpaceSaving}
\end{algorithm}
\normalsize
Assume that we allocate a Space Saving instance with $m$ counters, and would like to identity the top-$k$ elements.
In the algorithm, whose pseudo code appears in Algorithm~\ref{alg:SpaceSaving}, every arriving element is associated with a counter; if a counter was associated with the element prior to its arrival, the counter is incremented by 1; otherwise, the element ``takes over'' the minimal counter and increments it.
Consequently, we are guaranteed that the top-$k$ element will be associated with a counter upon arrival (unlike our algorithm described below).
The only problem for Space Saving arises when a top-$k$ element loses its main counter in favor of a ``tail item'', i.e., an item that is not within the top-$k$. Since $k$ is the least-frequent element within the top-$k$, it is enough to consider whether it is guaranteed to have a counter.
The key point of our analysis is observing that \emph{if} all top-$k$ are allocated with counters, they have a positive probability of forever being allocated with these counters if and only if the tail counters increase in a rate smaller than $f_k$. Observe that if the top-$k$ elements reside within the main counters, the expected increment rate of the \emph{sum} of tail counters is $\sum_{i=k+1}^{D}f_i = 1-F_k$. This means that the average increment rate for a tail counter is $\frac{1-F_k}{m-k}$, and therefore the increment rate for the minimum among all tail counter is at most $\frac{1-F_k}{m-k}$. On the other hand, the counter allocated for item $k$ increases with a rate of $f_k$. Next, we claim that having $f_k>\frac{1-F_k}{m-k}$ is a necessary and sufficient condition for Space Saving to identify the top-$k$. Illustration of the claim appears in Figure~\ref{fig:SSTopKCounters}.
\begin{figure}[H]
	\centering
	\includegraphics[width=\linewidth]{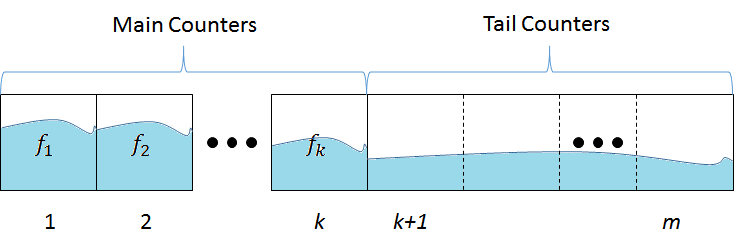}
	\caption{The counters allocated for the top-$k$ elements are increased with rates $f_1,\ldots,f_k$. The increase rate of the \emph{minimal} tail counter is at most $\frac{1-F_k}{m-k}$. }
	\label{fig:SSTopKCounters}
\end{figure}
\begin{theorem}
	Space Saving successfully identifies the top-$k$ using $m$ counters if and only if $f_k>\frac{1-F_k}{m-k}$.
\end{theorem}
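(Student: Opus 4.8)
The plan is to make the rate-comparison heuristic above rigorous by tracking all $m$ counters simultaneously. Write $N_i(t)$ for the number of arrivals of item $i$ among the first $t$ elements, $c_i(t)$ for the value of $i$'s counter at time $t$ (with $c_i(t)\triangleq 0$ when $i\notin C$), and $\mu_t\triangleq\min_{j\in C}c_j(t)$. I would rely on three structural facts. First, the \emph{counter-sum invariant} $\sum_{j\in C}c_j(t)=t$: if a set $S$ of items occupies $|S|$ counters, the remaining $m-|S|$ counters total $t-\sum_{i\in S}c_i(t)$, and in particular $\mu_t\le t/m$. Second, Space Saving's over-estimation property: whenever $i\in C$, $N_i(t)\le c_i(t)\le N_i(t)+\mu_t$, the over-estimate being frozen at the value of $\mu$ when $i$ last took over a counter. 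Third, $\mu_t$ is non-decreasing. Finally, the strong law of large numbers applied to each of the $D$ items gives a random time $T$, finite almost surely, after which $|N_i(t)/t-f_i|<\epsilon$ holds simultaneously for all $i$ and $t\ge T$; all arguments below would run on this probability-one event with $\epsilon$ fixed small relative to the gaps that appear.

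\textbf{Sufficiency ($\Leftarrow$).} Assuming $f_k>\frac{1-F_k}{m-k}$, I would first record the elementary implication that then $f_j>\frac{1-F_{j-1}}{m-j+1}$ for \emph{every} $j\le k$: this inequality is equivalent to $f_j(m-j)>1-F_j$, and if it holds for $j+1$ then $1-F_j=f_{j+1}+(1-F_{j+1})<f_{j+1}+f_{j+1}(m-j-1)=f_{j+1}(m-j)\le f_j(m-j)$, so downward induction from $j=k$ (the hypothesis) covers all $j$. Next comes a bootstrap over $j=1,\dots,k$: assuming $1,\dots,j-1$ are permanently monitored after some time, they occupy $j-1$ counters of total $\ge\sum_{i<j}N_i(t)\ge(F_{j-1}-k\epsilon)t$, so the counter-sum invariant yields $\mu_t\le\frac{(1-F_{j-1}+k\epsilon)t}{m-j+1}<(f_j-\epsilon)t\le N_j(t)\le c_j(t)$ for $t$ large; hence once item $j$ takes a counter (it arrives infinitely often and Space Saving always admits) its value strictly exceeds $\mu_t$ thereafter, so it is never the minimum and never evicted. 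Thus all of $1,\dots,k$ become permanently monitored, and it remains to check their counters are the $k$ largest: for a tail item $\ell>k$ holding a counter at time $t$, last taken over at $\tau$, over-estimation and the $\mu_\tau$ bound give $c_\ell(t)=c_\ell(\tau)+N_\ell(\tau,t)\le\mu_\tau+1+f_{k+1}(t-\tau)+O(\epsilon t)\le\max\!\big\{\tfrac{1-F_k}{m-k},\,f_{k+1}\big\}\,t+O(\epsilon t)<(f_k-\epsilon)t\le c_k(t)$, using $\frac{1-F_k}{m-k}<f_k$ and $f_{k+1}<f_k$. Hence from some point on the $k$ largest counters are exactly those of $1,\dots,k$, so $P^{SS}_{m,k}(t)\to1$.

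\textbf{Necessity ($\Rightarrow$).} I would prove the contrapositive: if $f_k\le\frac{1-F_k}{m-k}$ then $P^{SS}_{m,k}(t)\not\to1$. Let $\Phi_t$ be the sum of the $m-k$ \emph{smallest} counters, so $\Phi_t=t-L_t$ with $L_t$ the sum of the $k$ largest. Incrementing one counter changes $L_t$ by $0$ or $1$, and by $1$ only when the arriving item occupies a current top-$k$ counter; those $k$ counters carry arrival-rate at most $F_k$, so a martingale (Azuma) bound — modulo a routine amortization of transient ties at the boundary of the top-$k$ set — gives $L_t\le F_kt+o(t)$, i.e. $\Phi_t\ge(1-F_k)t-o(t)$ almost surely. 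Then, if the top-$k$ were correctly identified at a large $t$, items $1,\dots,k$ occupy the $k$ largest counters, so $\sum_{i\le k}c_i(t)=t-\Phi_t\le F_kt+o(t)$ while also $\sum_{i\le k}c_i(t)\ge\sum_{i\le k}N_i(t)=F_kt+o(t)$; hence every over-estimate is $o(t)$, in particular $c_k(t)=f_kt+o(t)$. But the largest tail counter is at least $\Phi_t/(m-k)\ge\frac{1-F_k}{m-k}t-o(t)$, which for $f_k<\frac{1-F_k}{m-k}$ strictly exceeds $c_k(t)$ for $t$ large — contradicting that $c_k(t)$ is among the $k$ largest. So the top-$k$ is misidentified at every large $t$, giving $P^{SS}_{m,k}(t)\to0$. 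The boundary case $f_k=\frac{1-F_k}{m-k}$ has no rate gap and I would handle it by repeating the comparison at the scale of fluctuations: $\Phi_t-(1-F_k)t$ is a zero-drift bounded-increment walk, so by the law of the iterated logarithm it exceeds $c\sqrt{t\log\log t}$ infinitely often, whence the largest tail counter overtakes $c_k(t)=f_kt+O(\sqrt{t\log\log t})$ infinitely often and $P^{SS}_{m,k}(t)\not\to1$.

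\textbf{Main obstacle.} The heuristic is a one-line rate comparison; the work is the uniform, simultaneous control of all $m$ counters. In the sufficiency direction the crux is the bootstrap — guaranteeing that a freshly grabbed counter (for item $j$, or for a tail item) cannot be evicted before it pulls ahead — which is exactly where the refined bound $\mu_t\le\frac{1-F_{j-1}+o(1)}{m-j+1}t$ and the algebraic implication $f_k>\frac{1-F_k}{m-k}\Rightarrow f_j>\frac{1-F_{j-1}}{m-j+1}$ for $j\le k$ are needed. In the necessity direction the crux is the amortized lower bound $\Phi_t\ge(1-F_k)t-o(t)$ (its only subtlety being the transient ties at the boundary of the top-$k$ set) together with the zero-gap boundary case, where a fluctuation / iterated-logarithm argument must replace the rate comparison.
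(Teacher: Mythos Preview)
Your approach is genuinely different from the paper's. The paper reduces both directions to a single one-dimensional biased random walk on the gap $c_k-c_{\min\text{ tail}}$: with $\lambda=f_k/\bigl(f_k+\tfrac{1-F_k}{m-k}\bigr)$ the walk drifts right, and the classical transience/recurrence dichotomy for simple biased walks yields the ``if and only if''. Your route --- counter-sum invariant, SLLN, and a bootstrap over $j=1,\dots,k$ --- avoids the random-walk machinery; the sufficiency half is correct and in fact more carefully quantified than the paper's informal ``rate'' language, and the downward-induction lemma ($f_k>\tfrac{1-F_k}{m-k}\Rightarrow f_j>\tfrac{1-F_{j-1}}{m-j+1}$ for every $j\le k$) is a nice ingredient the paper never isolates.

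The necessity half, however, has a concrete error. You assert that $L_t$ increases by $1$ ``only when the arriving item occupies a current top-$k$ counter'', and from this deduce $L_t\le F_kt+o(t)$, i.e.\ $\Phi_t\ge(1-F_k)t-o(t)$. The increment claim is false: $L_t$ also goes up by $1$ whenever a counter \emph{outside} the current top-$k$ is incremented and thereby enters it (exactly when it was tied with the $k$-th largest). This is not a boundary nuisance one can amortize away; the resulting inequality is simply wrong. Take $f_1=\cdots=f_k=\epsilon$ with $\epsilon$ barely above $1/D$ and the remaining $D-k$ items uniform. By symmetry all $m$ counters stay near $t/m$, so $L_t\approx kt/m$, whereas $F_kt\approx kt/D\ll kt/m$; hence $\Phi_t\approx(m-k)t/m\ll(1-F_k)t$, and your key comparison ``largest tail counter $\ge\Phi_t/(m-k)\ge\tfrac{1-F_k}{m-k}t$'' collapses. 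The same flaw undermines the LIL treatment of the boundary case, since $\Phi_t-(1-F_k)t$ is not a centred walk. A partial salvage: argue \emph{conditionally} --- if items $1,\dots,k$ occupy the main counters for all $s\in[T,t]$ then indeed $L_t-L_T=\sum_{i\le k}N_i(T,t)=F_k(t-T)+o(t)$ and your contradiction goes through --- but converting ``not permanently correct'' into $P^{SS}_{m,k}(t)\not\to1$ still needs an additional argument. (In fairness, the paper's own necessity direction is informal on the same point: ``rate at most $\tfrac{1-F_k}{m-k}$'' is an upper bound on the minimal tail counter, which is the wrong direction for establishing recurrence.)
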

\begin{proof}
	We start by noting that with probability $1$, if a top-$k$ element is not allocated with a main counter, it will get such a counter in the future. This happens because we assumed $f_k>f_{k+1}$, which means that after a tail element takes over a main counter, the counter is incremented with a frequency of at most $f_{k+1}$, while the counter allocated with the missing top-$k$ element (or the smallest counter if it is not currently allocated with one) increases at a rate of at least $f_k>f_{k+1}$.
	
	We are left with showing that $f_k>\frac{1-F_k}{m-k}$ is a characterization of the ability of the top-$k$ elements to seize the main counters without being evicted.
	As mentioned above, it is enough to argue that item $k$ will be eventually allocated within the main counters to ensure all top-$k$ items are successfully identified.
	Assume that $k$ is allocated with a counter with value $c_k$, and let $c_t$ be the value of the minimal tail counter. Notice that $c_k$ is increased with rate $f_k$ while $c_t$'s rate is at most $\frac{1-F_k}{m-k}$. 
	Consider the infinite Markov chain whose states represent the difference between $c_k$ and $c_t$, i.e., state $i$ represents the case of $c_k-c_t=i$. At any time an element arrives, it increments $c_k$ with probability $f_k$, increments $c_t$ with probability of at most $\frac{1-F_k}{m-k}$, and increments other counters otherwise.
	Therefore if we ignore other counters, the transition probabilities do not depend on the current state $i$ and can be expressed as $\forall i:\Pr[i+1\mid i] = \lambda \triangleq \frac{f_k}{\frac{1-F_k}{m-k}+f_k}$ and $\forall i:\Pr[i-1\mid i] = 1-\Pr[i+1\mid i]=\mu \triangleq \frac{\frac{1-F_k}{m-k}}{\frac{1-F_k}{m-k}+f_k}$.
	Notice that $$\Pr[i+1\mid i]>1/2 \iff f_k>\frac{1-F_k}{m-k}.$$ The stochastic process is illustrated in Figure~\ref{fig:MM1}.
	\begin{figure}[H]
		\centering
		\includegraphics[width=\linewidth]{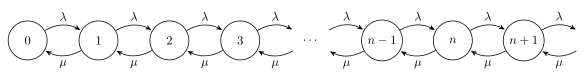}
		\caption{ The probability of moving to a larger index state is $\lambda$ and is not dependent on the process made so far or the current state index. }
		\label{fig:MM1}
	\end{figure}
	Since we know that top-$k$ elements will obtain a main counter infinitely often, the question of successful identification narrows to the question ``is there a positive probability that given a positive integer $n>1$, such that if $c_k\ge c_t+n$, the process will \emph{never} return to state $0$?'' (as then the main counter allocated for $k$ may become minimal and lead to $k$ being evicted). It is a known fact that a 1D random walk over the non-negative integers that goes left with probability $\mu$ and right with probability $\lambda$ will return to $0$ starting at state $n$ is $\parentheses{\frac{\mu}{\lambda}}^n$, which is strictly smaller than $1$ for $\lambda>\mu$.  As we are guaranteed to reach a positive difference infinitely often, when $\mu<1/2$ we will eventually guarantee that item $k$ will not be evicted if and only if \begin{align}
	f_k>\frac{1-F_k}{m-k}.\label{eq:SStopK}
	\end{align}
	This can further be expressed as a bound on the number of counters required by Space Saving:
	\begin{align}
	m > k + \frac{1-F_k}{f_k}.\label{eq:SSFinalCount}
	\end{align}
\end{proof}
\subsection{Conditions for Successful \PSS{}' Top-$k$ Identification}
In this subsection, we present a variant of the \pss{} algorithm (see Algorithm~\ref{alg:\PSS{}}), called \PSS{}', that aims to minimize the number of counters required for top-$k$ identification.
\PSS{}' takes advantage of the fact that we can ``slow'' the frequency in which the smallest counter is incremented for achieving better identification results.
Formally, \PSS{}' acts similarly to \PSS{}, except that upon arrival of a non-monitored element, the probability in which it will be allocated with the minimal counter is a constant $P$, and do not depend on the value of the minimal counter.\footnote{If there exists more than a single minimum-value counter, then the arriving element is admitted with probability $1$.}

Notice that these probabilistic allocations has both positive and negative effects on the possibility of successful identification. On the positive side, since the minimal counter is incremented slower than in Space Saving, we can relax the $f_k>\frac{1-F_k}{m}$ constraint a bit. Alas, \PSS{}' is not guaranteed that a top-$k$ element will obtain a counter infinitely often. This means that for \PSS{}' to successfully identify the top-$k$ we need to impose two constraints:
\begin{enumerate}
	\item A top-$k$ item which is not currently allocated with a counter will get one (w.p. $1$).\\
	Notice that if a top-$k$ element is not allocated with a counter (recall that now we have $m-1$ allocated counters), then the minimal, non-allocated counter increases with rate of at least $P\cdotpa{f_k + 1-F_m}$. In contrast, the slowest allocated counter rate cannot exceed $f_m$. Thus, the imposed constraint is
	\begin{align}
	f_m < P\cdotpa{f_k + 1-F_m}\label{eq:infOften}.
	\end{align}
	This gives us a lower bound on the value of $P$:
	\begin{align}
	P > \frac{f_m}{f_k + 1-F_m}\label{eq:PssAdmitProbOpt}.
	\end{align}
	For the sake of simplifying the calculations, we impose a stronger restriction on the admission probability and consider $P$ to satisfy
	\begin{align}
	P > \frac{f_m}{f_k}\label{eq:PssAdmitProb}.
	\end{align}
	\item When all top-$k$ elements are allocated with counters, there exists a positive probability that they will never be evicted from the table.
	This is the front in which \PSS{}' has advantage over Space Saving. Since the minimal counter is only incremented with sampling rate $P$, the rate in which the smallest counter within the tail increases is at most 
	$$\frac{\sum_{i=k+1}^{m}f_i + P\cdot \sum_{i=m+1}^{D}f_i}{m-k} = \frac{F_{m}-F_k + P\cdot (1-F_m)}{m-k}.$$
	Meanwhile, the counter associated with item $k$ is incremented in rate of $f_k$. This means that our constraint is:
	\begin{align}
	f_k > \frac{F_{m}-F_k + P\cdot (1-F_m)}{m-k}\label{eq:keepCounter}
	\end{align}
\end{enumerate}
Notice that for $P=1$, Inequality~\eqref{eq:PssAdmitProb} trivially holds (and thus Space Saving is guaranteed to get counters allocated for top-$k$ elements infinitely often), while Inequality~\ref{eq:keepCounter} degenerates into Inequality~\eqref{eq:SStopK}.
In the following subsection, we analyze how a smart choice of the increment probability $P$ reduces the required number of counters.

%
\subsection{Performance Comparison for Zipf Distributed Streams}
In many of the previous works, Zipf distributed streams served as a popular benchmark for algorithms comparison due to its nice mathematical properties.
Here, we continue this line and compare the performance of Space Saving and \PSS{}' on i.i.d Zipf distributed streams with varying skews.
We start with a formal definition of a Zipf stream.
\begin{definition}
	Denote $\Gamma_\alpha(D) = \sum_{i=1}^{D}i^{-\alpha}$.
	A stream will be called an i.i.d Zipf stream with skew $\alpha$ over domain $D$ if all of its elements are sampled independently and follow the distribution in which item $i\in\set{1,2,\ldots,D}$ appears with probability $f_i = \frac{i^{-\alpha}}{\Gamma_\alpha(D)}$.
\end{definition}
Traditionally, streams with skew $\alpha\in(0,1]$ are called ``mildly skewed'' or ``heavy tailed'', while larger skews grants the streams the title ``highly skewed''. Streams in which every item is selected with uniform probability (skew=0) are called ``uniform''.
These names differentiate highly skewed streams, in which a small number of elements consists most of the stream, and heavy tailed ones, where most of the arriving elements are tail items. This property is also observable from the behavior of $\Gamma_\alpha(D)$; for $\alpha>1$, $\Gamma_\alpha(D)$ convergence to a constant as $D$ grows (e.g., $\Gamma_2(\infty)\approx 1.645$); for $\alpha = 1$, $\Gamma_1(D)\approx\ln(1.78D)$; lastly, for $\alpha<1$, we have $\Gamma_\alpha(D)=\frac{D^{1-\alpha}}{1-\alpha} + O(1)$. 

For heavily skewed streams, Space Saving is known to be optimal~\cite{SpaceSavings}. For more mildly skewed streams, we show that \PSS{}' could asymptotically improve the number of counters required for identifying the top-$k$ elements. This also provides theoretical grounds to the poor empirical performance of Space Saving when evaluated on heavy tailed workloads in Section~\ref{sec:results}.

Assuming a Zipf distributed stream, 
the condition for Space Saving to converge, as appears in \eqref{eq:SSFinalCount}, then becomes:
\begin{align}
m &> k + \frac{1-F_k}{f_k} = k + \frac{1-\frac{\Gamma_\alpha(k)}{\Gamma_\alpha(D)}}{\frac{k^{-\alpha}}{\Gamma_\alpha(D)}}\notag\\
\implies m &> k + k^\alpha\parentheses{{\Gamma_\alpha(D)}-{\Gamma_\alpha(k)}}\label{eq:SSZipfReq}
\end{align}




For analyzing \PSS{}'s performance, we will select the value of $P$ based on the skewness of the stream, as discussed below.
When plugging the Zipf distribution into the first \PSS{}' constraint (see \eqref{eq:PssAdmitProb}),  we get
\begin{align}
P > \frac{f_m}{f_k} = \parentheses{\frac{m}{k}}^{-\alpha}\notag\\
\implies m > k\cdot P^{-\frac{1}{\alpha}}.\label{eq:\PSS{}1}
\end{align}
Similarly, the second constraint (see \eqref{eq:keepCounter}) is now:
\begin{align}
f_k 	>& \frac{F_{m}-F_k + P\cdot (1-F_m)}{m-k}\notag\\
\iff m 	>& k + k^{\alpha}\bigg(\Gamma_\alpha(m)-\Gamma_\alpha(k)\notag\\
&+ P\cdot (\Gamma_\alpha(D)-\Gamma_\alpha(m))\bigg)\notag
\end{align}
In order to simplify the right hand side of the inequality, we impose a stronger bound on $m$ and require is to satisfy:
\begin{align}
m 	&> k + k^{\alpha}\cdotpa{\Gamma_\alpha(m)-\Gamma_\alpha(k) + P\cdot \Gamma_\alpha(D)}\label{eq:\PSS{}2}
\end{align}
\subsubsection{Heavy Tailed Streams}
In this section, we assume that $\alpha\in(0,1)$
is fixed and that $k= o(D^{\frac{\alpha}{1+\alpha}})$
and analyze the number of counters required for Space Saving and \PSS{}' for successfully identifying the top-$k$ items.\footnote{In practice, values of $k$ are typically very small and may be considered sub-polynomial in $D$.}
We start by using the explicit formula of $\Gamma_\alpha(\cdot)$ for \eqref{eq:SSZipfReq}:
\begin{align*}
m_{SS} &= k + k^\alpha\parentheses{{\Gamma_\alpha(D)}-{\Gamma_\alpha(k)}}\\
&= k+k^\alpha\frac{D^{1-\alpha} - k^{1-\alpha}+\Theta(1)}{1-\alpha} \\
&= 
\frac{k^\alpha\cdot D^{1-\alpha} - \alpha\cdot k+\Theta(k^{\alpha})}{1-\alpha} = \Theta( D^{1-\alpha})
\end{align*}
Thus, we established that the number of counters required for Space Saving is $m_{SS}=\Omega( D^{1-\alpha})$.

For \PSS{}', we choose the admission probability to be 
\begin{align}
P\triangleq D^{\frac{\alpha^2-\alpha}{1+\alpha}}\label{eq:p\PSS{}}.
\end{align} 
Notice that 
$P\in(0,1)$ is a valid probability.
Next, we will show that using 
\begin{align}
m_{\PSS{}'}\triangleq 
c\cdot k\cdot D^{\frac{1-\alpha}{1+\alpha}} \label{eq:m\PSS{}}
\end{align}
counters, where $c$ is a (large enough) constant, we can satisfy both \eqref{eq:\PSS{}1} and \eqref{eq:\PSS{}2}, thus $m_{\PSS{}'}$ counters are enough for successful identification of the top-$k$ elements.

Constraint~\eqref{eq:\PSS{}1} requires that $m > k\cdot P^{-\frac{1}{\alpha}}$. Plugging in \eqref{eq:p\PSS{}} and \eqref{eq:m\PSS{}}, we get:
$$ m_{\PSS{}'} = 
ck^\cdot D^{\frac{1-\alpha}{1+\alpha}}  
> k\cdot P^{-\frac{1}{\alpha}}
,$$
as required.
Next, we show an inequality that will be useful later:
\begin{align}
(m_{\PSS{}'})^{1-\alpha} &= 
(ck\cdot D^{\frac{1-\alpha}{1+\alpha}})^{1-\alpha} \notag\\
&=(D^{\frac{1}{1+\alpha}}\cdot \frac{ck}{D^{\frac{\alpha}{1+\alpha}}})^{1-\alpha}
< D^{\frac{1-\alpha}{1+\alpha}},
\end{align}
where the last inequality holds for large enough $c$.

Finally, we show that our choice of $P$ and $m_{\PSS{}'}$ also satisfies~\eqref{eq:keepCounter}:
\begin{align}
&k + k^{\alpha}\cdotpa{\Gamma_\alpha(m)-\Gamma_\alpha(k) + P\cdot \Gamma_\alpha(D)}\notag\\
&= k + \frac{k^{\alpha}\cdotpa{m^{1-\alpha}-k^{1-\alpha} + D^{\frac{\alpha^2-\alpha}{1+\alpha}}\cdot D^{1-\alpha}+\Theta(1)}}{1-\alpha}\notag\\
&=  \frac{k^{\alpha}\cdotpa{m^{1-\alpha} + D^{\frac{1-\alpha}{1+\alpha}}}-\alpha k + \Theta(k^{\alpha})}{1-\alpha}\notag\\
&< \frac{2k^{\alpha}\cdot{D^{\frac{1-\alpha}{1+\alpha}}}-\alpha k + \Theta(k^{\alpha})}{1-\alpha}\notag \\
&= m_{\PSS{}'} \cdot  \frac{2}{c(1-\alpha)} +O(k) < m_{\PSS{}'}
\end{align}
Since we have shown that our selection of admission probability and number of counters satisfies both constraints, we have proved that \PSS{}' requires only $O(D^{\frac{1-\alpha}{1+\alpha}})$ counters to successfully identify the top-$k$ hitters on a Zipf stream with skew $\alpha$.

We conclude that for the problem of identifying top-$k$ over i.i.d. heavy tailed streams, Space Saving requires $\Theta(D^{1-\alpha})$ while \PSS{}' requires $\Theta(D^{\frac{1-\alpha}{1+\alpha}})$. Notice that for values of $\alpha$ that are close to $1$, this is nearly a quadratic space reduction.
For example, consider trying to find the top-$32$ flows on a backbone router whose traffic is approximately Zipf0.8 with domain of $D=2^{64}$ elements. Space Saving requires about $570K$ counters; in contrast, \PSS{}' could allocate roughly $44K$ counters to achieve the same. The admission probability for these input parameters is slightly less than $2\%$.

\subsubsection{Skew=1 Streams}
Heavy tailed streams are usually not analyzed in the literature, perhaps because the existing algorithm cannot find the top-$k$ elements in these using a reasonable amount of counters (see Section~\ref{sec:results}).
However, skew $1$ Zipf streams were analyzed in some previous works for both Space Saving and Count Sketch~\cite{CountSketch,SpaceSavings}.
In this section, we show that by introducing an admission probability, \PSS{}' is able to achieve asymptotic space improvement on i.i.d. Zipf streams.
The convergence condition for Space Saving, which appears in \eqref{eq:SSZipfReq} is now:
\begin{align}
m &> k + k\parentheses{{\Gamma_1(D)}-{\Gamma_1(k)}}\notag\\
&\approx k + k\ln\parentheses{1.78\frac{D}{k}} = \Theta (\log D).
\end{align}

For \PSS{}', we choose the admission probability to be 
\begin{align}
P \triangleq 
\sqrt{\frac{1}{\ln D}}.
\end{align}
We show that using probabilistic admission, we reduce the number of required counters to $m_{\PSS{}'}\triangleq c\cdot 
k\sqrt{
	\ln D} = O(\sqrt{\log D})$, where $c$ is a positive constant.
We start by showing that this choice of $m_{\PSS{}'}$ and $P$ satisfies~\eqref{eq:\PSS{}1}:
\begin{align}
m_{\PSS{}'} = c\cdot k\sqrt{\ln D} > k\cdot P^{-1}
\end{align}
Next, we consider \eqref{eq:\PSS{}2}:
\begin{align*}
&k + k\cdotpa{\Gamma_1(m)-\Gamma_1(k) + P\cdot \Gamma_1(D)}\\
\approx& k + k\cdotpa{\ln\parentheses{1.78\frac{m}{k}} + \sqrt{\frac{1}{\ln D}}\ln\parentheses{1.78D}}\\
=& k + k\cdotpa{\ln\parentheses{c\cdot \sqrt{\ln D}} + \sqrt{\ln D}+O(1)} < m_{\PSS{}'}, 
\end{align*}
where the last inequality holds for large enough $c$.

We conclude that by introducing a probabilistic admission filter, one can reduce the number of counters required for successful top-$k$ identification over skew=1 Zipf streams from $O(\log D)$ to $O(\sqrt{\log D})$.
\subsection{Discussion and Future Work}
In this section, we have shown how the concept of admission filters, introduced in Section~\ref{sec:\PSS{}} can be adapted for reducing the number of counters needed for top-$k$ identification. Nevertheless, our analysis has a few drawbacks; first, we are only able to provide theoretical analysis for i.i.d. streams, which may not truly represent the all practical settings; second, our analysis assumes that the stream may be arbitrarily long, and only considers \emph{eventual convergence}. While this may fit massive data streams, in cases where we wish to process smaller streams, perhaps because we reset the process every once in a while to provide freshness, our analysis does not hold. 
Lastly, we have assumed a prior knowledge of the data skew. In practice, one can estimate the skew from the data, but this will require the admission probability to be adaptive.

We plan to evaluate \PSS{}' on real data traces and compare it with existing techniques, and specifically with \PSS{} (see Algorithm~\ref{alg:\PSS{}}) whose admission probability was optimized for frequency estimation but proved effective also for top-$k$. We also wish to find a method for dynamically adapting the admission probability without assuming knowledge about the stream length or skew.
}
{
	
	\let\section\subsection
	\let\subsection\subsubsection
\begin{figure*}[]
	\begin{tabular}{ccc}
		\subfloat[On-Arrival MSE]{\label{fig:CaidaMSE}\includegraphics[width = \matrixCellWidth]
			{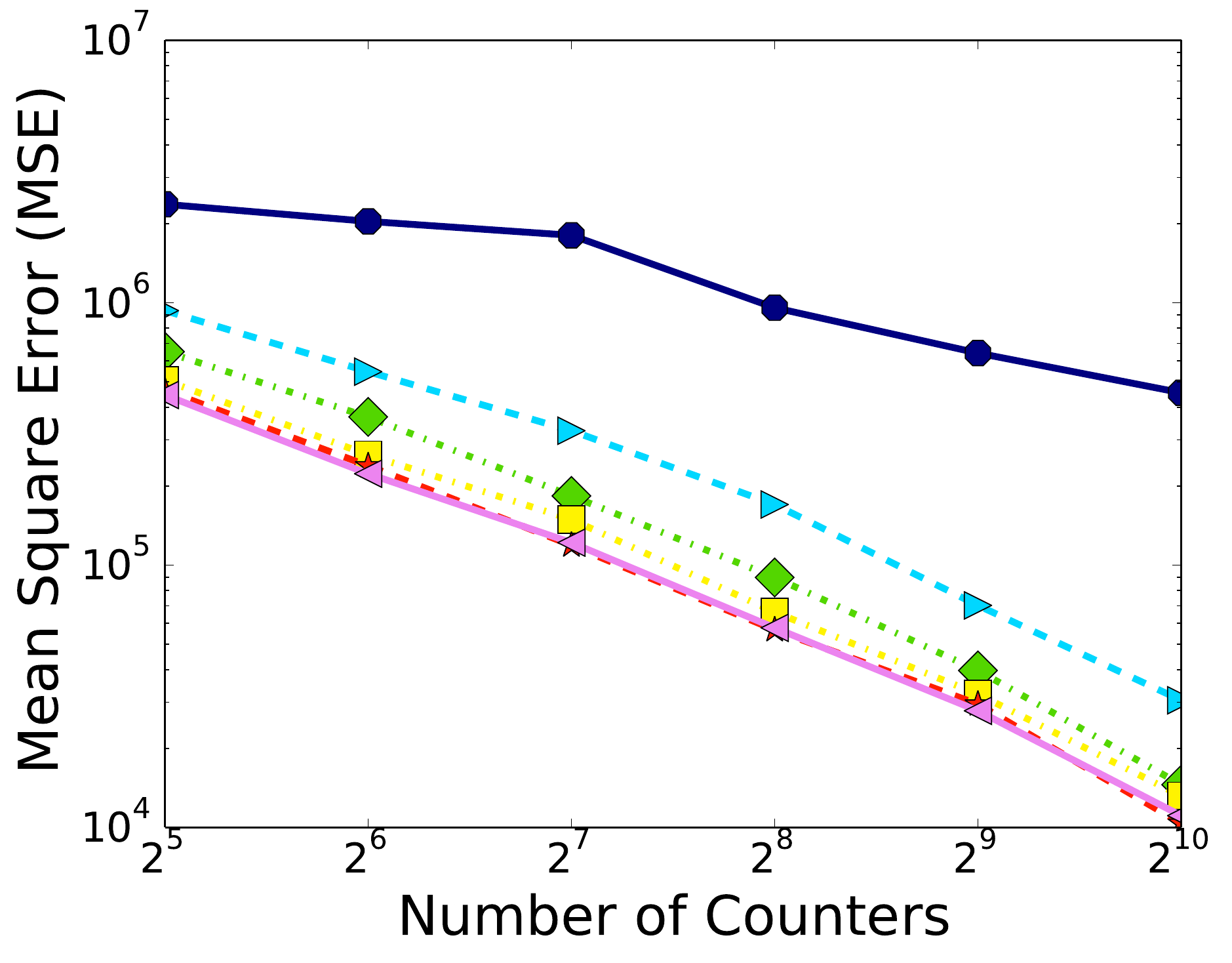}} &
		\subfloat[Top-$32$ recall]{\label{fig:Zipf0.6MSE}\includegraphics[width = \matrixCellWidth]
			{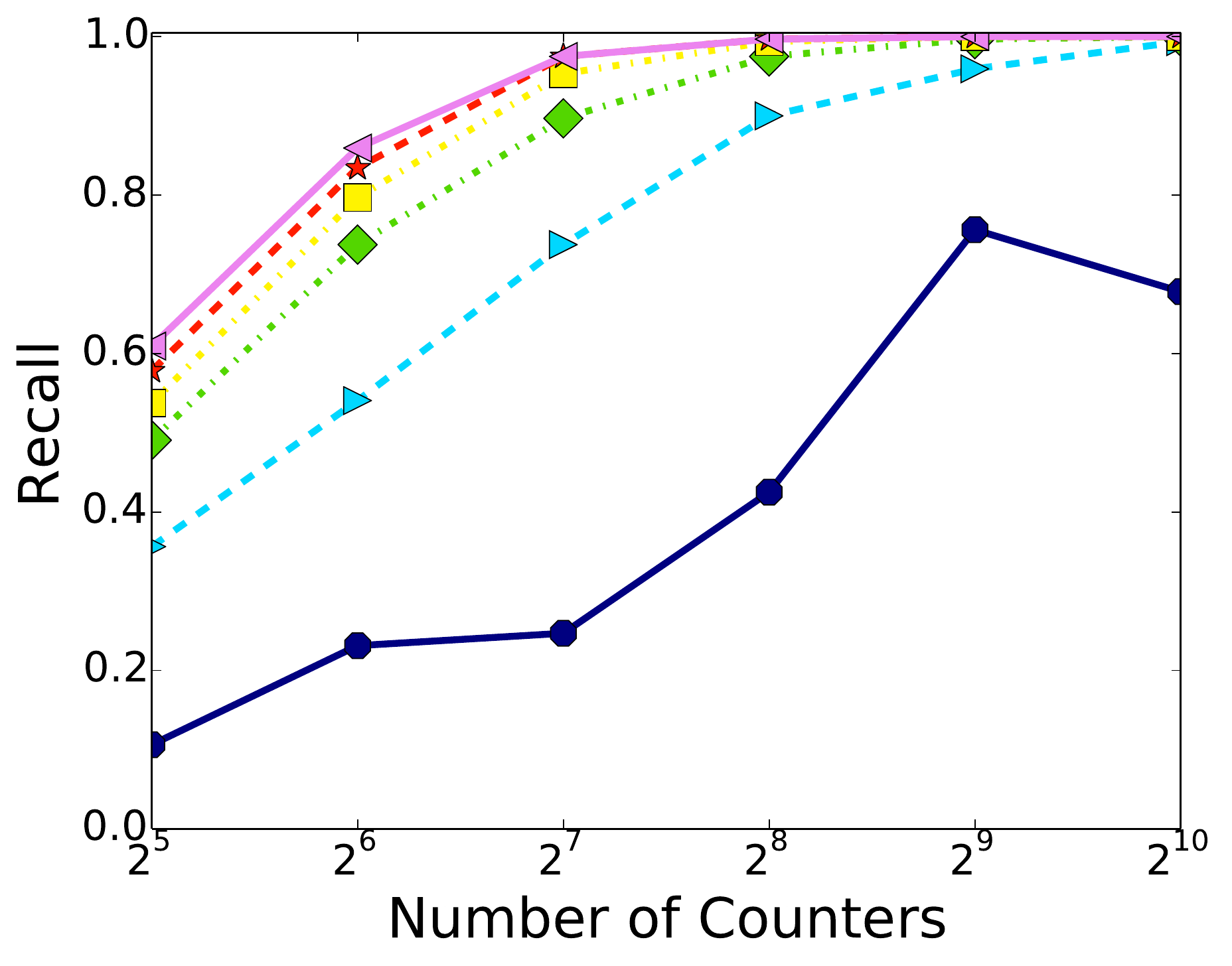}} &		
		\subfloat[Precision-Recall Curves]{\label{fig:YouTubeMSE}\includegraphics[width = \matrixCellWidth]
			{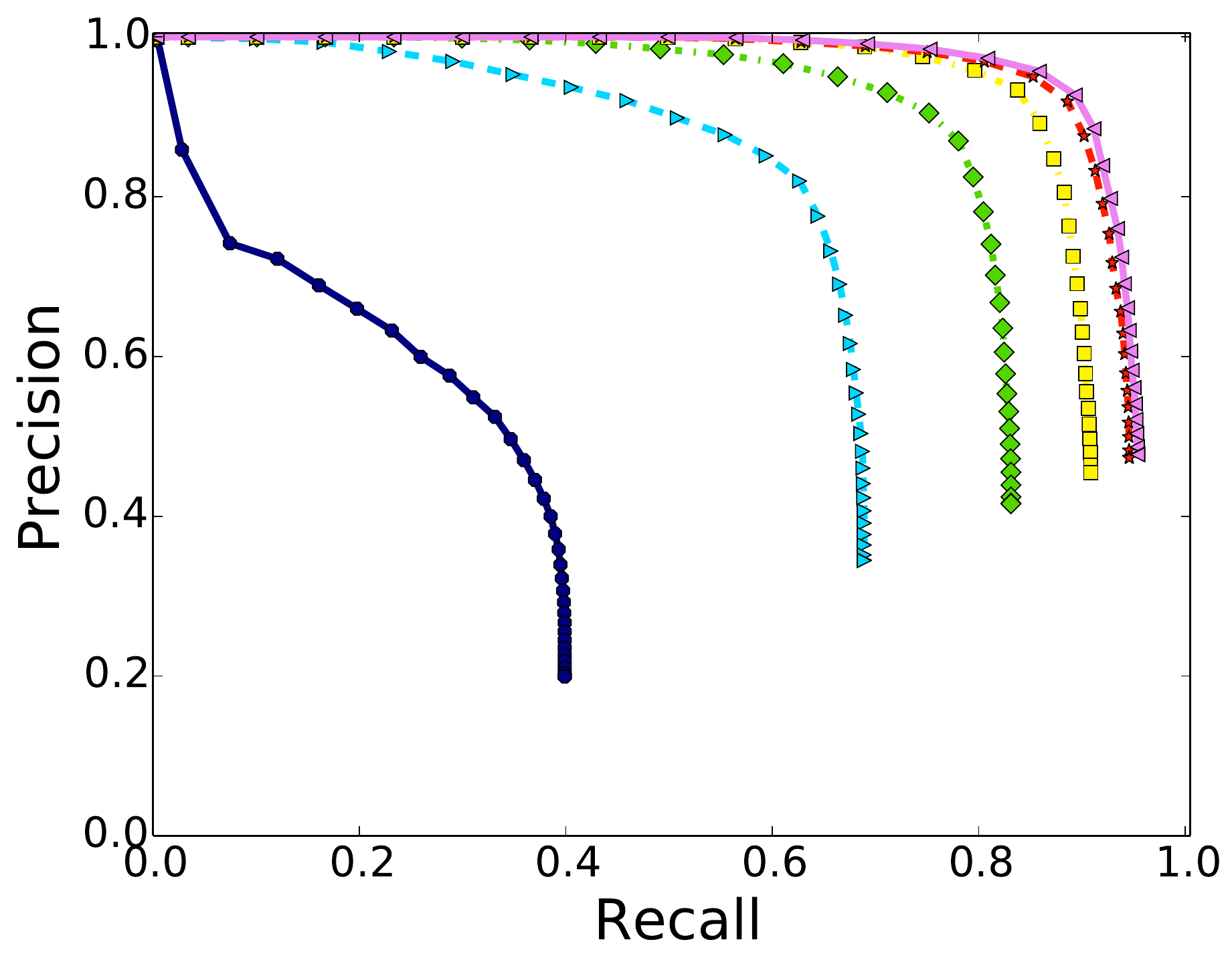}}\\
		\subfloat[Recall vs. Stream Size]{\label{fig:UCLAMSE}\includegraphics[width = \matrixCellWidth]
			{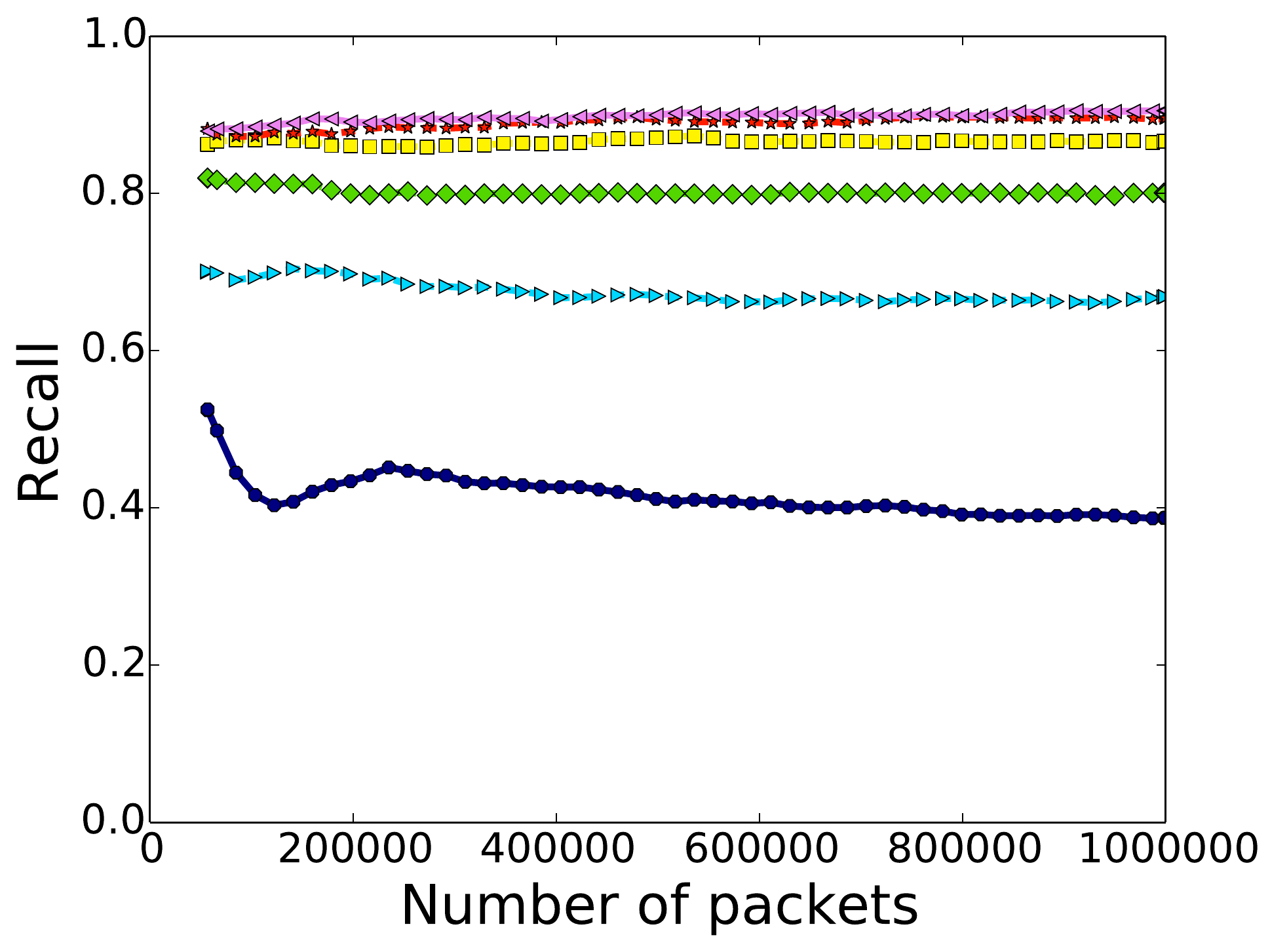}} &	
		\subfloat[Legend]{\includegraphics[width=4.7cm,height=4.2cm]
			{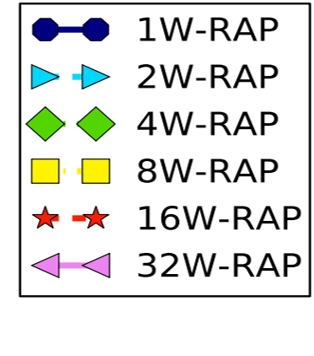}} 	&	
	\end{tabular}
	\caption{\label{fig:assoc}Comparison of the performance of $d$-Way \PSS{} for different associativity levels.}
\end{figure*}
\section{Limited Associativity Impact}\label{apx:assoc}
In this appendix, we compare the performance of $d$-Way \PSS{} for different values of $d$. 
We evaluate the associativity levels effect over several metrics which are presented in detail in Section~\ref{sec:eval-on-arrival} and Section~\ref{sec:eval-top-k}.
These include the following:
\begin{enumerate}
	\item On-Arrival Mean Square Error, in which every arriving element is queried and we compute the average square error.
	\item The percentage (recall) of elements within the top-$32$ successfully using various space allocations.
	\item The recall for identifying the top-$512$ elements using $1024$ counters, compared with the number of observed packets. 
	\item The precision-recall curve for identifying the top-$512$ elements using $1024$ counters. 
\end{enumerate}
Figure~\ref{fig:assoc} shows the performance of the different associativity levels, averaging over 10 batches of 1M packets each from the CAIDA~\cite{CAIDA} dataset. The results show a diminishing return pattern as associativity is increased; while $1$W-\PSS{} performs rather poorly, $2$W-\PSS{} is already comparable with the previous algorithms, $4$W and $8$W offer increased accuracy while $16$W-\PSS{} works almost as good as the $32W$. Further, our evaluation in Section~\ref{sec:eval-on-arrival} and Section~\ref{sec:eval-top-k} shows that $16W$-\PSS{} is roughly comparable to the fully associative \PSS{}. Our experiments suggest that associativity of $16$ counters per set is a highly attractive alternative to complete associativity as it does not require any sophisticated data structures (as suggested in~\cite{HeavyHitters,SpaceSavings,BatchDecrement}). By not using data structures, we get both simpler implementation, as well as reduction in the memory overhead they require, which can be used for allocating the algorithms with additional counters for increased accuracy.

}

\fi
\end{document}